\documentclass[11pt,letterpaper]{article}
\usepackage[margin=1in]{geometry}
\usepackage{graphicx}
\usepackage{multirow}
\usepackage{xcolor,xspace}
\usepackage{float}

\usepackage{subfigure}
\usepackage{tabularx}
\usepackage{mdframed}
\usepackage{graphicx}
\usepackage{float}
\usepackage{subfigure}
\usepackage{comment}
\usepackage{subfigure}

\usepackage[sort]{cite}
\usepackage{hyperref}

\usepackage{colortbl}

\usepackage{paralist}
\usepackage{enumitem}
\usepackage{booktabs}
\usepackage[override]{cmtt}
\usepackage{color,xcolor}
\usepackage{graphicx,color}

\usepackage{tikz}
\usetikzlibrary{arrows.meta,positioning,calc}

\usepackage{boxedminipage}
\usepackage{url}
\usepackage{graphicx}
\usepackage{color}
\usepackage{xspace}
\usepackage{amsmath,amsthm,amstext,amssymb,amsfonts,latexsym}
\usepackage[capitalize,noabbrev]{cleveref}

\usepackage{algorithm}
\usepackage{algorithmic}

\usepackage{tcolorbox}

\definecolor{darkred}{rgb}{0.5, 0, 0}
\definecolor{darkgreen}{rgb}{0, 0.5, 0}
\definecolor{darkblue}{rgb}{0,0,0.5}

\usepackage{listings}
\usepackage{xcolor}

\lstdefinestyle{sqlstyle}{
  language=SQL,
  basicstyle=\ttfamily\small,
  keywordstyle=\color{blue},
  commentstyle=\color{gray},
  stringstyle=\color{black},
  numbers=left,
  numberstyle=\tiny\color{gray},
  stepnumber=1,
  numbersep=10pt,
  backgroundcolor=\color{white},
  showspaces=false,
  showstringspaces=false,
  showtabs=false,
  frame=single,
  tabsize=2,
  captionpos=b,
  breaklines=true,
  breakatwhitespace=false
}

\makeatletter
\newcommand{\algline}[1]{%
  \edef\@currentlabel{\arabic{ALC@line}}%
  \label{#1}%
}
\makeatother

\newcounter{note}[section]
\newcommand{\hubert}[1]{\refstepcounter{note}$\ll${\sf Hubert's
        Comment~\thenote:} {\sf \textcolor{blue}{#1}}$\gg$\marginpar{\tiny\bf HC~\thenote}}

\newcommand{\yuetodo}[1]{{\large\color{green}[Yue todo: #1]}}

\definecolor{yue}{rgb}{0.7, 0, 0}

\renewcommand{\yuetodo}[1]{}

\newcommand{\mcal}[1]{\ensuremath{\mathcal {#1}}}

\newcommand{\algP}{\ensuremath{\mcal{P}}\xspace}

\newcommand{\algA}{{\ensuremath{\mcal{A}}}\xspace}

\definecolor{darkgreen}{rgb}{0,0.5,0}
\definecolor{lightblue}{RGB}{0,176,240}
\definecolor{darkblue}{RGB}{0,112,192}
\definecolor{lightpurple}{RGB}{124, 66, 168}
\definecolor{grey}{RGB}{139, 137, 137}
\definecolor{maroon}{RGB}{178, 34, 34}
\definecolor{green}{RGB}{34, 139, 34}
\definecolor{types}{RGB}{72, 61, 139}
\definecolor{gold}{rgb}{0.8, 0.33, 0.0}

\definecolor{darkgray}{gray}{0.3}

\definecolor{darkred}{rgb}{0.5, 0, 0}
\definecolor{darkgreen}{rgb}{0, 0.5, 0}
\definecolor{darkblue}{rgb}{0,0,0.5}

\newcommand\markx[2]{}

\newcommand{\R}{\mathbb{R}}

\newcommand{\N}{\mathbb{N}}

\newcommand{\Z}{\mathbb{Z}}

\newcommand{\ignore}[1]{}

\newcounter{task}

\newtheorem{theorem}{Theorem}[section]

\newtheorem{corollary}[theorem]{Corollary}
\newtheorem{fact}[theorem]{Fact}
\newtheorem{lemma}[theorem]{Lemma}
\newtheorem{proposition}[theorem]{Proposition}
{
\theoremstyle{definition}
\newtheorem{definition}[theorem]{Definition}
\newtheorem{remark}[theorem]{Remark}
\newtheorem{assumption}[theorem]{Assumption}
}

\newcounter{cnt:challenge}

\newcommand{\renyi}{R\'{e}nyi\xspace}
\newcommand{\D}{\mathsf{D}}

\newcommand{\NA}{{\textsc{NA}}}

\newcommand{\supp}{\ensuremath{\mathsf{supp}}}

\newcommand{\f}{\mathfrak{f}}

\newcommand{\algM}{\ensuremath{\mcal{M}}\xspace}
\newcommand{\algN}{\ensuremath{\mcal{N}}\xspace}
\newcommand{\halt}{\mathsf{halt}}

\newcommand{\Pow}{\mathsf{Pow}}
\newcommand{\view}{\mathsf{view}}
\newcommand{\pair}{\mathsf{pair}}
\newcommand{\npp}{\mathsf{npp}}

\newcommand{\hpresum}{\mathsf{HybridStreamSum}}

\newcommand{\presum}{\mathsf{StreamSum}}
\newcommand{\buf}{\mathsf{Buf}}

\newcommand{\bin}{\mathsf{Bin}}
\newcommand{\ind}{\mathsf{Index}}
\newcommand{\randbin}{\ensuremath{\mathsf{RandBin}}\xspace}

\newcommand{\DP}{\mathsf{DP}}

\newcommand{\simB}{\sim_{\mathsf{B}}}
\newcommand{\simE}{\sim_{\mathsf{E}}}

\newcommand{\PStreamSum}{\ensuremath{\mathsf{PrivStreamSum}}\xspace}

\title{Continual Learning With Participation Privacy: An Auditable Buffering-Aggregation Recipe}

\author{T-H. Hubert Chan$^{1}$ \qquad Elaine Shi$^{2}$ \qquad Mengshi Zhao$^{1}$ \qquad Mingxun Zhou$^{3}$ \\[0.6em] \small $^{1}$The University of Hong Kong, Hong Kong, China \\ \small $^{2}$Carnegie Mellon University, Pittsburgh, USA \\ \small $^{3}$The Hong Kong University of Science and Technology, Hong Kong, China }

\date{}

\begin{document}
\begin{titlepage}
\maketitle

\begin{abstract}

Modern federated and streaming learning systems often release intermediate
models, so privacy must hold for the full trajectory under adaptive
interaction. Motivated by participation privacy, we study single-edit
neighboring user streams, where one insertion/deletion shifts all subsequent
updates and defeats standard Hamming-neighbor continual-release analyses. We
give an auditable modular recipe. A randomized buffering wrapper emits bins of
size $[U,2U]$, reducing single-edit streams to a Hamming-style per-bin update
stream with explicit backlog/delay guarantees, where $U$ is calibrated by the
privacy parameters $(\varepsilon,\delta)$. We then prove a certification
theorem identifying when a non-adaptive Hamming-neighbor DP proof for a
continual primitive lifts to adaptive inputs: the primitive must use fresh
per-round randomness and have a stable one-round privacy profile under common
adaptive context. Together, these ingredients yield trajectory-level
$(\varepsilon,\delta)$-DP for single-edit streams using standard primitives
(e.g., tree prefix sums), with an explicit privacy--latency link via $U$.

\end{abstract}

\thispagestyle{empty}

\end{titlepage}

\pagestyle{plain}

\section{Introduction}
\label{sec:intro}

Modern federated and streaming learning systems release intermediate model snapshots throughout training, so an adversary may observe the entire trajectory $(w^{(t)})_{t\ge 0}$ rather than only a final model.
A canonical instance is streaming stochastic gradient descent (SGD): at step~$t$, a user contributes a private loss function $f_t$ and the learner updates using a gradient step based on the current model~\cite{DBLP:conf/nips/ThakurtaS13,DBLP:conf/icml/KairouzM00TX21}.
Beyond protecting the data within each $f_t$, some deployments also require \emph{participation privacy}: a user may wish to ensure \emph{plausible deniability} of whether they participated at all.
This motivates \emph{single-edit} (edit-distance) neighboring streams, where two user-update streams are neighbors if one can be obtained from the other by inserting or deleting a single user event.

Continual release also makes the learning process intrinsically interactive: the released model $w^{(t)}$ influences what happens next, so the update stream can be adaptively generated in response to prior outputs.
While classical continual-release primitives such as tree-based private prefix sums~\cite{DBLP:conf/stoc/DworkNPR10,DBLP:journals/tissec/ChanSS11} were originally analyzed under statically chosen (Hamming-neighbor) streams, recent work formalizes differential privacy against such adaptive interaction~\cite{DBLP:conf/icml/0004RSS23}.
Our goal is to make this adaptive viewpoint compatible with the \emph{single-edit} participation model above.

The difficulty is that Hamming-style neighboring streams are not the right abstraction for participation privacy on a stream.
Most continual DP analyses change the value at a single time index $t_0$ while keeping all other positions aligned~\cite{DBLP:conf/icalp/Dwork06}.
In contrast, a single insertion/deletion shifts the alignment of all subsequent positions, so an edit-neighbor pair can be far from Hamming-neighboring.
This also undermines deterministic batching: partitioning the stream into fixed contiguous blocks can cause one edit to shift many later batch boundaries, so standard Hamming-neighbor continual DP guarantees do not directly transfer.  

To address this mismatch, we use a simple modular recipe: randomized buffering and an adaptive-safety certification for downstream continual primitives.
We apply a \emph{randomized buffering} wrapper~\cite{DBLP:journals/jacm/ChanCMS22,DBLP:conf/eurocrypt/ZhouSCM23} that introduces random delay and releases user updates in \emph{bins} of controlled size (e.g., in $[U,2U]$); here $U$ is a \emph{privacy-implied systems cost}, calibrated by the target $(\varepsilon,\delta)$ and inducing delay.
We then invoke standard continual DP primitives (e.g., tree/prefix-sum) to
privately aggregate the resulting vector updates, and prove a
\emph{certification theorem} showing when a non-adaptive Hamming-neighbor
privacy proof remains valid under feedback.
A single edit can alter at most two adjacent emitted bins; adaptive group
privacy therefore extends the certified distance-one Hamming guarantee to the
resulting downstream prefix-sum inputs.  
The certification requires fresh
per-round randomness together with a stable-context condition: common adaptive
updates may change the released statistic, but they must not change the
privacy cost of the unique Hamming discrepancy.\footnote{The pre-conference version~\cite{DBLP:conf/icml/ChanSZZ26} stated the certification condition using independent decomposability alone.
However, we use the slightly stronger stable-context condition, formalized in
Appendix~\ref{sec:ind_decomp} as quantitative decomposability.  This
refinement makes the statement technically precise and is satisfied by the
certified cores of the continual-release primitives used in our pipeline.}

In the adaptive privacy game, the two candidate functions or bins are
converted to updates using the same currently observed model within each
branch; the two realized update streams need not remain Hamming-neighboring
after their model trajectories diverge.

\noindent\textbf{Contributions.}
\begin{itemize}\setlength{\itemsep}{0.15em}\setlength{\topsep}{0.15em}
  \item \textbf{Edit-neighbor continual learning under feedback.}
  We formalize participation privacy via single-edit neighboring user-update streams in a continual learning setting with intermediate releases.

  \item \textbf{Modular pipeline with an explicit privacy to latency link.}
  We combine randomized buffering with standard continual DP primitives to obtain trajectory-level privacy for edit-neighbor streams; the target $(\varepsilon,\delta)$ determines the buffering level $U$, making induced delay explicit.

\item \textbf{Certification theorem for adaptive safety of continual DP primitives.}
We give a checkable certification condition under which a non-adaptive
Hamming-neighbor privacy analysis carries over to the feedback/adaptive
setting.  The condition combines fresh per-round randomness with a
stable-context requirement, and certifies safe reuse of canonical primitives
such as tree/prefix-sum mechanisms.

\end{itemize}

\noindent\textbf{Results and organization.}
Our end-to-end guarantee is trajectory-level $(\varepsilon,\delta)$-DP for \emph{single-edit} neighboring participation streams under feedback, obtained by composing randomized buffering with a continual-release DP aggregator (Theorem~\ref{thm:main_pipeline_overview}); the privacy target explicitly induces a buffering level $U(\varepsilon,\delta)$ and hence delay/staleness.

A second ingredient is a reusable \emph{certification} theorem showing that,
for continual primitives satisfying our fresh-randomness and stable-context
conditions, a standard non-adaptive privacy analysis lifts to the
adaptive/feedback setting (Theorem~\ref{thm:cert}).
Section~\ref{sec:setting} formalizes the interactive model, edit adjacency, and metrics; Section~\ref{sec:overview} gives a high-level view of the pipeline and states the main theorem; Section~\ref{sec:ingredients} develops the two auditable ingredients (randomized buffering and certification) and combines them via modular composition; Section~\ref{sec:instantiation} instantiates the recipe with continual DP-SGD (tree/prefix-sum aggregation).
Technical proofs and additional instantiations are deferred to the appendix.
Experiments appear in~\cite{DBLP:conf/icml/ChanSZZ26}.

\subsection{Related Work}

\textbf{Continual-Release DP.}
Canonical continual-release primitives for streaming aggregates include tree-based private prefix sums~\cite{DBLP:conf/stoc/DworkNPR10,DBLP:journals/tissec/ChanSS11} and closely related lower-triangular (matrix-style) mechanisms~\cite{DBLP:conf/nips/DenisovMRST22}.
These mechanisms are typically analyzed for \emph{Hamming-neighboring} update streams, where only one time index differs.
In contrast, our participation model uses \emph{single-edit} (insertion/deletion) neighboring streams, where alignment shifts after the edit; this gap is mild for static datasets~\cite{DBLP:conf/nips/BirrellEBP24} but is fundamental for streams.

Sliding-window variants of private stream release have also been studied for
window aggregate queries and graph streams~\cite{DBLP:conf/edbt/CaoXGLBT13,DBLP:conf/aistats/UpadhyayUA21}.
These works target recent-window utility, whereas our setting focuses on
single-edit participation privacy, where one insertion/deletion shifts all
subsequent stream positions and requires an edit-to-Hamming shielding interface.

\textbf{Adaptive Streams and Feedback.}
Continual learning is inherently interactive: released outputs can influence future updates.
Recent work formalizes differential privacy against such adaptive interaction (e.g., left-or-right style games and verification viewpoints)~\cite{DBLP:conf/nips/DenisovMRST22,DBLP:conf/icml/0004RSS23}.
Our \emph{certification} result is complementary: it gives a checkable
structural condition under which a standard non-adaptive Hamming-DP analysis
of a continual primitive lifts directly to the adaptive setting.  The
condition is stronger than fresh prefix-causal randomness alone: it also
requires that common adaptive context does not change the one-round privacy
cost of the unique Hamming discrepancy.

\textbf{Privacy-Induced Delayed Reactions.}
\cite{DBLP:conf/colt/Cohen0NSS24} give a conceptually related delayed-reaction
phenomenon for continual observation and online threshold queries. Their
Appendix~B delays threshold-type outputs on a directly observed online bit
stream. Our delay has a different technical role: \randbin waits for future
arrivals to form bins in $[U,2U]$ so that a single insertion/deletion, which
appears only in the neighboring-stream analysis, becomes a bounded
Hamming-style perturbation while the full transcript, including timing and
$\bot$ outputs, remains protected.

\textbf{Randomized Buffering, RSC, and Obliviousness Lineage.}
Randomized buffering has appeared as infrastructure in oblivious data
structures and related privacy modularity work~\cite{DBLP:journals/jacm/ChanCMS22,DBLP:conf/eurocrypt/ZhouSCM23}.
A closely related random-partition idea also appears in the
Reorder-Slice-Compute (RSC) paradigm of~\cite{DBLP:conf/stoc/Cohen0NSS23}, where random
slicing is used to synchronize neighboring executions and avoid a one-edit
domino effect. Our setting differs in that the continual streaming transcript
exposes release timing itself: the adversary observes whether each step emits a
real block or the dummy symbol $\bot$. Thus, in our use, \randbin is an
interactive interface that converts single-edit participation streams into a
sparse, Hamming-style per-bin update stream while also providing explicit
backlog/delay guarantees calibrated by $(\varepsilon,\delta)$.

\ignore{
\textbf{Randomized Buffering and Obliviousness Lineage.}
Randomized buffering has appeared as infrastructure in oblivious data structures and related privacy modularity work~\cite{DBLP:journals/jacm/ChanCMS22,DBLP:conf/eurocrypt/ZhouSCM23}.
We import this idea with a different emphasis: \randbin is used as an \emph{interface} that converts single-edit participation streams into a sparse, per-bin update stream with (i) a Hamming-style neighboring interface for downstream continual primitives and (ii) explicit backlog/delay guarantees calibrated by $(\varepsilon,\delta)$.
}

\textbf{Composition and Concurrency.}
Beyond classical $(\varepsilon,\delta)$-DP~\cite{DBLP:conf/icalp/Dwork06}, alternative formalisms (e.g., R\'enyi divergence and tradeoff-based views) can streamline composition reasoning~\cite{Miro17,dong2022gaussian,DBLP:conf/stoc/Vadhan023,DBLP:conf/innovations/ZhouZCS24}, recovering familiar advanced composition bounds~\cite{DBLP:conf/focs/DworkRV10,DBLP:conf/icml/KairouzOV15}.
Concurrent composition in interactive settings has also been studied, including adaptively chosen privacy parameters~\cite{DBLP:conf/ccs/HaneySTVVX023,Henzinger2026Continual}.
These works typically assume each mechanism’s neighboring relation is defined on a single underlying (static or dynamic) database; in our pipeline, we additionally need to reason about \emph{neighbor-preserving transformations} between components (edit-to-Hamming shielding), which motivates our modular composition viewpoint.
Our modular composition theorem can be viewed as combining the refinement-pair reduction of NPDP to standard DP in~\cite{DBLP:conf/innovations/ZhouZCS24} with the interactive/concurrent composition framework of~\cite{Henzinger2026Continual}; for completeness and to match our power-function accounting, we give a self-contained proof in Appendix~\ref{sec:list_composition}.

\medskip
\noindent \textbf{Conflict of Interest Disclosure.}
The authors declare no financial or other substantive conflicts of interest relevant to this work. 

\section{Setting and Preliminaries}
\label{sec:setting}

We fix the interactive streaming-learning model underlying continual release.
The adversary observes both the released model trajectory and release timing,
and may influence future user events. We also define participation adjacency
and the utility--latency metrics used throughout. Section~\ref{sec:overview}
then summarizes our buffering--aggregation pipeline and main guarantee.

\paragraph{Streaming Learning Loop and Transcript.}
At each step $t=1,2,\ldots$, an adaptive environment produces a
\emph{user event} $f_t$ (e.g., a private loss or update). Starting from a
public, data-independent model $w^{(0)}\in\mathbb{R}^d$, the server maintains
models $w^{(t)}$ and may release intermediate snapshots. To capture both
per-step and buffered releases, let $v_t\in\{0,1\}$ be the observable timing
bit: if $v_t=1$, the server incorporates available buffered events, forms an
update $g_t$, and publishes the new snapshot; if $v_t=0$, it holds
$w^{(t)}=w^{(t-1)}$. The adversary's view through horizon $T$ is
\[
\mathsf{tr}_{\leq T}
:=
\left(w^{(0)},(v_t,w^{(t)})_{t=1}^{T}\right).
\]
Our privacy notion protects this joint transcript, including timing, with the
same parameters for every horizon $T$.

\paragraph{Feedback/Adaptive Interaction.}
The environment, and hence the event stream, may depend on the past
transcript: $f_t$ may be an arbitrary randomized function of
$\mathsf{tr}_{<t}$. All guarantees quantify over such adaptive interaction.
A \emph{non-adaptive} analysis of a continual primitive instead assumes that
its input stream is fixed independently of past releases.

\paragraph{Participation Privacy via Single-Edit Adjacency.}
For event streams $F=(f_1,f_2,\ldots)$ and
$F'=(f'_1,f'_2,\ldots)$, write
$F\sim_{\mathrm{edit}}F'$ if one is obtained from the other by inserting or
deleting one event, shifting all subsequent indices. Equivalently, for some
$t_0$, either
\[
f'_t=f_t\quad(t<t_0),
\qquad
f'_t=f_{t+1}\quad(t\geq t_0),
\]
or the reverse relation holds. Thus the unit of privacy is one participation
event.

\paragraph{Differential Privacy Under Adaptive Interaction.}
A fixed-stream DP inequality alone does not capture feedback. Formally,
consider any adaptive paired environment $\mathcal{A}$ that, from the past
observable view, generates two candidate event prefixes satisfying the
prescribed edit-neighbor relation. Let
$\mathsf{View}^{\mathcal{A}}_{b,\leq T}$ denote its complete view when the
mechanism receives candidate $b\in\{0,1\}$. We require, for every
$\mathcal{A}$, horizon $T$, and measurable set $S$,
\[
\Pr\!\left[
\mathsf{View}^{\mathcal{A}}_{0,\leq T}\in S
\right]
\leq
e^\varepsilon
\Pr\!\left[
\mathsf{View}^{\mathcal{A}}_{1,\leq T}\in S
\right]
+\delta,
\]
and likewise with $0$ and $1$ exchanged. Fixed neighboring streams are the
special case in which $\mathcal{A}$ ignores past releases. Appendix~\ref{sec:prelim}
formalizes this paired simulation and its equivalent power-function
formulation.

\paragraph{Metrics.}
\emph{Utility:} test accuracy or loss of the released model $w^{(T)}$.
\emph{Systems:} backlog $Q_t$, the number of pending events after time $t$,
and inclusion delay
\[
D(i):=
\min\{t\geq i:\ f_i\text{ is incorporated by time }t\}-i,
\]
measured in event arrivals.

\paragraph{Extension to Multiple Edits.}
Adaptive group privacy applies unchanged to paired simulations: an adaptive
one-hop $(\varepsilon,\delta)$ guarantee yields, for distance at most $k$,
\[
\left(
k\varepsilon,\;
\delta\sum_{j=0}^{k-1}e^{j\varepsilon}
\right)\text{-DP}
\]
(Fact~\ref{fact:multi_hop_DP}). In particular, the two adjacent Hamming
changes that a single edit may induce at our downstream interface are charged
by this adaptive two-hop bound; the downstream one-hop parameters are
calibrated accordingly in Section~\ref{sec:overview}.

\section{Overview}
\label{sec:overview}

This section summarizes our buffering--aggregation pipeline (Fig.~\ref{fig:pipeline}) and the resulting end-to-end guarantee:
continual-release $(\varepsilon,\delta)$-DP for single-edit participation streams under feedback, where the adversary observes
the released trajectory (and timing). The required buffering level $U(\varepsilon,\delta)$ is privacy-implied and explicitly
governs delay.

\begin{figure}[H]
\centering
\resizebox{\columnwidth}{!}{%
\begin{tikzpicture}[
  font=\scriptsize,
  >=Latex,
  node distance=7mm and 9mm,
  block/.style={draw, rounded corners, align=center, inner sep=3pt, minimum height=8mm},
  small/.style={font=\scriptsize\itshape},
  arr/.style={->, line width=0.6pt}
]
\node[block] (stream) {User-event stream\\[-1pt] $f_1,f_2,\dots$};

\node[block, right=of stream] (buffer) {Randomized\\buffering\\[-1pt]
\small $U(\varepsilon,\delta)$};

\node[block, right=of buffer] (bins) {Batches\\[-1pt]
$B_t \subseteq \{f_i\}$\\[-1pt]
\small $|B_t|\in[U,2U]$};

\node[block, right=of bins] (agg) {Continual DP\\primitive\\[-1pt]
\small (certified)};

\node[block, right=of agg] (release) {Released models\\[-1pt]
$w^{(t)}$};

\draw[arr] (stream) -- (buffer);
\draw[arr] (buffer) -- (bins);
\draw[arr] (bins) -- node[above, small] {\shortstack{Update\\ $g_t$}} (agg);
\draw[arr] (agg) -- (release);

\draw[arr] ($(release.north)+(0,2mm)$) .. controls +(0,12mm) and +(0,12mm) ..
node[above, small, align=center] {feedback:\\future events may depend on $w^{(t)}$}
($(stream.north)+(0,2mm)$);

\node[small, below=3mm of release, align=center] (obs)
{adversary observes\\$w^{(t)}$};

\draw[-, line width=0.4pt] (release.south) -- (obs.north);

\node[small, below=3mm of bins, align=center] {bin contents / internal state hidden};
\end{tikzpicture}%
}
\vspace{-0.5em}
\caption{\textbf{Pipeline overview.} Randomized buffering (parameter $U(\varepsilon,\delta)$) converts a single-edit stream into bins that are consumed
(in the sense defined in Appendix~\ref{sec:modular_comp}) by a certified continual-release DP primitive, producing intermediate model releases. The adversary observes the released models; bin contents and internal states are hidden.}
\label{fig:pipeline}
\vspace{-0.8em}
\end{figure}


We next give the recipe and state the end-to-end guarantee.

\paragraph{Recipe (buffering $\rightarrow$ aggregation $\rightarrow$ releases).}
Given user-event stream $(f_t)$,
at each step $t\ge 1$, buffering outputs $B_t$ with $|B_t|\in[U,2U]$ or $\bot$.
Let $\tau(1)<\tau(2)<\cdots$ be the (random) steps with $B_{\tau(k)}\neq\bot$.

\begin{enumerate}\setlength{\itemsep}{0.1em}\setlength{\topsep}{0.1em}
\item \textbf{Buffer.} Run randomized buffering with $U(\varepsilon,\delta)$ to obtain $(B_t)_{t\ge 1}$.
\item \textbf{Update.} For each $k\ge 1$, form a clipped per-bin update
$
g_k := \frac{1}{|B_{\tau(k)}|}\sum_{f\in B_{\tau(k)}} \textsf{Clip}_G \Bigl(\!\nabla f\!\bigl(w^{(\tau(k)-1)}\bigr) \Bigr)
$, where
$\textsf{Clip}_G(x)
~:=~
x \cdot \min\!\left\{1,\ \frac{G}{\|x\|_2}\right\}$.

\item \textbf{Aggregate + release.} Feed $(g_k)_{k\ge 1}$ to a certified continual-release DP primitive and update only at bin times:
output $w^{(\tau(k))}$ after processing $g_k$, and hold $w^{(t)}=w^{(\tau(k))}$ for $\tau(k)\le t<\tau(k{+}1)$ (feedback allowed).
\end{enumerate}

\noindent Theorem~\ref{thm:main_pipeline_overview} states the resulting end-to-end privacy. 

\begin{theorem}[End-to-end continual-release DP for single-edit streams]
\label{thm:main_pipeline_overview}
Run the above recipe with privacy target $(\varepsilon,\delta)$, allocating
$(\varepsilon_b,\delta_b)$ to buffering and
$(\varepsilon_a,\delta_a)$ to the total downstream cost, where
$
\varepsilon_b+\varepsilon_a\leq\varepsilon,
\delta_b+\delta_a\leq\delta.
$
Set
\[
U=\Theta\!\Bigl(\frac{1}{\varepsilon_b}
\log\frac{1}{\delta_b}\Bigr),
\qquad
\widehat\varepsilon_a=\frac{\varepsilon_a}{2},
\qquad
\widehat\delta_a=
\frac{\delta_a}{1+e^{\varepsilon_a/2}}.
\]
Plug in any certified continual-release DP primitive that is adaptively
$(\widehat\varepsilon_a,\widehat\delta_a)$-DP for ordinary distance-one
Hamming-neighboring update streams. Then the released model trajectory
$(w^{(t)})_{t\geq 0}$ is $(\varepsilon,\delta)$-DP with respect to
\emph{single-edit} neighboring event streams, even when future events are
generated adaptively from past releases (feedback).
\end{theorem}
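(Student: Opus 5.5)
The plan is to treat the pipeline as a two-stage composition and apply the modular composition theorem (Appendix~\ref{sec:list_composition}), charging $(\varepsilon_b,\delta_b)$ to the buffering stage and $(\varepsilon_a,\delta_a)$ to the aggregation stage. Both stages are run against an adaptive paired environment. The \randbin stage sees a single-edit neighboring pair $F\sim_{\mathrm{edit}}F'$ and outputs bins $(B_t)$ together with the timing bits $v_t=\mathbf{1}[B_t\neq\bot]$. The aggregator sees the per-bin updates $(g_k)$ and releases $w^{(\tau(k))}$. Holding $w^{(t)}$ constant between bin times is post-processing of $(v_t)$ and the aggregator outputs, so it adds no privacy cost.

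\textbf{Step 1 (buffering shield).} With $U=\Theta(\varepsilon_b^{-1}\log(1/\delta_b))$, I would show that \randbin satisfies a neighbor-preserving guarantee in the NPDP sense. For any adaptive edit-neighbor pair, there is a coupling of the two executions such that, except with probability $\delta_b$, two things hold. First, the timing sequences $(v_t)$ have densities within a factor $e^{\varepsilon_b}$ of each other. Second, conditioned on a common timing sequence, the two bin sequences differ in at most two adjacent bins.

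The intuition is that the randomized bin threshold absorbs the one-event shift. With probability $1-\delta_b$, the bin containing the edit point has size far from the boundary of $[U,2U]$. Shifting its random threshold by one costs a likelihood ratio $e^{O(1/U)}$. After that bin closes, the two executions resynchronize; at worst the discrepancy spills into the next bin. This is exactly the statement that the transformation maps edit neighbors to refinement pairs at Hamming distance $\le 2$. I would take it from the randomized buffering analysis of Section~\ref{sec:ingredients}, or reprove it by a tail bound on the uniform or geometric threshold distribution.

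\textbf{Step 2 (downstream).} Conditioned on the common timing and the common released prefix, the two update streams $g_k$ and $g'_k$ are computed with the same observed model $w^{(\tau(k)-1)}$ in each branch. By Step 1 they are generated by an adaptive paired environment whose candidate bin streams are at Hamming distance at most $2$, in adjacent positions. The primitive is certified, meaning Theorem~\ref{thm:cert} applies: fresh per-round randomness plus the stable-context condition. Hence it is adaptively $(\widehat\varepsilon_a,\widehat\delta_a)$-DP for distance-one pairs. Applying adaptive group privacy (Fact~\ref{fact:multi_hop_DP}) with $k=2$ gives
\[
\Bigl(2\widehat\varepsilon_a,\ \widehat\delta_a(1+e^{\widehat\varepsilon_a})\Bigr)=(\varepsilon_a,\delta_a).
\]
For this step I would construct an intermediate hybrid stream that agrees with one candidate on the first differing bin and with the other on the second, so that each hop is a legitimate adaptive distance-one pair.

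\textbf{Step 3 (compose).} The modular composition theorem combines the $(\varepsilon_b,\delta_b)$ timing and refinement guarantee with the $(\varepsilon_a,\delta_a)$ downstream guarantee. Since the adversary sees only $(v_t,w^{(t)})$ and not the bin contents, the joint transcript is $(\varepsilon_b+\varepsilon_a,\delta_b+\delta_a)$-DP, which is at most $(\varepsilon,\delta)$, for every horizon $T$.

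\textbf{Main obstacle.} The hard part is Step 1 combined with adaptivity. The future events after the edit are chosen adaptively, so they can differ across the two branches. The coupling must therefore be defined online, and the ``at most two adjacent bins differ'' property must be phrased as a paired simulation rather than as a property of fixed streams. I would first try to define it round by round: keep the buffers synchronized up to the single extra event, and show that the threshold-shift coupling is measurable with respect to the past view. This is what the NPDP-to-DP refinement reduction is designed to handle.
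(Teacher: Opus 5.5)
Your proposal is correct and follows essentially the same route as the paper's proof in Section~\ref{sec:compose}. That proof uses \randbin's $(\varepsilon_b,\delta_b)$-NPDP guarantee with the adjacent distance-two bin interface (Lemma~\ref{lem:randbin_wrapper}, Corollary~\ref{cor:adap_randbin}), lifts the certified primitive's adaptive distance-one guarantee to distance two by Fact~\ref{fact:multi_hop_DP} with exactly your calibration, and finishes with Theorem~\ref{thm:DP-comp} plus post-processing.

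Do not rely on your Step~1 intuition for a self-contained reproof. First, \randbin's timing marginals are identical in the two branches; NPDP requires closeness of the joint law of timing and the \emph{paired} hidden bins. Second, a unit load shift costs $(\varepsilon_b/2,\delta_b/2)$ under the truncated geometric load, not $e^{O(1/U)}$, and a uniform load would force $U\gtrsim 1/\delta_b$. Third, emission times are driven by the real branch's loads, so the paper also needs the private prefix-sum scheduler, which yields the $f_{\mathrm{load}}\otimes f_{\mathrm{sum}}$ bound of Lemma~\ref{lemma:randbin_NPDO}.
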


\noindent\emph{Proof idea.}
Randomized buffering turns a single insertion/deletion into at most two
adjacent Hamming changes in the emitted update stream. Certification gives
adaptive distance-one privacy, and adaptive group privacy
(Fact~\ref{fact:multi_hop_DP}) charges the two changes within the allocated
$(\varepsilon_a,\delta_a)$ budget. Modular composition with buffering then
gives the claimed guarantee (Appendix~\ref{sec:list_composition}).

\paragraph{Certification preview (adaptive safety of continual primitives).}
We use a reusable certification for feedback.  Writing $x_t$ for the
per-round input and $x_{1:t}$ for its prefix, the certified core must use
fresh independent randomness at each round and satisfy a stable-context
condition: common adaptive inputs may change the released statistic, but they
must not change 
the one-round privacy cost of a distance-one Hamming discrepancy.
Under these conditions, a standard non-adaptive
$(\varepsilon,\delta)$-DP analysis lifts to the feedback setting.
Section~\ref{sec:cert} gives the main-text checklist and theorem, while
Appendix~\ref{sec:ind_decomp} formalizes the stable-context condition as
quantitative decomposability.  The certification covers tree/prefix-sum
mechanisms and more general lower-triangular linear mechanisms.

\paragraph{SGD instantiation and what we measure.}
We instantiate the pipeline with streaming SGD: each emitted bin $B_{\tau(k)}$ triggers one clipped mini-batch gradient update,
privatized by a continual-release DP aggregator (tree/prefix-sum).
The target $(\varepsilon,\delta)$ fixes $U(\varepsilon,\delta)$, yielding random batch sizes in $[U,2U]$ and a backlog bound of order
$O\!\bigl(\tfrac{1}{\varepsilon_b}\log t\,(\log t+\log\tfrac{1}{\delta_b})\bigr)$ (Lemma~\ref{lem:randbin_backlog}).
The theoretical convergence of various SGD variants under DP noise has already been extensively analyzed; we refer the reader to standard works~\cite{DBLP:journals/jmlr/JainKT12,DBLP:conf/globalsip/SongCS13,DBLP:conf/nips/ThakurtaS13,DBLP:conf/icml/KairouzM00TX21,DBLP:conf/nips/DenisovMRST22}.

\paragraph{Deferred details.}
Sections~\ref{sec:randbin_short} and~\ref{sec:cert} formalize buffering and certification; the end-to-end proof of
Theorem~\ref{thm:main_pipeline_overview} (via modular composition) and additional instantiations/experiments appear in~\cite{DBLP:conf/icml/ChanSZZ26}.

\section{Auditable Ingredients: Randomized Buffering and Adaptive-Safe Continual-Release DP}
\label{sec:ingredients}

This section isolates the two interfaces used in
Theorem~\ref{thm:main_pipeline_overview}.
Randomized buffering (\randbin) supplies an edit-to-Hamming reduction and an
explicit backlog guarantee (Section~\ref{sec:randbin_short}), while
Theorem~\ref{thm:cert} certifies when a non-adaptive Hamming-DP continual
primitive remains private under feedback.
Section~\ref{sec:compose} composes these interfaces.

\ignore{
This section isolates the two ingredients that make the pipeline in Fig.~\ref{fig:pipeline} auditable.
First, we treat randomized buffering (\randbin) as an interface: it converts a \emph{single-edit} event stream
into a sparse per-bin update stream that admits a standard Hamming-neighbor DP analysis, while providing an explicit
backlog guarantee.
Second, we give a reusable certification theorem for adaptive interaction
(output-dependent updates): for continual-release mechanisms whose certified
cores use fresh independent randomness and have a stable one-round privacy
profile under common aggregation context, a non-adaptive DP proof remains
valid even when updates are chosen adaptively as a function of past releases.

\paragraph{Separation of Roles.}
Randomized buffering is not needed for the certification theorem itself:
Theorem~\ref{thm:cert} is a reusable adaptivity lift for continual mechanisms
that pass Checklist~1, combining fresh per-round randomness with stability of
the one-round privacy profile under common aggregation context.
In our pipeline, buffering serves a different
role: it shields the downstream primitive from single-edit adjacency, where one
insertion or deletion can shift many later batch boundaries. Thus, our modular
proof separates an upstream edit-to-Hamming interface from a downstream
adaptive-safety certificate. We do not claim that randomized buffering is the
only possible route; alternatives would require either another edit-shielding
interface or a direct native proof under edit adjacency.
}

\subsection{Randomized Buffering Wrapper}
\label{sec:randbin_short}

\paragraph{RandBin Interface.}
Given user events $f_1,f_2,\ldots$, at each step $t\ge 1$, \randbin outputs
either a bin $B_t$ of pending events or $\bot$ (no emission).
Let $\tau(1)<\tau(2)<\cdots$ denote the random emission times for which
$B_{\tau(k)}\neq\bot$. When $B_t=\bot$, nothing is sent downstream and the
released model remains unchanged. We use the following interface properties:
\begin{enumerate}\setlength{\itemsep}{0.05em}\setlength{\topsep}{0.05em}
  \item[\textbf{(P1)}] \textbf{Bin size.}
  Whenever $B_t\neq\bot$, $|B_t|\in[U,2U]$.

  \item[\textbf{(P2)}] \textbf{Backlog control.}
  The number $Q_t$ of pending events remains small for every $t$, yielding
  explicit inclusion-delay guarantees.

	  \item[\textbf{(P3)}] \textbf{Edit-to-Hamming shielding.}
  A single insertion/deletion induces coupled emitted-bin/update interface
  streams at Hamming distance at most two.

\end{enumerate}

\paragraph{Input-Identifiability Convention.}
Lemma~\ref{lem:randbin_wrapper} 
uses the standard convention that
events are identifiable, e.g., by unique IDs or metadata; anonymous duplicate
events require an additional disambiguation assumption that is discussed
in Appendix~\ref{sec:randbin}.

\begin{algorithm}[t]
\caption{\randbin (Interface-Level Skeleton)}
\label{alg:randbin_sketch}
\footnotesize
\begin{algorithmic}[1]
\STATE \textbf{Input:} $(\varepsilon_b,\delta_b)$; stream $(f_t)_{t\ge1}$
\STATE \textbf{Output:} $B_t\in\{\bot\}\cup\{\text{bins of size in }[U,2U]\}$
\STATE Set $U=\Theta(\varepsilon_b^{-1}\log(1/\delta_b))$ and a distribution $\mcal{D}$ supported on $[U,2U]$
\STATE Maintain a FIFO buffer $\mathsf{buf}$ and a private randomized emission scheduler
\FOR{$t=1,2,\dots$}
  \STATE Append $f_t$ to $\mathsf{buf}$
  \IF{there is no scheduled emission}
    \STATE Output $\bot$
  \ELSE
    \STATE Sample $C\in[U,2U]$ from $\mcal{D}$
     \STATE Pop $C$ items from $\mathsf{buf}$ as $B_t$
     \STATE Output $B_t$;
    refresh scheduler
  \ENDIF
\ENDFOR
\end{algorithmic}
\end{algorithm}

\ignore{
\begin{algorithm}[t]
\caption{\randbin (Interface-Level Skeleton)}\label{alg:randbin_sketch}
\footnotesize
\KwIn{$(\varepsilon_b,\delta_b)$; stream $(f_t)_{t\ge1}$}
\KwOut{$B_t\in\{\bot\}\cup\{\text{bins of size in }[U,2U]\}$}
Set $U=\Theta(\varepsilon_b^{-1}\log(1/\delta_b))$ and a distribution $\mcal{D}$ supported on $[U,2U]$\;
Maintain a FIFO buffer $\mathsf{buf}$ and a private (randomized) emission scheduler\;
\For{$t=1,2,\dots$}{
  append $f_t$ to $\mathsf{buf}$\;
  \If{no scheduled emission}{output $\bot$}
  \Else{sample $C\in[U,2U]$ from $\mcal{D}$; pop $C$ items from $\mathsf{buf}$ as $B_t$; output $B_t$; refresh scheduler}
}
\end{algorithm}
}

\paragraph{Backlog and Delay.}
The buffering level $U(\varepsilon_b,\delta_b)$ determines how aggressively
\randbin batches the stream. We use the backlog $Q_t$ and inclusion delay
$D(i)$ defined in Section~\ref{sec:setting}; because events are emitted from a
FIFO buffer, bounds on $Q_t$ yield corresponding delay bounds.


\ignore{
\paragraph{Systems Metrics: Backlog and Delay.}
The buffering level $U(\varepsilon_b,\delta_b)$ controls how aggressively \randbin batches the stream and thus the
latency regime seen by learning. We track the \textbf{backlog} $Q_t$, the number of pending (unbinned) events after time $t$.
For an event $f_i$, its \textbf{inclusion delay} $D(i)$ is the number of subsequent arrivals until it is placed in an emitted bin.
In particular, if $Q_t \le B(t)$ for all $t\in[i,T]$, then $D(i)\le B(T)$ (and sharper bounds follow from the trajectory of $(Q_t)$).


\paragraph{Interface Guarantees for Modular Composition.}
We will use two interface properties of \randbin; proofs and constants appear in Appendix~\ref{sec:randbin}.
}

\begin{lemma}[Edit-to-Hamming Wrapper (NPDP Contract)]
\label{lem:randbin_wrapper}
For any single-edit neighboring input streams, \randbin admits a
neighbor-preserving paired simulation (with refinement) in which the coupled
emitted-bin/update interface streams have Hamming distance at most two.
If two concrete-bin positions differ, they are adjacent.
\end{lemma}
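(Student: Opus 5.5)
We will build the neighbor-preserving paired simulation explicitly by coupling the scheduler randomness of the two executions. Fix single-edit neighbors $F$ and $F'$, where $F'$ is $F$ with the event at position $t_0$ deleted; insertion is symmetric. Prior to $t_0$ the FIFO buffers agree, so we run both executions with identical scheduler coins and identical bin sizes $C$. Every emitted bin before $t_0$ is then equal, which gives zero Hamming discrepancy on that prefix. In the adaptive setting, the paired environment produces both candidate prefixes from the common observed view. The coupling is defined round by round, using only the past view, so it is a valid paired simulation.

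After $t_0$, execution $F$ holds one extra event in its buffer. Let $k^\ast$ be the first emission index whose bin, under $F$, contains the extra event $f_{t_0}$; it is identifiable by the input-identifiability convention. We couple the bin sizes as $C'_{k^\ast}=C_{k^\ast}-1$ and keep all other sizes equal, so that for $k\neq k^\ast$ the two executions pop the same events. Two cases arise.

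\begin{itemize}
\item If $C_{k^\ast}-1\ge U$, bin $k^\ast$ differs only in $f_{t_0}$, all later bins coincide, and the Hamming distance is $1$.
\item At the boundary $C_{k^\ast}=U$ the shifted size leaves $[U,2U]$. Here we transfer the one-item discrepancy to the adjacent bin $k^\ast+1$: bin $k^\ast$ of $F'$ takes one extra event from the front of the next segment, and bin $k^\ast+1$ correspondingly pops one item fewer. This yields at most two differing positions, $k^\ast$ and $k^\ast+1$, which are adjacent, as the lemma claims.
\end{itemize}

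The emission timing (the $\bot$ pattern) is controlled by the scheduler coins, which are shared. Any residual dependence of the scheduler on buffer occupancy, which differs by one between the executions, must instead be absorbed by a refinement step.

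That step is the main obstacle. The shifted size coupling $C\mapsto C-1$, and the possible re-timing, are not measure-preserving, so the coupling does not hold with probability one. We handle it through the NPDP refinement: the pair $(F,F')$ is refined to the pair of coupled interface streams, and the privacy loss of the size and schedule shift is charged to the buffering budget $(\varepsilon_b,\delta_b)$. Concretely, we choose $\mathcal{D}$ on $[U,2U]$ with likelihood ratio $\mathcal{D}(c)/\mathcal{D}(c-1)\le e^{\varepsilon_b}$ on its bulk. A truncated two-sided geometric, or a discrete Laplace centered at $3U/2$ with scale $1/\varepsilon_b$, has this property. Its mass near the endpoints, which is where the shift fails, is at most $\delta_b$ once $U=\Theta(\varepsilon_b^{-1}\log(1/\delta_b))$. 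The scheduler shift is treated the same way.

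We will verify three things:
\begin{enumerate}
\item the shifted distributions are $(\varepsilon_b,\delta_b)$-indistinguishable;
\item on the good event the coupled interface streams satisfy the Hamming bound above with the adjacency property;
\item the bad event has probability at most $\delta_b$, where it is accounted as a $\delta_b$ failure in the refinement.
\end{enumerate}
Establishing the adjacency and boundary case inside this coupling is the delicate part.
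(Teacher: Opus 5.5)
Your proposal has a genuine gap, and it lies in the main case, not at the endpoint $C_{k^*}=U$. The index $k^*$ is determined by the very load you shift. Write $A_k:=C_1+\cdots+C_k$ and let $t_0$ be the position of the extra event $w$ in $F$. Then $k^*=\min\{k:A_k\ge t_0\}$, and your rule replaces each boundary $A_k$ by $A_k-\mathbf{1}[A_k\ge t_0]$. This map is two-to-one. The $F$-load pattern $(\ldots,c,c',\ldots)$ with $A_k=t_0$ ($w$ is the last item of bin $k$) and the pattern $(\ldots,c-1,c'+1,\ldots)$ ($w$ is the first item of bin $k+1$) both produce the $F'$-pattern $(\ldots,c-1,c',\ldots)$, with all loads in the interior of $[U,2U]$. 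The pushed-forward load law therefore exceeds the true law under $F'$ by a factor $\mathcal{D}(c)/\mathcal{D}(c-1)+\mathcal{D}(c'+1)/\mathcal{D}(c')\approx 2$. This happens on the event that a bin boundary falls at the deleted slot, which has probability $\Theta(1/U)=\Theta(\varepsilon_b/\log(1/\delta_b))\gg\delta_b$. In refinement language: when the shorter stream is real, the simulator must put $w$ either at the end of bin $k$ or at the front of bin $k+1$. The two configurations together carry about twice as much mass in the other branch. So your verification items 1 and 3 fail, and no choice of a smooth $\mathcal{D}$ repairs this.

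The missing idea, which is exactly the paper's construction (Definition~\ref{defn:randbin_npp}), is to use the transfer you reserve for the boundary case \emph{always}. Never change the length of the bin $J$ containing $w$; instead swap $w$ for the next item $u$: $[\sigma_1,w,\sigma_2],\bot,\ldots,\bot,[u,\sigma_3]\leftrightarrow[\sigma_1,\sigma_2,u],\bot,\ldots,\bot,[\sigma_3]$. Then only bin $K=J+1$ changes length, and the consequences are:
\begin{itemize}
\item the boundary map becomes $A_k\mapsto A_k-\mathbf{1}[A_{k-1}\ge t_0]$, which is injective;
\item $J$ has the same law in both branches;
\item $C[K]$ is sampled fresh after $J$ and the edit direction are known, so the paired lengths are images of $Z$ versus $Z-\theta$ under $a\mapsto(a,a+\theta)$, and the truncated geometric gives $\mathsf{DP}_{\varepsilon_1,\delta_1}$ with the endpoint atoms as the only $\delta$ loss.
\end{itemize}

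Your treatment of timing is a second gap. In Algorithm~\ref{alg:randbin}, emissions are triggered by the noisy prefix sums $S[k]$ of the loads, not by load-independent coins. With shared noise, your two real runs would therefore emit at different times after $k^*$. That is incompatible with an NPP simulation, which needs a single visible schedule carrying both bin sequences. The paper keeps the synthetic sequence on the real release times. It then pays for the schedule, which is the scheduler run on two unit-Hamming-neighboring load streams, with the adaptive $(\varepsilon_2,\delta_2)$-DP of Theorem~\ref{th:prefix_sum}. This is composed with the load shift through Theorem~\ref{thm:DP-comp} in Lemma~\ref{lemma:randbin_NPDO}. Saying ``the scheduler shift is treated the same way'' does not supply this argument, and your budget would also have to be split between the two costs.
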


\noindent\emph{Intuition:} randomized boundaries prevent one
insertion/deletion from cascading into many downstream bin shifts; the paired
simulation formalizes this stability under interaction.

\begin{lemma}[Backlog Bound (Privacy-Implied Delay)]
\label{lem:randbin_backlog}
With $U=\Theta(\varepsilon_b^{-1}\log(1/\delta_b))$, for every time $t$ the backlog satisfies
$
Q_t \;=\; O\!\Bigl(\frac{1}{\varepsilon_b}\log t\cdot\Bigl(\log t+\log\frac{1}{\delta_b}\Bigr)\Bigr).
$

This bound is obtained by implementing \randbin's private emission scheduler via the streaming DP prefix-sum mechanism
(Theorem~\ref{th:prefix_sum}) and translating its additive-error guarantee into a
worst-case queue bound.
As a consequence, inclusion delay is controlled at the same order (up to constants) for events arriving by time $t$.
\end{lemma}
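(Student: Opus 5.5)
We will prove Lemma~\ref{lem:randbin_backlog} by making the scheduler concrete and reducing the queue bound to the additive error of the prefix-sum mechanism of Theorem~\ref{th:prefix_sum}. The scheduler privately tracks the arrival counts: the $t$-th arrival contributes $1$ to a count stream, and at each step the scheduler holds a noisy prefix count $\widetilde{N}_t$ with $|\widetilde{N}_t-t|\le E_t$. Here $E_t$ is the high-probability error of the tree mechanism run at privacy level $(\varepsilon_b,\delta_b)$. The standard tree bound gives $E_t=O\bigl(\varepsilon_b^{-1}\log t\cdot(\log t+\log(1/\delta_b))\bigr)$ simultaneously for all $t$. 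We obtain the uniform-in-$t$ statement either by a union bound with failure probability folded into $\delta_b$, or by using a Laplace-type tail truncated at that level so that the bound holds deterministically on the support. Let $\mathrm{Out}_t$ be the total number of items popped by time $t$. The emission rule is: emit at step $t$ when $\widetilde{N}_t-\mathrm{Out}_{t-1}-E_t\ge 2U$, and pop a size $C\in[U,2U]$ drawn from $\mcal{D}$.

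\textbf{Step 1 (feasibility, property (P1)).} We show every emission is possible. At an emission step, $\widetilde{N}_t-E_t\le t$, so the true queue satisfies $Q_{t-1}+1=t-\mathrm{Out}_{t-1}\ge 2U\ge C$. Hence $\mathsf{buf}$ always contains at least $C$ items, and bins have size in $[U,2U]$.

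\textbf{Step 2 (upper bound).} We argue by contradiction. Suppose $Q_t>2E_t+4U$. Then $\widetilde{N}_t-\mathrm{Out}_{t-1}-E_t\ge t-2E_t-\mathrm{Out}_{t-1}\ge Q_t-2E_t>2U$, so the scheduler emits at $t$ and removes at least $U$ items. To turn this into a uniform bound we track the first time $Q$ exceeds $2E_t+4U$. Each step adds exactly one item. Whenever the noisy trigger fires, the queue drops by at least $U\ge 1$. Therefore $Q$ can exceed the threshold by at most $2U$ before being forced back. This gives $Q_t\le 2E_t+6U$ for all $t$. Since $E_t$ is nondecreasing in $t$, we use $E_T$ for a common horizon.

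\textbf{Step 3 (plug in parameters).} With $U=\Theta(\varepsilon_b^{-1}\log(1/\delta_b))$, the $U$ term is dominated by $E_t$ for $t\ge 2$. This yields $Q_t=O\bigl(\varepsilon_b^{-1}\log t\,(\log t+\log(1/\delta_b))\bigr)$. Because the buffer is FIFO, event $f_i$ leaves once the at most $Q_i$ items ahead of it are popped. The number of further arrivals needed for this is bounded by the backlog bound at the relevant time plus one bin. This gives $D(i)$ of the same order.

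The main obstacle is Step 2's interaction with the random threshold. The noisy count is non-monotone, so the trigger might fail to fire repeatedly while the queue grows. We handle this by showing that the condition ``true queue exceeds $2E_t+2U$'' deterministically forces the trigger given the error event. We would also double-check that the noise correlations in the tree mechanism do not break this argument, which they should not, since only the uniform error bound is used.
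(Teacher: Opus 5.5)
\textbf{Verdict: genuine gap.} Your forced-emission arithmetic in Steps~1--2 is sound \emph{given} a deterministic error bound. However, it is carried out for a scheduler that is not \randbin's, and the privatization you describe is vacuous.

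\textbf{First gap: the wrong stream is privatized.} Exactly one event arrives per step. So the arrival count equals $t$ in every execution and in both edit-neighbors, and a DP guarantee for that stream protects nothing. Your emission rule then compares against the \emph{exact} popped count $\mathrm{Out}_{t-1}$. That is precisely the quantity that differs between the two branches of the paired simulation, where one bin load shifts by $\pm1$. Your rule also guarantees only $2U$ items \emph{before} the pop. The paper's NPP simulation needs $2U$ items to \emph{remain} after it, so that the edited item cannot be emitted in its arrival step and its successor $u$ is already buffered. So you have bounded the backlog of a mechanism for which the NPDP guarantee of Lemma~\ref{lemma:randbin_NPDO} and Corollary~\ref{cor:adap_randbin} is not available. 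The lemma, by contrast, is about the delay of the private \randbin used in the pipeline. In Algorithm~\ref{alg:randbin}, Theorem~\ref{th:prefix_sum} is run on the \emph{load} stream $C[1],C[2],\ldots$ with $[a..b]=[U..2U]$. Bin $\ind$ is emitted at the first step with $t\ge S[\ind]+E[\ind]+2U$, where $S[\ind]$ already includes the load of the bin about to be emitted.

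\textbf{Second gap: the error bound must hold on every realization.} The statement is for every $t$ and every realization. Your union-bound version of $E_t$ holds only with probability $1-\delta_b$. On the bad event, both Step~1 (no underflow, bin sizes in $[U,2U]$) and Step~2 can fail. The truncated-noise alternative is not justified as stated. What you need is exactly the always-valid (clipped) error of Theorem~\ref{th:prefix_sum}, together with its consistency property $S[i]-S[i-1]\in[U..2U]$.

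\textbf{How the paper's argument then runs.}
\begin{itemize}
\item Emitting bin $i$ at step $t$ gives $t\ge S[i]+E[i]+2U\ge A_i+2U$, where $A_i:=\sum_{j\le i}C[j]$. Hence there is no underflow.
\item At the end of every step, $t<S[\ind]+E[\ind]+2U$. This holds because at most one bin fires per step and the next threshold rises by at least $U$.
\item Therefore $Q_t=t-A_{\ind-1}<S[\ind]-S[\ind-1]+E[\ind]+E[\ind-1]+2U\le 4U+2E[t]$. This uses consistency, the error bound at $\ind-1$, and $\ind\le t$.
\end{itemize}
Your forced-emission intuition is the right one, but it proves the lemma only once the scheduler and the error guarantee are these.
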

\noindent\emph{Intuition:} the noisy scheduling forces sufficiently frequent emissions to prevent sustained queue growth, while keeping bin sizes within $[U,2U]$.

\paragraph{On the Horizon Dependence.}
The $\log t$ dependence in Lemma~\ref{lem:randbin_backlog} is not merely
algebraic. Our emission scheduler is a continual private prefix-sum mechanism
whose additive error translates directly into backlog. On a unit-rate stream,
if $M_t$ denotes cumulative emissions by time $t$, then $Q_t=t-M_t$. Hence,
within this scheduler family, improving backlog requires improving continual
private counting error. Known lower bounds suggest that logarithmic horizon
dependence is intrinsic to this proof route~\cite{DBLP:conf/stoc/DworkNPR10,
DBLP:conf/colt/Cohen0NSS24}, though we do not claim a minimax lower bound over
all NPDP wrappers.

\paragraph{Scope of the Approximate-DP Regime.}
The probability-one finite-delay guarantee in
Lemma~\ref{lem:randbin_backlog} relies on the approximate-DP implementation
of \randbin: its proof uses bounded-support buffering noise and an
always-valid prefix-sum error bound. RandBin-style shift masking under pure
DP requires unbounded-support noise and thus cannot provide the same
guarantee. 

\paragraph{Positioning.}
Randomized buffering is often used in oblivious data structures to hide
access patterns~\cite{DBLP:journals/jacm/ChanCMS22,DBLP:conf/eurocrypt/ZhouSCM23}.
Our emphasis differs: \randbin is an \emph{infrastructure interface} that
(i) converts single-edit participation streams into sparse, Hamming-style
per-bin update streams and (ii) provides explicit backlog guarantees.
Once Lemmas~\ref{lem:randbin_wrapper}--\ref{lem:randbin_backlog} hold, any
continual primitive certified in Section~\ref{sec:cert} can be reused
downstream. Full pseudocode, the NPDP security game (paired simulation and
refinement), and sharper tail bounds appear in
Appendix~\ref{sec:randbin}.

\subsection{Certifying Feedback-Safe Continual-Release Primitives}
\label{sec:cert}

\begin{center}
\fbox{\begin{minipage}{0.97\columnwidth}
\textbf{Checklist 1 (Certification Conditions).}
A continual-release primitive $\mathcal{M}$ passes certification if its core
release process satisfies the following conditions:
\begin{enumerate}\setlength{\itemsep}{0.15em}\setlength{\topsep}{0.15em}
  \item \textbf{Prefix dependence:} the core release at time $t$ depends only
  on the input prefix $x_{1:t}$, not on future inputs.

  \item \textbf{Fresh randomness:} the core release at time $t$ uses
  randomness $R_t$ that is independent of the randomness used in all other
  rounds.

  \item \textbf{Deterministic release map:} given $(R_t,x_{1:t})$, the core
  release $y_t=\mathcal{M}_t(R_t,x_{1:t})$ is deterministic.

  \item \textbf{Stable one-round privacy cost:} if two prefixes differ in one
  update, then common aggregation context in the two prefixes does not change
  the privacy cost of that discrepancy.  The appendix formalizes this
  condition as quantitative decomposability in Definition~\ref{defn:quant_decomp}.

  \item \textbf{Post-processing only afterward:} after the certified core
  releases are produced, the mechanism may output any function of the released
  transcript; this is post-processing and does not weaken the privacy
  guarantee.
\end{enumerate}
\emph{Consequence:} any finite-horizon non-adaptive $(\varepsilon,\delta)$-DP
proof for Hamming-neighboring update streams lifts to adaptive interaction.
\end{minipage}}
\end{center}

\paragraph{Why Certification?}
Our pipeline treats the continual-release DP primitive as a plug-in component.
To make this modularity auditable, we need a criterion under which a standard
non-adaptive Hamming-neighbor DP analysis remains valid when per-round updates
are chosen adaptively from past releases. Fresh independent randomness is
necessary but not sufficient; Appendix~\ref{subsec:ind_decomp_counterexample}
gives a two-round counterexample. Our criterion therefore adds quantitative
decomposability, preventing adaptive common context from changing the one-round
leakage profile.

\noindent \paragraph{Certified Decomposability.}
Write $x_t$ for the input update at round $t$ and
$x_{1:t}:=(x_1,\dots,x_t)$.  The certification condition has two parts.
First, the mechanism must use fresh independent randomness at each round:
the release at time~$t$ is generated as
\[
y_t=\mathcal{M}_t(R_t, x_{1:t}),
\]
where $R_t$ is independent of the randomness used at other rounds and
$\mathcal{M}_t$ is deterministic given $(R_t, x_{1:t})$.

Second, the privacy cost of a one-round release must be stable under common
context.  Informally, if two input histories differ in one update, then the
effect of this discrepancy on the round-$t$ release is determined only by the
part of the round-$t$ aggregate that contains the changed update.  Common
updates before or after the discrepancy may change the released statistic, but
they are shared by both neighboring histories and therefore should not change
the privacy cost of the discrepancy itself.

The formal appendix condition, called quantitative decomposability, rules out
this behavior by requiring the one-round privacy profile to depend only on the
distance between the relevant aggregates after common aggregation context is
removed.

\emph{Importantly,} after producing the output $y_{1:t}$ at time~$t$, the mechanism may apply \emph{arbitrary} post-processing to $(y_1,\dots,y_t)$ and output any derived value at time $t$; by closure under post-processing, this
does not weaken the privacy guarantee proved for the underlying releases.

\paragraph{Certification Theorem.}
The next theorem formalizes the consequence of Checklist~1 and serves as an
audit tool for reusing continual DP primitives inside interactive learning
loops.  The formal appendix statement is given in terms of power functions;
the version below states the resulting $(\varepsilon,\delta)$-DP guarantee.

\begin{theorem}[Certification: Non-Adaptive DP Implies Feedback-Safe DP]
\label{thm:cert}
Consider Hamming-style neighboring update streams, where two equal-length
streams differ in at most one position.  Let $\mathcal{M}$ be a continual
primitive satisfying Checklist~1.  If $\mathcal{M}$ is non-adaptively
$(\varepsilon,\delta)$-DP for Hamming-neighboring update streams for every
finite horizon, then $\mathcal{M}$ is also $(\varepsilon,\delta)$-DP against
adaptive interaction for every finite horizon, where each $x_t$ may be chosen
as an arbitrary function of past releases.
\end{theorem}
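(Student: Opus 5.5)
The plan is to work in the power-function (tradeoff-function) formulation from Appendix~\ref{sec:prelim} and reduce the adaptive game to a composition of one-round mechanisms. Fix a finite horizon $T$, an adaptive paired environment $\mathcal{A}$, and the index $t_0$ at which the two candidate streams $x^{(0)}$ and $x^{(1)}$ differ. By Checklist items 1--3, the transcript is generated as follows. For each round $t$, the environment chooses $x_t$ (and, in round $t_0$, the pair $(x_{t_0}^{(0)},x_{t_0}^{(1)})$) as a randomized function of $y_{<t}$. The mechanism then outputs $y_t=\mathcal{M}_t(R_t,x_{1:t})$ with fresh $R_t$. Hence the view is a sequential adaptive composition of $T$ one-round kernels $K_t(\cdot\mid y_{<t})$.

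Rounds $t<t_0$ use identical inputs in both worlds, so their kernels coincide and contribute nothing. For $t\ge t_0$, the two worlds' round-$t$ kernels are the laws of $\mathcal{M}_t(R_t,x_{1:t})$ on two prefixes differing only at position $t_0$. The common context (the rest of $x_{1:t}$) is an adaptive function of $y_{<t}$.

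The key step uses the stable one-round privacy cost condition, Definition~\ref{defn:quant_decomp}. Once common aggregation context is removed, the round-$t$ tradeoff between the two worlds depends only on the discrepancy at $t_0$ (its distance between the relevant aggregates), not on the adaptively chosen common updates. So conditional on $y_{<t}$, the round-$t$ pair of distributions is dominated by a fixed pair $(P_t,Q_t)$ determined by $(x^{(0)}_{t_0},x^{(1)}_{t_0})$ alone. Two consequences follow:
\begin{itemize}
\item The joint adaptive view is a post-processing of an adaptive composition of kernels, each at least as private as $(P_t,Q_t)$.
\item Applying the composition theorem for tradeoff functions, where adaptivity of the context is harmless because the per-round bound is uniform in $y_{<t}$, gives a joint tradeoff $\ge \bigotimes_{t\ge t_0} T(P_t,Q_t)$.
\end{itemize}

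It remains to show that this product bound is exactly what the non-adaptive analysis certifies. Take the non-adaptive neighboring pair obtained by freezing all common updates to any fixed values, for instance those realized on one branch or arbitrary fixed ones, with the same discrepancy at $t_0$. By fresh randomness its view is precisely the independent product of the same one-round pairs. The non-adaptive $(\varepsilon,\delta)$-DP hypothesis for every finite horizon therefore states that $\bigotimes_{t\ge t_0}T(P_t,Q_t)$ dominates the $(\varepsilon,\delta)$ tradeoff curve. Chaining the two bounds gives the adaptive $(\varepsilon,\delta)$ guarantee in both directions, and item 5 extends it to post-processed outputs.

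The main obstacle is the middle step: making precise that the per-round privacy profile is uniform over the adaptively chosen context, and that the discrepancy value chosen at $t_0$ may itself be adaptive. For the latter, I would condition on $y_{<t_0}$ together with the environment's choice at $t_0$; this is a common prefix, so mixing over it preserves the bound. For the former, I would try to invoke Definition~\ref{defn:quant_decomp} directly as a statement that each kernel is a measurable function of $y_{<t}$ composed with a fixed profile. If the definition only gives a distance-based profile, I would add a monotonicity argument (larger aggregate distance gives a worse profile) so that the fixed-discrepancy non-adaptive instance remains the worst case.
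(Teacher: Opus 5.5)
Your proposal is correct. Its core is the same as Case~1 of the paper's proof of Theorem~\ref{thm:noise-ind-DP}. On every reachable prefix after the discrepancy, quantitative decomposability together with the invariant-pair property fixes the one-round power function at $g_t(\Delta)$, or at $g_t(0)=\mathsf{Id}$ when $t_0\notin I_t$. This holds regardless of the adaptive common context. These per-round profiles tensorize, and a non-adaptive adversary with frozen common inputs realizes exactly the same tensor product, so the non-adaptive hypothesis bounds it.

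Where you differ is in how you handle the rounds before the discrepancy.

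\begin{itemize}
\item The paper inducts on the horizon. When the first pair is equal (Case~2), it fixes the common first input $c$ and builds the shifted mechanism $\mathcal{M}^{(c)}$. It re-verifies that $\mathcal{M}^{(c)}$ is quantitatively decomposable, defines $g=\sup_\lambda g_\lambda$ over non-adaptive suffix pairs, shows $g\le f$, and applies the induction hypothesis with $g$. This is why the paper's inductive statement is quantified over all power functions.
\item You condition once on the adversary's seed and the common prefix up to the random first-discrepancy time. Both worlds receive identical inputs until then, so this prefix, including $t_0$ and the discrepancy pair, has the same law under both secret bits. Mixing over it (composition with an $\mathsf{Id}$ first stage, or joint convexity after tagging the view by the prefix) reduces everything directly to Case~1. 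You then invoke the non-adaptive hypothesis only once, for the original mechanism with the realized prefix frozen.
\end{itemize}

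Your route is flatter and avoids the shifted-mechanism and supremum bookkeeping. The paper's induction avoids explicit stopping-time bookkeeping, because determinism of the adversary fixes the first pair at each inductive step. Your opening ``fix $t_0$'' is only legitimate after this conditioning.

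Your monotonicity fallback is unnecessary. The invariant-pair identity makes the relevant distance exactly $d(x^{(0)}_{t_0},x^{(1)}_{t_0})$ on every adaptive prefix. The frozen instance therefore matches the adaptive per-round profiles exactly, and the paper's proof does not use the monotonicity clause either.
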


\paragraph{Proof Sketch.} (Full proof in Appendix~\ref{sec:ind_decomp})
The proof works in the paired-simulation game and proceeds by induction on
the horizon.  If the first input pair already contains the unique Hamming
discrepancy, then all later paired inputs must be equal componentwise.
Checklist~1 ensures that these later common inputs only add common context and
do not change the one-round privacy cost of the discrepancy; the per-round
privacy losses then tensorize exactly.  If the first input pair is equal, we
condition on the common first release, fix the first input as common context,
and apply the induction hypothesis to the shifted mechanism on the remaining
rounds.  Fresh independent randomness is used to separate the per-round
contributions, while the stable-context condition prevents adaptive common
inputs from amplifying the leakage of the unique discrepancy.

\paragraph{Canonical Examples.}
For the tree mechanism, take the certified core to be the newly revealed noisy
tree-node values. Each uses fresh noise, and only nodes whose intervals contain
the changed update have different aggregates; published prefix sums are
post-processing of these values. For a lower-triangular mechanism, take as the
core
\[
z_t=\langle b_t,x_{1:t}\rangle+\eta_t,
\qquad
\supp(b_t)\subseteq\{1,\ldots,t\},
\]
with fresh independent $\eta_t$. Common updates contribute equally to the two
neighboring executions, and any output
$y_t=\sum_{s\le t}A_{t,s}z_s$ is post-processing. Thus both mechanisms pass
Checklist~1; details appear in Appendix~\ref{sec:prefixsum_factor}.

\ignore{
\paragraph{Mini-Audit 1: Tree / Prefix-Sum.}
For the tree mechanism, the certified core is the stream of newly revealed
noisy tree-node values.  Each such node value depends only on the updates in
its interval and on its own fresh noise.  If two update streams differ in one
coordinate, then only the tree nodes whose intervals contain that coordinate
see different sums; all other contributions are common to the two executions.
Thus the one-round privacy cost is determined only by the changed update's
contribution to the affected node, not by the surrounding common updates.
The published prefix sum at time~$t$ is a deterministic function of the noisy
node values revealed so far, hence post-processing.  Therefore the usual
non-adaptive Hamming-neighbor DP analysis for tree prefix sums carries over to
adaptive update choices by Theorem~\ref{thm:cert}.

\paragraph{Mini-Audit 2: Lower-Triangular (Prefix) Linear Mechanisms.}
Many continual primitives can be written as a certified core followed by
linear post-processing.  Concretely, suppose the core releases noisy
statistics
\[
z_t \;=\; \langle b_t, x_{1:t}\rangle + \eta_t,
\qquad \supp(b_t)\subseteq\{1,\dots,t\},
\]
where the noises $\eta_t$ are fresh and independent across~$t$.  Each core
release depends only on the input prefix $x_{1:t}$ and its own fresh noise.
If two update streams differ in one coordinate, then the two linear statistics
differ only through that coordinate's contribution; all common updates
contribute equally to both executions.  Thus the one-round privacy cost is
stable under common context.

Any published output of the form
$y_t=\sum_{s\le t} A_{t,s}\,z_s$ is a deterministic function of
$(z_1,\dots,z_t)$ and therefore post-processing.  Hence any non-adaptive
Hamming-neighbor DP bound for the core stream $(z_t)$ remains valid under
adaptive interaction, and the same holds for the released stream $(y_t)$ by
Theorem~\ref{thm:cert}.  For details, see Appendix~\ref{sec:prefixsum_factor}.
}

\paragraph{Scope Boundary.}
Theorem~\ref{thm:cert} is a sufficient certification condition, not a
characterization of adaptive privacy.  Fresh randomness alone is not enough;
Appendix~\ref{subsec:ind_decomp_counterexample} gives a two-round separation.
The missing ingredient is the stable-context condition in Checklist~1, which
prevents adaptive common inputs from changing the leakage profile of an
earlier Hamming discrepancy.  Conversely, failure of Checklist~1 does not
imply that adaptive privacy is false; it means that a separate
interaction-aware proof is needed, as for \randbin in
Appendix~\ref{sec:randbin}.

\subsection{Putting the Ingredients Together}
\label{sec:compose}

We view \randbin as producing a visible buffering transcript and a hidden
per-bin update stream consumed by the downstream continual primitive.
By Lemma~\ref{lem:randbin_wrapper}, under a single input edit the two coupled
interface streams have Hamming distance at most two.

Let the downstream primitive pass Checklist~1 and be non-adaptively
$(\widehat\varepsilon_a,\widehat\delta_a)$-DP for ordinary distance-one
Hamming neighbors. Theorem~\ref{thm:cert} gives the same guarantee under
adaptive interaction, and adaptive group privacy
(Fact~\ref{fact:multi_hop_DP}) gives the distance-two guarantee
$
\left(
2\widehat\varepsilon_a,\,
(1+e^{\widehat\varepsilon_a})\widehat\delta_a
\right).
$
Accordingly, we set
$
\widehat\varepsilon_a=\frac{\varepsilon_a}{2},
\widehat\delta_a=
\frac{\delta_a}{1+e^{\varepsilon_a/2}},
$
so the total downstream cost is at most
$(\varepsilon_a,\delta_a)$.

Finally, composing \randbin's
$(\varepsilon_b,\delta_b)$-NPDP guarantee with this downstream guarantee
using the modular composition theorem in
Appendix~\ref{sec:list_composition} yields
$
(\varepsilon_b+\varepsilon_a,\,
  \delta_b+\delta_a)\text{-DP}
$
for single-edit neighboring event streams, even under adaptive interaction.
This proves Theorem~\ref{thm:main_pipeline_overview}.

\paragraph{Takeaway.}
Once the wrapper and certification conditions hold, the downstream primitive
can be reused after calibrating its distance-one budget for the two-hop
interface.

\section{Instantiation: Continual DP-SGD Pipeline}
\label{sec:instantiation}

This section instantiates the buffering--aggregation recipe
(Section~\ref{sec:ingredients}) with streaming SGD
(Algorithm~\ref{alg:dp_sgd_edit}). Each arriving user event $f_t$ is buffered
by \randbin. When a bin is emitted, we form a clipped mini-batch gradient and
pass a stepsize-scaled update to a continual-release DP aggregator. 
The aggregator \PStreamSum is either (i)~the standard tree-based
prefix-sum mechanism~\cite{DBLP:conf/icalp/ChanSS10}, or (ii)~the
lower-triangular matrix factorization prefix-sum
mechanism~\cite{DBLP:conf/icml/FichtenbergerHU23}. Their formal DP and
error/consistency guarantees are summarized in
Appendix~\ref{sec:prefix-sum}. As in Fig.~\ref{fig:pipeline}, the adversary
observes the released models and release timing (i.e., whether $B_t=\bot$ at
each step), while bin contents and internal states are hidden.

\begin{algorithm}[t]
\caption{Continual DP-SGD Under Single-Edit User Streams}
\label{alg:dp_sgd_edit}
\footnotesize
\begin{algorithmic}[1]
\STATE \textbf{Input:} Privacy target $(\varepsilon,\delta)$; public horizon
$T_{\max}$, i.e., an upper bound on both the number of time indices and emitted
bins; allocation $(\varepsilon_b,\delta_b)$ to buffering and
$(\varepsilon_a,\delta_a)$ to the total downstream cost, with
$\varepsilon_b+\varepsilon_a\leq\varepsilon$ and
$\delta_b+\delta_a\leq\delta$; initial model $w^{(0)}$; clip norm $G$;
stepsizes $(\eta_k)_{k\geq1}$
\STATE \textbf{Output:} Released trajectory
$(w^{(t)})_{t=0}^{T_{\max}}$ and release timing
$\bigl(\mathbf{1}\{B_t\neq\bot\}\bigr)_{t=1}^{T_{\max}}$
\STATE Set $U=\Theta(\varepsilon_b^{-1}\log(1/\delta_b))$ and initialize
\randbin with level $U$
\STATE Set
$\widehat\varepsilon_a\gets\varepsilon_a/2$,
$\widehat\delta_a\gets
\delta_a/(1+e^{\varepsilon_a/2})$, and
$\Delta_u\gets 2G\max_{1\leq j\leq T_{\max}}\eta_j$
\STATE Initialize \PStreamSum with one-hop budget
$(\widehat\varepsilon_a,\widehat\delta_a)$ and
$\ell_2$-sensitivity $\Delta_u$
\STATE $k\gets 0$
\FOR{$t=1,2,\dots,T_{\max}$}
  \STATE Receive user event $f_t$
  \STATE $B_t \gets \randbin(f_t)$
  \IF{$B_t=\bot$}
    \STATE $w^{(t)} \gets w^{(t-1)}$
  \ELSE
    \STATE $k\gets k+1$
    \STATE $g_k \gets \frac{1}{|B_t|}\sum_{f\in B_t}
    \textsf{Clip}_G\!\Bigl(\nabla f\bigl(w^{(t-1)}\bigr)\Bigr)$
    \STATE $u_k \gets \eta_k\,g_k$
    \COMMENT{stepsize-scaled update}
    \STATE $S_k \gets \PStreamSum(u_k)$
    \COMMENT{private prefix sum}
    \STATE $w^{(t)} \gets w^{(0)}-S_k$
    \COMMENT{post-processing of $S_k$}
  \ENDIF
\ENDFOR
\end{algorithmic}
\end{algorithm}

\paragraph{Fixed vs.\ Tuned Parameters.}
The privacy allocation determines the buffering level
$U(\varepsilon_b,\delta_b)$ and reserves
$(\varepsilon_a,\delta_a)$ for the total downstream cost. Since a single edit
can alter at most two adjacent update positions, \PStreamSum is run with the
one-hop parameters
\[
\widehat\varepsilon_a=\frac{\varepsilon_a}{2},
\qquad
\widehat\delta_a=
\frac{\delta_a}{1+e^{\varepsilon_a/2}}.
\]
Given a learning task, the clipping norm $G$ and stepsizes $(\eta_k)$ are tuned
for utility; once chosen, they determine the conservative one-hop update
sensitivity
\[
\Delta_u=2G\max_{1\leq k\leq T_{\max}}\eta_k.
\]
The release schedule is induced by \randbin, with
$|B_t|\in[U,2U]$ whenever $B_t\neq\bot$. The plug-in aggregator must be
non-adaptively
$(\widehat\varepsilon_a,\widehat\delta_a)$-DP for ordinary distance-one
Hamming neighbors at sensitivity $\Delta_u$ and pass Checklist~1.
Theorem~\ref{thm:cert} lifts this guarantee to adaptive interaction, and
adaptive group privacy (Fact~\ref{fact:multi_hop_DP}) bounds the resulting
two-hop cost by $(\varepsilon_a,\delta_a)$.

\paragraph{Privacy-Budget Split.}
A symmetric $50/50$ split between buffering and the total downstream cost is a
simple default. More generally, any fixed split
$\varepsilon_b=c\varepsilon$ and
$\varepsilon_a=(1-c)\varepsilon$, with $c\in(0,1)$, changes the asymptotic
guarantees only by constant factors, while assigning constant fractions of
$\delta$ changes $\log(1/\delta_b)$ only by an additive $O(1)$ term. The best
finite-sample split can be selected by sweeping $c$ on public or separately
privatized validation data and measuring the latency--accuracy tradeoff.

\paragraph{Mini-Batch View.}
When a bin is emitted, the gradient estimate is the clipped mini-batch average
\[
\frac{1}{|B_t|}
\sum_{f\in B_t}\textsf{Clip}_G\!\bigl(\nabla f(w)\bigr).
\]
Thus $\|g_k\|_2\leq G$. Moreover, \randbin enforces
$|B_t|\in[U,2U]$, so the batch size varies by at most a factor of $2$.

\paragraph{Implementation Notes and Measured Latency.}
The server maintains (a) the current released model $w^{(t)}$, (b) the FIFO
buffer state of \randbin, and (c) the internal state of \PStreamSum
(e.g., its noisy tree or factorization state). The model is updated only at
emission times and is held fixed otherwise, yielding a piecewise-constant
released trajectory. To quantify privacy-implied latency, we log the backlog
$Q_t$ of pending events after time $t$. For an event $f_i$, its inclusion delay
is
\[
D(i):=\min\{t\geq i:\ f_i\in B_t\}-i,
\]
which can be computed from the same FIFO queue evolution underlying $(Q_t)$.

\paragraph{Other Optimizers.}
The same template applies to streaming optimizers whose per-bin updates can be
encoded as a clipped, sensitivity-bounded stream for a certified
continual-release aggregator: replace the definitions of $g_k$ and $u_k$ while
keeping \randbin as the edit-to-Hamming wrapper and \PStreamSum as the
aggregation primitive. Variants with additional state, such as momentum or
adaptive preconditioning, require a separate audit of
Theorem~\ref{thm:cert}'s conditions; falling outside the theorem does not by
itself imply failure of adaptive privacy. 

%
%
%

\section{Conclusion}
\label{sec:conclusion}

We studied continual learning under adaptive interaction when privacy is defined over \emph{edit-style} (single-insertion/deletion)
user streams, where naive continual-release analyses do not directly apply.
Our main recipe composes a randomized buffering wrapper \randbin, which reduces single-edit neighbors to a Hamming-style
per-bin update interface with explicit backlog guarantees, with a certified continual-release DP primitive (tree-based
prefix sums in our experiments), and then invokes modular composition to obtain end-to-end $(\varepsilon,\delta)$-DP.


\newpage

\bibliographystyle{alpha}
\bibliography{f-DP,admm,oram}

\appendix
\section*{Appendix}

\section{Detailed Preliminaries}
\label{sec:prelim}

We formalize the setting as outlined in the introduction.
We first describe the concept of
an \emph{abstract} interactive mechanism,
from which we will derive other forms of interactive mechanisms later.

\paragraph{Finite-Precision Convention.}
The simulation framework below is stated for finite message and seed spaces.
For every public finite horizon $T$, we interpret all communicated inputs,
states, and released numerical quantities using fixed finite-precision
encodings. Whenever we write a continuous or unbounded-noise mechanism, the
formal mechanism is obtained by applying a fixed, data-independent
quantization and clipping map to the ideal noisy release. The resulting
finite-output kernels admit finite-seed representations. By data processing,
all stated DP, RDP, and zCDP upper bounds for the ideal releases continue to
hold. We suppress this representation layer throughout.

\paragraph{Abstract Interactive Mechanism.}
We use $\mcal{U}$
to denote the collection of all valid messages,
where $\{\bot, \halt\} \subseteq \mcal{U}$.
We use $\mcal{R}$ to denote the collection of random seeds.
In this work, we focus on the case that both $\mcal{U}$ and $\mcal{R}$ are finite.

An interactive mechanism~$\algM$ is specified
by a distribution $R_\algM$ on $\mcal{R}$ and a transition function
 $\algM: \mcal{R} \times \mcal{U}^* \to \mcal{U}$,
where
$\mcal{U}^*:=\cup_{i \in {\mathbb{N}}} \, \mcal{U}^i$,
with the convention that $\mcal{U}^0$ is
the identity under Cartesian product, i.e., $\mcal{R} \times \mcal{U}^0 = \mcal{R}$.
Note that we sometimes overload the notation such that $\algM$
denotes both the mechanism and the corresponding transition function.
Here is the functionality of an interactive mechanism.

\begin{tcolorbox}
\begin{enumerate}[leftmargin=1pt]
\item The mechanism $\algM$ samples a (secret) random seed $r \in \mcal{R}$
according to the distribution~$R_\algM$.

\item At time step $t = 0$, the mechanism outputs $\algM(r) \in \mcal{U}$.
If the mechanism is supposed to wait for the first input message,
we use the convention that $\algM(r) = \bot$.

\item At time step $t \geq 1$, suppose a message
$x_t \in \mcal{U}$ (which may be chosen depending on the history before time~$t$) is sent to the
mechanism.  We use $x[1..t] 
:= (x_1, x_2, \ldots, x_t) \in \mcal{U}^t$
to denote the messages received by the mechanism from the previous time steps
up to~$t$.

Then, at this time step, the mechanism outputs $\algM(r; x[1..t]) \in \mcal{U}$.

\item If, for some $t$, $\algM(r; x[1..t]) = \halt$,
then the mechanism halts at time step~$t$.

\end{enumerate}
\end{tcolorbox}

\paragraph{Instantiation: Mapping the Abstract Interaction to Our Transcript.}
The abstract interaction formalism above subsumes the continual-learning setting of Section~\ref{sec:setting}.
Concretely, we instantiate the mechanism's input messages as the user-side stream $x_t:=f_t$ (or, when modeling a
subcomponent, $x_t$ may represent an emitted bin $B_t$ or a per-bin update $g_t$), and we instantiate the mechanism’s
outputs as the released observations $y_t:=(v_t,w^{(t)})$, so that the induced transcript is
$\mathsf{tr}_{\le T}=((v_t,w^{(t)}))_{t=0}^T$.
The adversary's view function $\nu_{\algA}$ reveals exactly the observable portion of each release---the timing bit and
model snapshot---matching the trajectory-and-timing threat model in Section~\ref{sec:setting}.

\noindent \textbf{Bounded Termination.}
In addition to finite $\mcal{R}$ and $\mcal{U}$,
we consider mechanisms that always terminate in a bounded number of steps.
Specifically, for each mechanism $\mcal{M}$,
there exists $T_{\algM} \in \Z$ such that $\algM$ always terminates in
at most $T_{\algM}$ steps.  This is without loss of generality,
because any privacy notion defined for mechanisms with bounded termination
can be extended naturally to unbounded mechanisms.  We simply require that, for any $T \in \Z_+$, the truncated mechanism obtained by running the (unbounded) mechanism for $T$ steps will satisfy the privacy notion defined for bounded mechanisms.
This matches the ``for every horizon $T$'' convention in Section~\ref{sec:setting}.

\noindent \emph{Mechanism View.} The view observed by a mechanism $\algM$ up to time step~$t$
consists of its generated random seed~$r$ and
its received input messages~$x[1..t]$ up to time~$t$.

\begin{remark}
Since $\algM$ can recover its own output sequence from its random seed and input sequence, it suffices to include the latter two in its view.
\end{remark}

\noindent \textbf{Similarity Between Mechanisms.}
To formalize privacy notions later,
we need a way to quantify the similarity between two mechanisms.
Since we consider randomized mechanisms,
we will use power functions to compare how different two distributions
are.  It is known that power functions are general enough
to capture all divergences satisfying the data processing inequality.
Recall that in this work, we focus on finite sample spaces.

\begin{definition}[Data Processing Inequality]
\label{defn:dpi}
A divergence measure is a function 
$\D$ that takes two distributions $P$ and $Q$ on the same space such that
$\D(P \| Q) \geq 0$, where equality holds \emph{iff} the distributions $P = Q$ are identical.

A divergence $\D$ satisfies the \emph{data processing inequality} if, for any stochastic transformation (or channel) $T$ that maps the original space to another space, the following inequality holds:

$$\D(T(P) \| T(Q)) \leq \D(P \| Q),$$

where $T(P)$ and $T(Q)$
are the resulting distributions after applying the transformation $T$
to $P$ and $Q$, respectively.

\end{definition}

\begin{definition}[Power Function as Fractional Knapsack Problem~\cite{kadane1968discrete}]
\label{defn:power}
Suppose $P$ and $Q$ are distributions on the same finite sample space~$\Omega$,
i.e., $P$ and $Q$ are vectors in $\R^\Omega_{\geq 0}$ whose coordinates sum to 1.
The power function $\Pow(P \| Q): [0,1] \rightarrow [0,1]$ can be
defined in terms of the \emph{fractional knapsack problem}.

Given a collection $\Omega$ of items,
suppose $\omega \in \Omega$ has weight $P(\omega)$ and value $Q(\omega)$.
Then, given $\alpha \in [0,1]$,
$\Pow(P \| Q)(\alpha)$ is the maximum value attained with weight capacity
constraint~$\alpha$, where items may be taken fractionally.
\end{definition}

\noindent \emph{Intuition.}
When two distributions $P$ and $Q$
are the same, it is clear that given any capacity constraint~$\alpha$,
the maximum reward is also $\alpha$; this means the power function
is exactly the identity function.  However,
if the two distributions are very different,
this means that there are items whose reward-to-weight ratios are large;
hence, in this case, the power function can initially grow faster than
the identity function.

\noindent \textbf{Partial Order on Power Functions.}
Pointwise comparison naturally induces a partial order on
power functions.  We denote $f_1 \leq f_2$ if
for all $\alpha \in [0,1]$, $f_1(\alpha) \leq f_2(\alpha)$,
where a larger $f_2$ indicates that the two distributions
are more different.

\begin{fact}[Properties of Power Functions]
For any two distributions $P$ and $Q$ on the same sample space,
$\Pow(P \| Q) \geq \mathsf{Id}$, 
where $\mathsf{Id}: [0,1] \rightarrow [0,1]$ is the identity function,
and equality holds \emph{iff} the distributions $P = Q$ are identical.

Moreover, power functions satisfy the \emph{data processing inequality},
i.e., for any stochastic transformation (or channel) $T$ that maps the original space to another space, the following inequality holds:

$$\Pow(T(P) \| T(Q)) \leq \Pow(P \| Q).$$

\end{fact}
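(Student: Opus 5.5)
The plan is to work directly with the fractional knapsack formulation of Definition~\ref{defn:power}. For a finite space $\Omega$, I will write $\Pow(P\|Q)(\alpha)$ as the value of the linear program
\[
\max\Bigl\{\textstyle\sum_{\omega} c_\omega Q(\omega)\;:\; c\in[0,1]^\Omega,\ \sum_{\omega} c_\omega P(\omega)\le \alpha\Bigr\}.
\]
Here $c_\omega$ is the fraction of item $\omega$ that is taken. Each of the three claims will follow from exhibiting a suitable feasible $c$ or from transporting feasible solutions between spaces. No earlier result of the paper is needed.

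\textbf{Lower bound $\Pow(P\|Q)\ge\mathsf{Id}$.} For $\alpha\in[0,1]$, take the constant selection $c_\omega=\alpha$ for every $\omega$. Its weight is $\alpha\sum_\omega P(\omega)=\alpha$, so it is feasible, and its value is $\alpha\sum_\omega Q(\omega)=\alpha$.

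\textbf{Equality iff $P=Q$.}
\begin{itemize}
\item If $P=Q$, the value of any feasible $c$ equals its weight, which is at most $\alpha$. Combined with the lower bound, this gives $\Pow(P\|Q)=\mathsf{Id}$.
\item If $P\neq Q$, both vectors sum to $1$, so the set $A:=\{\omega: Q(\omega)>P(\omega)\}$ is nonempty and satisfies $Q(A)>P(A)$. Set $\alpha:=P(A)$ and choose $c=\mathbf 1_A$. This selection is feasible and attains value $Q(A)>\alpha$, so $\Pow(P\|Q)(\alpha)>\alpha$.
\item The degenerate case $P(A)=0$ is covered by the same argument at $\alpha=0$.
\end{itemize}
This argument also re-derives the claim that $\Pow$ is a divergence in the sense of Definition~\ref{defn:dpi}, with $\mathsf{Id}$ playing the role of zero.

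\textbf{Data processing inequality.} Let $T$ be a channel from $\Omega$ to a finite space $\Omega'$ with kernel $K(\omega'\mid\omega)$, so that $T(P)(\omega')=\sum_\omega P(\omega)K(\omega'\mid\omega)$. Given any $c'\in[0,1]^{\Omega'}$ feasible for $\Pow(T(P)\|T(Q))(\alpha)$, I pull it back by
\[
c_\omega:=\sum_{\omega'}K(\omega'\mid\omega)\,c'_{\omega'}.
\]
This $c_\omega$ lies in $[0,1]$ because it is a convex combination of entries of $c'$. Exchanging the finite sums shows two identities:
\begin{itemize}
\item $\sum_\omega c_\omega P(\omega)=\sum_{\omega'}c'_{\omega'}\,T(P)(\omega')\le\alpha$, so $c$ is feasible;
\item $\sum_\omega c_\omega Q(\omega)=\sum_{\omega'}c'_{\omega'}\,T(Q)(\omega')$, so $c$ has the same value as $c'$.
\end{itemize}
Hence every value achievable after applying $T$ is achievable before it, and taking the maximum gives $\Pow(T(P)\|T(Q))(\alpha)\le\Pow(P\|Q)(\alpha)$ pointwise.

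I expect no real obstacle. The only points needing care are the boundary case $P(A)=0$ in the strictness argument and checking that pulled-back selections stay in $[0,1]$. The latter is exactly where fractional (randomized) selections are essential, because an integral selection would not survive the pullback.
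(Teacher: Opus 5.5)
Your proof is correct. The paper gives no proof of this Fact; it imports it as a standard property from the tradeoff-function literature, so there is no in-paper argument to match. What you give is a clean, self-contained proof in the paper's own knapsack language.

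It is the standard hypothesis-testing argument written as a linear program. A fractional selection $c\in[0,1]^\Omega$ is a randomized rejection rule, and the constant selection $c\equiv\alpha$ is the trivial level-$\alpha$ test. Your set $A=\{Q>P\}$ is the Neyman--Pearson witness, giving $\Pow(P\|Q)(P(A))-P(A)\ge \mathrm{TV}(P,Q)>0$. The pullback $c_\omega=\sum_{\omega'}K(\omega'\mid\omega)c'_{\omega'}$ precomposes a test with the channel, which preserves both size and power.

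Restricting the target $\Omega'$ to a finite space is consistent with the paper, which defines $\Pow$ only on finite sample spaces. The test-based phrasing, with $c$ a measurable map into $[0,1]$ and sums replaced by integrals, would carry the same argument to general measurable spaces if needed.
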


\begin{figure}[!ht]
\centering 
\includegraphics[width=0.3\textwidth]{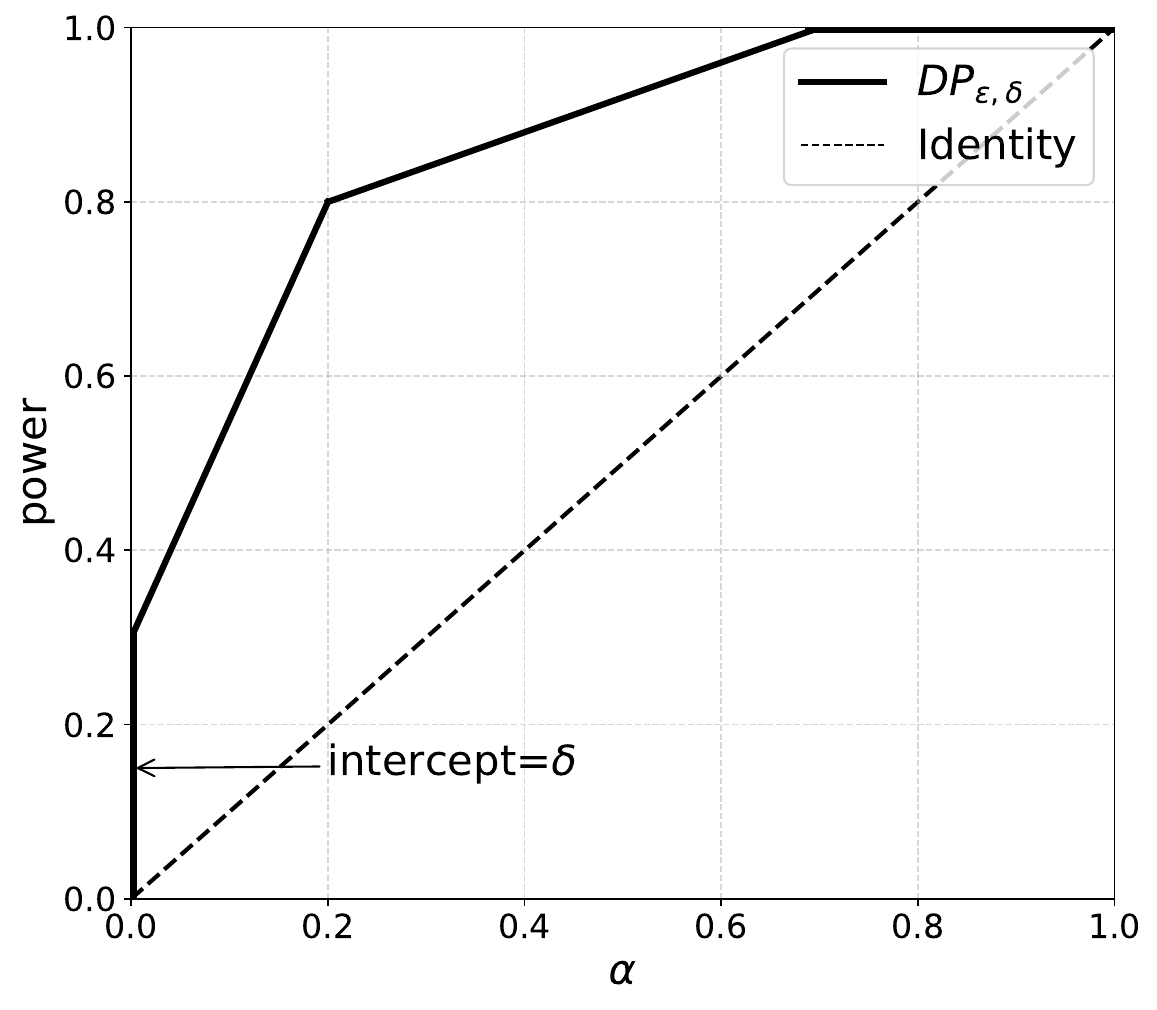}
\caption{Power function $\DP_{\varepsilon, \delta}$ for $\varepsilon = 1$ and $\delta = 0.3$.}
\label{fig:power_function}
\end{figure}

\noindent \textbf{Differential Privacy in Terms of Power Function.}
Given $\varepsilon \geq 0$ and $0 \leq \delta \leq 1$,
the power function $\DP_{\varepsilon, \delta}: [0, 1] \rightarrow [0,1]$
can be described as follows (see Figure~\ref{fig:power_function}):
\begin{enumerate}
\item In the $xy$-plane, starting at $(0, \delta)$, 
the function increases linearly with slope $e^\varepsilon$
until it touches the line $y = 1 - x$ at $(\frac{1-\delta}{1+ e^\varepsilon}, 
\frac{e^\varepsilon + \delta}{1+ e^\varepsilon})$.

\item After touching the line $y=1-x$,
the slope changes to $e^{-\varepsilon}$, until the line segment
reaches $(1 - \delta, 1)$, where the curve remains horizontally
till the end $(1,1)$ is reached.
\end{enumerate}
In other words,

$
\DP_{\varepsilon, \delta}(\alpha) = 
\begin{cases}
\delta + e^\varepsilon \cdot \alpha, & \text{for } 0 \leq \alpha \leq \frac{1-\delta}{1+ e^\varepsilon};\\
1 - e^{-\varepsilon} (1 - \delta) + e^{-\varepsilon} \alpha, & \text{for } \frac{1-\delta}{1+ e^\varepsilon} < \alpha \leq 1 - \delta; \\
1, & \text{for } 1 - \delta < \alpha \leq 1.
\end{cases}
$

Below are some facts on power functions that we will use,
which are  usually equivalently stated
in the literature
in terms of the tradeoff function
$\mathsf{T}(P \| Q)(\alpha) := 1 - \Pow(P\| Q)(\alpha)$.

\begin{fact}[DP in Terms of Power Function~\cite{dong2022gaussian}]
Given two distributions $P$ and $Q$ on the same sample space~$\Omega$,
$\Pow(P \| Q) \leq \DP_{\varepsilon, \delta}$ is equivalent to the statement that
for all $S \subseteq \Omega$,
$P(S) \leq e^\varepsilon \cdot Q(S) + \delta$ and 
$Q(S) \leq e^\varepsilon \cdot P(S) + \delta$.
\end{fact}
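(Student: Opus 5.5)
The plan is to identify the power function with a supremum over randomized tests and then compare it line by line with the piecewise-linear curve $\DP_{\varepsilon,\delta}$. Since items may be taken fractionally, a feasible knapsack solution is a test $\phi:\Omega\to[0,1]$ with weight $P(\phi):=\sum_\omega \phi(\omega)P(\omega)$ and value $Q(\phi):=\sum_\omega\phi(\omega)Q(\omega)$. The greedy optimum uses exactly capacity $\alpha$, since $P(\phi)$ can be increased continuously without decreasing the value. Hence
\[
\Pow(P\|Q)(\alpha)=\max\{Q(\phi): P(\phi)=\alpha\}.
\]
I also observe that $\DP_{\varepsilon,\delta}$ is the pointwise minimum of three lines:
\[
\ell_1(\alpha)=\delta+e^\varepsilon\alpha,\qquad
\ell_2(\alpha)=1-e^{-\varepsilon}(1-\delta)+e^{-\varepsilon}\alpha,\qquad
\ell_3(\alpha)=1 .
\]
To check this, note that $\ell_1$ and $\ell_2$ meet at $\alpha=\frac{1-\delta}{1+e^\varepsilon}$, and $\ell_2$ and $\ell_3$ meet at $\alpha=1-\delta$. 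The slopes $e^\varepsilon\ge e^{-\varepsilon}\ge 0$ are decreasing, so the minimum of the three lines is concave and its breakpoints are exactly those in the stated formula.

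($\Leftarrow$) Assume $Q(S)\le e^\varepsilon P(S)+\delta$ and $P(S)\le e^\varepsilon Q(S)+\delta$ for all $S$. First I lift these set inequalities to tests by the layer-cake identity $\phi=\int_0^1 \mathbf 1\{\phi>s\}\,ds$, which is linear in both $P$ and $Q$. Integrating $Q(S_s)\le e^\varepsilon P(S_s)+\delta$ over $s$, with $S_s:=\{\phi>s\}$, gives $Q(\phi)\le \ell_1(P(\phi))$.

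Next I apply the second inequality to the complementary test $1-\phi$. This gives $1-\alpha\le e^\varepsilon(1-Q(\phi))+\delta$, which rearranges to $Q(\phi)\le \ell_2(\alpha)$. Trivially $Q(\phi)\le 1=\ell_3(\alpha)$. Taking the maximum over $\phi$ with $P(\phi)=\alpha$ then yields $\Pow(P\|Q)(\alpha)\le\min_i\ell_i(\alpha)=\DP_{\varepsilon,\delta}(\alpha)$.

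($\Rightarrow$) Assume $\Pow(P\|Q)\le\DP_{\varepsilon,\delta}$ and fix a set $S$. Using the test $\mathbf 1_S$ at $\alpha=P(S)$ gives $Q(S)\le\DP_{\varepsilon,\delta}(P(S))\le \ell_1(P(S))$, which is the first inequality. Using $\mathbf 1_{\Omega\setminus S}$ at $\alpha=1-P(S)$ gives
\[
1-Q(S)\le \ell_2(1-P(S))=1-e^{-\varepsilon}(1-\delta)+e^{-\varepsilon}(1-P(S)).
\]
This rearranges to $Q(S)\ge e^{-\varepsilon}(P(S)-\delta)$, that is, $P(S)\le e^\varepsilon Q(S)+\delta$.

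The step I expect to need the most care is the bookkeeping identifying $\DP_{\varepsilon,\delta}$ with the minimum of $\ell_1,\ell_2,\ell_3$. In particular, $\ell_2\ge\ell_1$ must hold before the first breakpoint, so that the lower envelope agrees with the given piecewise formula. Degenerate cases such as $\delta=1$, where the breakpoints collapse, also need checking. Everything else is linearity together with the complement trick.
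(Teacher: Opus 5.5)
Your proof is correct, apart from one cosmetic slip: you call $Q(S)\le e^\varepsilon P(S)+\delta$ the ``first'' inequality, but it is the second one listed in the statement. The paper does not prove this fact; it only cites Dong et al. Your argument is the standard one behind that citation, rewritten from tradeoff functions to power functions: express $\DP_{\varepsilon,\delta}$ as the lower envelope of $\delta+e^\varepsilon\alpha$, $1-e^{-\varepsilon}(1-\delta-\alpha)$ and $1$, lift the set inequalities to randomized tests, and use the complementary test to encode the reverse inequality.
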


\begin{fact}(\emph{Triangle Inequality for Power Functions})[Theorem 2.14 in~\cite{dong2022gaussian}]
\label{fact:triangle_power}
Suppose $X$, $Y$, $Z$ are
distributions on the same sample space 
such that $\Pow(X \| Y ) \leq g_1$ and $\Pow(Y \| Z) \leq g_2$. 
Then,
$\Pow(X \| Z) \leq g_2 \circ g_1$, where the composition is defined as $(g_2 \circ g_1)(x) := g_2(g_1(x))$.
In other words, $\Pow(X \| Z) \leq \Pow(Y \| Z) \circ \Pow(X \| Y )$.
\end{fact}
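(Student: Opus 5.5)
The plan is to rewrite the fractional knapsack definition of $\Pow$ (Definition~\ref{defn:power}) in terms of randomized tests, and then chain two applications of that definition through the intermediate distribution $Y$. For distributions $P,Q$ on a finite $\Omega$, a fractional selection of items is a vector $\phi\in[0,1]^\Omega$. The weight of $\phi$ is $\sum_\omega \phi(\omega)P(\omega)=\mathbb{E}_P[\phi]$ and its value is $\mathbb{E}_Q[\phi]$. Hence
\[
\Pow(P\|Q)(\alpha)=\max\bigl\{\mathbb{E}_Q[\phi] : \phi\in[0,1]^\Omega,\ \mathbb{E}_P[\phi]\le \alpha\bigr\},
\]
which is a maximum of a linear function over a compact polytope, so the maximum is attained. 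Two consequences follow immediately.
\begin{itemize}
\item For every test $\phi$, $\mathbb{E}_Q[\phi]\le \Pow(P\|Q)(\mathbb{E}_P[\phi])$, because $\phi$ is feasible at capacity $\mathbb{E}_P[\phi]$.
\item $\Pow(P\|Q)$ is nondecreasing in $\alpha$, because enlarging the capacity only enlarges the feasible set.
\end{itemize}

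Now fix $\alpha\in[0,1]$ and let $\phi^*$ be an optimal test for $\Pow(X\|Z)(\alpha)$, so $\mathbb{E}_X[\phi^*]\le\alpha$ and $\mathbb{E}_Z[\phi^*]=\Pow(X\|Z)(\alpha)$.

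\begin{itemize}
\item \emph{First hop.} Apply the first consequence to the pair $(X,Y)$, then use monotonicity of $\Pow(X\|Y)$ and the hypothesis:
\[
\beta:=\mathbb{E}_Y[\phi^*]\le \Pow(X\|Y)(\mathbb{E}_X[\phi^*])\le \Pow(X\|Y)(\alpha)\le g_1(\alpha).
\]
\item \emph{Second hop.} Apply the same consequence to the pair $(Y,Z)$:
\[
\Pow(X\|Z)(\alpha)=\mathbb{E}_Z[\phi^*]\le \Pow(Y\|Z)(\beta).
\]
Monotonicity of $\Pow(Y\|Z)$ and $\beta\le g_1(\alpha)$ give $\Pow(Y\|Z)(\beta)\le \Pow(Y\|Z)(g_1(\alpha))$. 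The hypothesis on the second pair then gives $\Pow(Y\|Z)(g_1(\alpha))\le g_2(g_1(\alpha))$.
\end{itemize}

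This argument only needs monotonicity of the true power functions, not of the bounds $g_1,g_2$. I also read $g_1$ as mapping $[0,1]$ into $[0,1]$, so that the composition $g_2\circ g_1$ is defined; if it could exceed $1$, one caps at $1$. This step is harmless because every power function is bounded by $1$. Taking $g_1=\Pow(X\|Y)$ and $g_2=\Pow(Y\|Z)$ yields the stated special form.

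The main obstacle is making sure the chaining through $Y$ is legitimate. The test $\phi^*$ is optimal for $(X,Z)$ but generally not for $(X,Y)$ or $(Y,Z)$, so we may only use it as a feasible witness. That is exactly why the argument needs the inequality $\mathbb{E}_Q[\phi]\le\Pow(P\|Q)(\mathbb{E}_P[\phi])$ for arbitrary tests, together with monotonicity of the power function. Both are immediate from the knapsack formulation once we identify fractional item choices with $[0,1]$-valued tests.
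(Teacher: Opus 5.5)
Your argument is correct: using an optimal test $\phi^*$ for $\Pow(X\|Z)(\alpha)$ only as a feasible witness for the pairs $(X,Y)$ and $(Y,Z)$, and relying on monotonicity of the true power functions, yields exactly $\Pow(X\|Z)(\alpha)\le g_2(g_1(\alpha))$. The paper gives no proof and simply cites Dong--Roth--Su, where the result appears in tradeoff-function form ($T=1-\Pow$) as group privacy and is proved by essentially the same test-chaining argument; your use of monotonicity of $\Pow(X\|Y)$ and $\Pow(Y\|Z)$ instead of $g_1,g_2$ is a harmless mild generalization, since any $[0,1]$-valued bounds suffice.
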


\noindent \textbf{Example.}
Consider $g = \DP_{\varepsilon, \delta}$.
Then, $g \circ g \leq \DP_{2 \varepsilon, \delta(1+ e^\varepsilon)}$.
As we shall see, this is relevant to the scenario
of \emph{multi-hop} neighbors -- \emph{aka} \emph{group} DP in the literature.

\begin{definition}[Supremum of Power Functions]
\label{defn:power_sup}
Given a collection $\mcal{S}$
of power functions, its supremum $\sup(\mcal{S}) = \sup_{f \in \mcal{S}} f$
is the least power function~$\widehat{f}$ such that for all $\f \in \mcal{S}$,
$f \leq \widehat{f}$.

Formally, for each $\alpha \in [0,1]$, $\widehat{f}(\alpha) :=$
$$\sup \{\lambda \cdot f_1(\alpha_1) + (1- \lambda) \cdot f_2(\alpha_2) \, | \,
f_1, f_2 \in \mcal{S}; \alpha_1, \alpha_2, \lambda \in [0,1]:
\alpha = \lambda \alpha_1 + (1-\lambda) \alpha_2\}.$$
\end{definition}

\begin{definition}[Tensor Product of Power Functions]
\label{defn:tensor}
Suppose $f$ and $g$ are power functions
that are attained by the corresponding distributions in
$\Pow(P_1 \| P_2) = f$ and $g = \Pow(Q_1 \| Q_2)$.
Then,
their corresponding tensor
product $f \otimes g = \Pow(P_1 \times Q_1 \| P_2 \times Q_2)$
is defined in terms of the corresponding
product distributions.
\end{definition}

\begin{fact}[Joint Convexity~\cite{dong2022gaussian}]
\label{fact:joint_convexity}
Suppose $\Lambda$ is an indexing set and $f$ is a power function such that
for each $\lambda \in \Lambda$, a pair $(P(\lambda), Q(\lambda))$ of distributions
satisfy $\Pow(P(\lambda) \| Q(\lambda)) \leq f$.

Then, for any distribution $L$ on $\Lambda$,
$\Pow(P(L) \| Q(L)) \leq f$.
\end{fact}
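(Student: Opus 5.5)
The plan is to work directly from the fractional-knapsack description of power functions in Definition~\ref{defn:power}. The proof then reduces to a linearity step, a pointwise application of the hypothesis, and Jensen's inequality for the concave function $f$. Here $P(L)$ and $Q(L)$ denote the mixtures $\sum_{\lambda} L(\lambda) P(\lambda)$ and $\sum_\lambda L(\lambda) Q(\lambda)$, both taken with the \emph{same} mixing distribution $L$ (an integral if $\Lambda$ is infinite). The sample space $\Omega$ stays finite throughout.

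\textbf{Step 1: a test-function form of the power function.} Allowing items to be taken fractionally means choosing a vector $\phi \in [0,1]^\Omega$. Hence
\[
\Pow(P\|Q)(\alpha)=\max\Bigl\{\textstyle\sum_\omega \phi(\omega)Q(\omega) \;:\; \phi\in[0,1]^\Omega,\ \sum_\omega \phi(\omega)P(\omega)\le\alpha\Bigr\}.
\]

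\textbf{Step 2: structural properties of $f$.} Any power function is nondecreasing and concave on $[0,1]$. Monotonicity holds because enlarging the capacity only enlarges the feasible set. Concavity holds because the greedy solution takes items in decreasing order of the ratio $Q(\omega)/P(\omega)$, so the optimum is piecewise linear with nonincreasing slopes. Equivalently, a convex combination of feasible $\phi$'s is feasible for the combined capacity. Since $f$ is itself a power function, as assumed in the statement, $f$ has both properties.

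\textbf{Step 3: the main chain of inequalities.} Fix $\alpha$ and an optimal $\phi$ for the mixture pair, so that $\sum_\lambda L(\lambda)\alpha_\lambda\le\alpha$, where $\alpha_\lambda:=\sum_\omega\phi(\omega)P(\lambda)(\omega)$. By linearity of the objective in the distribution,
\[
\Pow(P(L)\|Q(L))(\alpha)=\sum_\lambda L(\lambda)\sum_\omega \phi(\omega)Q(\lambda)(\omega).
\]
For each $\lambda$, the same $\phi$ is feasible for the pair $(P(\lambda),Q(\lambda))$ at capacity $\alpha_\lambda$. So the inner sum is at most $\Pow(P(\lambda)\|Q(\lambda))(\alpha_\lambda)\le f(\alpha_\lambda)$ by hypothesis. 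Jensen's inequality for the concave $f$ then gives $\sum_\lambda L(\lambda) f(\alpha_\lambda)\le f\bigl(\sum_\lambda L(\lambda)\alpha_\lambda\bigr)$, and monotonicity bounds this by $f(\alpha)$.

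\textbf{Main obstacle.} I expect the only real subtlety to be justifying concavity and monotonicity of $f$. If $f$ were an arbitrary function, rather than a power function or a supremum of power functions as in Definition~\ref{defn:power_sup}, the statement would be false. I would therefore prove the concavity claim of Step 2 carefully via the convex-combination-of-feasible-$\phi$ argument. For infinite $\Lambda$, I would also note measurability: choose $\phi$ independent of $\lambda$ and replace the sums over $\lambda$ by expectations, which leaves the argument unchanged.
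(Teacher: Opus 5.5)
Your proof is correct. The paper gives no proof of its own here; the statement is imported as a fact from \cite{dong2022gaussian}, so there is nothing in the text to match.

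\textbf{Your route.} You fix an optimal test $\phi$ for the mixture and split its level into component levels $\alpha_\lambda$ with $\sum_\lambda L(\lambda)\alpha_\lambda\le\alpha$. You bound each component by $f(\alpha_\lambda)$, then apply Jensen and monotonicity. This is complete, and it needs only that $f$ be concave and nondecreasing. Both properties hold for realized power functions. They also hold for the suprema of Definition~\ref{defn:power_sup}, which are concave envelopes by construction.

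\textbf{The usual dual route.} Write a concave, nondecreasing, upper semicontinuous $f$ as an envelope of lines: $f(\alpha)=\inf_{s\ge 0}(s\alpha+c_s)$ with $c_s:=\sup_{\beta\in[0,1]}(f(\beta)-s\beta)$. Then $\Pow(P\|Q)\le f$ is equivalent to the family of $(\varepsilon,\delta)$-type constraints $\sum_\omega\phi(\omega)Q(\omega)\le s\sum_\omega\phi(\omega)P(\omega)+c_s$, over all tests $\phi$ and slopes $s\ge0$. Each constraint is linear in $(P,Q)$ and therefore survives mixing.

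\textbf{Comparison.} Concavity enters at the same place in both proofs: Jensen in yours, the envelope representation in the dual. Your version stays within the knapsack language of Definition~\ref{defn:power} and does not need upper semicontinuity. The dual version makes the link to $(\varepsilon,\delta)$-DP explicit.

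\textbf{How the paper uses the fact.} It is invoked to pass from deterministic to randomized adversaries, where the seed $r_{\mathcal{A}}$ is part of the view. There the relevant object is the joint law of $(\lambda,X)$ with $X\sim P(\lambda)$. This is an $L$-mixture of disjointly supported laws with the same component power functions, so your mixture statement covers it.

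\textbf{Your concavity warning holds in that case too.} Take $f=\max(\DP_{0,0.1},\DP_{\ln 2,0})$ and a $\frac12$--$\frac12$ disjoint mixture of pairs realizing the two components. At $\alpha=0.1$ its power is at least $\frac12\cdot 0.1+\frac12\cdot 0.4=0.25$, which exceeds $f(0.1)=0.2$. This is why it matters that, in Case~2 of the proof of Theorem~\ref{thm:noise-ind-DP}, $g=\sup_\lambda g_\lambda$ is the concave supremum of Definition~\ref{defn:power_sup} rather than the pointwise maximum.
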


\begin{fact}[Composition Rule of Power Functions \cite{dong2022gaussian}]
\label{fact:composition}
Suppose $P$ and $P'$ are distributions on the sample space $\Omega$.
For each $\omega \in \Omega$,
$Q(\omega)$ and $Q'(\omega)$ are distributions on sample space $\Omega'$.
If $\Pow(P \| P') \leq f$ and $\Pow(Q(\omega) \| Q'(\omega)) \leq g$ for all $\omega \in \Omega$,
then
$
\Pow((P,Q(P)) \| (P',Q'(P')) ) \leq f \otimes g.
$
\end{fact}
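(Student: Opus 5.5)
The plan is to reduce the statement to the data processing inequality for power functions. I will represent both joint distributions as images of the product pair that defines $f\otimes g$ (Definition~\ref{defn:tensor}) under a single common channel. The tool is Blackwell's theorem for finite experiments, in the form used in~\cite{dong2022gaussian}: for distributions $A,A'$ on one finite space and $B,B'$ on another, $\Pow(A\|A')\leq\Pow(B\|B')$ holds iff there is a stochastic map $K$ with $K(B)=A$ and $K(B')=A'$.

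\textbf{Step 1 (fixing representatives).} Since $f$ and $g$ are power functions, fix pairs $(F_1,F_2)$ with $\Pow(F_1\|F_2)=f$ and $(G_1,G_2)$ with $\Pow(G_1\|G_2)=g$. These are exactly the pairs appearing in Definition~\ref{defn:tensor}, so $f\otimes g=\Pow(F_1\times G_1\|F_2\times G_2)$. This product is independent of the choice of representatives, which itself follows from Blackwell's theorem applied factorwise.

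\textbf{Step 2 (garbling each factor).} From $\Pow(P\|P')\leq f=\Pow(F_1\|F_2)$, Blackwell's theorem gives a channel $K$ with $K(F_1)=P$ and $K(F_2)=P'$. For each $\omega\in\Omega$, the hypothesis $\Pow(Q(\omega)\|Q'(\omega))\leq g$ gives a channel $K_\omega$ with $K_\omega(G_1)=Q(\omega)$ and $K_\omega(G_2)=Q'(\omega)$.

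\textbf{Step 3 (combining the channels).} Define $\Phi$ on pairs $(x,z)$ as follows: draw $\omega\sim K(x)$, then draw $y\sim K_\omega(z)$, and output $(\omega,y)$. Because $\Omega$ is finite, $\Phi$ is a valid Markov kernel, even though the second garbling depends on the intermediate output $\omega$.

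Now feed in $(x,z)\sim F_1\times G_1$. Since $x$ and $z$ are independent, conditioned on $\omega$ the variable $z$ is still distributed as $G_1$. Hence $\omega\sim P$ and $y\mid\omega\sim Q(\omega)$, so $\Phi(F_1\times G_1)=(P,Q(P))$. The same argument gives $\Phi(F_2\times G_2)=(P',Q'(P'))$. The data processing inequality for power functions then yields
\[
\Pow\bigl((P,Q(P))\,\|\,(P',Q'(P'))\bigr)\leq\Pow(F_1\times G_1\|F_2\times G_2)=f\otimes g.
\]

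\textbf{Main obstacle.} The delicate ingredient is Blackwell's theorem: the existence of the garbling channels from pointwise domination of power functions. In the finite setting, I would either cite it from~\cite{dong2022gaussian}, or prove it via the fractional-knapsack characterization (Definition~\ref{defn:power}) together with a convex-duality (LP) argument. The other point needing care is the conditional-independence step in Step 3, which is exactly where the product structure of $F_i\times G_i$ is used.
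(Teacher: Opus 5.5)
Your argument is correct. The paper gives no proof of this statement; it imports it as a fact from~\cite{dong2022gaussian}. Your route is the standard way this composition rule is established. You fix representatives $(F_1,F_2)$ and $(G_1,G_2)$ of $f$ and $g$, and use Blackwell's theorem to write $(P,P')$ and each $(Q(\omega),Q'(\omega))$ as garblings of them. You then build the $\omega$-indexed composite channel on the product pair and conclude by data processing. The conditional-independence step in Step~3, where the product structure of $F_i\times G_i$ gives $z\mid\omega\sim G_i$, is exactly the point that needs care, and you handle it correctly.

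One minor caveat concerns power functions attained only by non-finitely-supported pairs, such as the Gaussian shift profile $f_\mu$ that the paper later uses in its lower-triangular audit. For those you need the general-space form of Blackwell's theorem, as stated in~\cite{dong2022gaussian}, rather than the finite version you quote. Nothing else in your construction changes, because $\Omega$ and $\Omega'$ remain finite and the composite kernel is still well defined.
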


\ignore{
\hubert{Below really needed?}
\begin{lemma}[Coupling property of $f$-DP, \cite{DBLP:conf/stoc/Vadhan023}]
\label{lemma:coupling}
Let $f$ be a power function, and suppose we have random variables $X$, $Y$ and $X'$, $Y'$ such that
  \[
  \Pow(X\| X') \leq f \quad \text{and} \quad \Pow(Y\|Y') \leq f.
  \]
  Then there exists couplings $(X, Y)$ and $(X', Y')$ such that
  \[
  \Pow((X, Y)\| (X', Y')\big) \leq f.
  \]
\end{lemma}
}

\ignore{
\begin{fact}[Continuity of Tensor Product: Special Case of Lemma 1.7 in~\cite{DBLP:conf/stoc/Vadhan023}]
  \label{fact:continuity}
  Let $f$ be a power function,  
$\Lambda$ be an indexing set and $g_\lambda$ be a collection of power functions indexed by $\lambda\in \Lambda$,
Then, 
$
f\otimes \sup_{\lambda\in \Lambda} g_\lambda 
=
\sup_{\lambda\in \Lambda} f\otimes g_\lambda.
$
  \end{fact}
	
}

\ignore{
 We define the union of a countable sequence of messages as $\mcal{U}^*:=\cup_{i\in{\mathbb{N}}} \mcal{U}^i$, where we let $\mcal{U}^0:={\bot}$. An interactive mechanism is a random function $\algM:R \times \mcal{U}^*\to \mcal{U}$, where $R$ is the secret random seed set of $\algM$. In other words, $\algM$ picks a random seed from $R$, takes as input a message in $\mcal{U}$, and then outputs a message in $\mcal{U}$. Given a sequence $x_1,\ldots,x_n$, we write $x_{[t]}$ as the $i$-th prefix of the sequence $x_1,\ldots,x_{t}$ and define $x_{[0]}:=\bot$.
An adversary is also an interactive mechanism and we call the messages send by the adversary queries. In the following we assume all mechanisms (including the adversaries) share the same message set $\mcal{U}$.
}

\noindent \emph{Distinguishing between Interactive Mechanisms.}
Suppose we wish to compare the behavior of two mechanisms $\mcal{M}_0$ and $\mcal{M}_1$,
to which we only have oracle accesses. An adversary $\mcal{A}: \mcal{R} \times \mcal{U}^* \rightarrow \mcal{U}$
is also an interactive mechanism.
The adversary has some view function $\nu_\algA: \mcal{U} \rightarrow \mcal{U}$
that specifies what information it can gather from the output of the mechanism.
However, later when we instantiate from the abstract mechanism,
we will use the structure of the message space to clarify what the adversary can observe.

We next define the interaction $\algA \leftrightarrow \algM$ between the two mechanisms.
The adversary~$\algA$ repeatedly sends messages to~$\algM$. At
each time step, the adversary determines the next message to be sent
to~$\algM$, based on the history of prior interactions. This process repeats iteratively,
until the mechanism~$\algM$ outputs $\halt$ in some time step.
Formally, it is captured by the following description.

\begin{tcolorbox}
\begin{enumerate}[leftmargin=1pt]
\item The mechanisms
$\algA$ and $\algM$ sample the random seeds $r_\algA$ and $r_\algM$, respectively.

\item At time step $t = 0$, $\algM$ outputs $y_0 \gets \algM(r_\algM) \in \mcal{U}$. 

$\algA$ observes $v_0 := \nu_\algA(y_0)$.

\item At time step $t \geq 1$,

\begin{enumerate}[label=(\alph*)]
\item $\algA$ generates $x_t \gets \algA(r_\algA; v[0..t-1])$ and sends to $\algM$.

\item $\algM$ responds $y_{t} \gets \algM(r_\algM; x[1..t])$;

$\algA$ observes $v_t := \nu_\algA(y_t)$.

\end{enumerate}

The process terminates as soon as either $x_t$ or $y_{t}$ is $\halt$;
otherwise, the process continues for the next time step.

\end{enumerate}
\end{tcolorbox}

\ignore{
Now we define the interaction between an adversary and an interactive mechanism.  Formally, this process is defined as follows:

\begin{definition}\label{def:interaction}
Given an adversary $\algA$ and an interactive mechanism $\algM$, the interaction between $\algA$ and $\algM$, denoted as $(\algA, \algM)$, is the following random process:
\begin{enumerate}
  \item $\algA$, $\algM$ sample the random seeds $r_\algA$, $r_\algM$, respectively.
  \item    For $t=1,2,\dots$:
   \begin{enumerate}
        \item $\algA$ generates the new query based on the history: $q_t := \algA(r_\algA,m_{[t-1]})$.

        \item $\algM$ answers the query from $\algA$: $m_t := \algM(r_\algM, q_{[t]})$.

        \item If $\algA$ or $\algM$ outputs halt, the other does nothing and the loop ends.

    \end{enumerate}
\end{enumerate}
\end{definition}
}

\begin{definition}[Adversary View]
\label{def:view}
Suppose the interaction
 $\algA \leftrightarrow \algM$
terminates at time step~$t$ (which may be a random quantity).
Then, the adversary view
is defined as:

$
\view(\algA \leftrightarrow \algM) := (r_\algA; v[0..t])$.

\end{definition}

\noindent\emph{Terminology (``non-adaptive'').}
In this paper, when we say ``non-adaptive'' for a continual mechanism’s privacy analysis, we mean the mechanism’s input
stream is fixed in advance (equivalently, the environment/adversary that provides inputs does not condition on past outputs).

\begin{definition}[Non-Adaptive Adversary]
An adversary transition function $\mcal{A}: \mcal{R} \times \mcal{U}^* \rightarrow \mcal{U}$
is non-adaptive, if there exists a transition function of the form
$\widehat{\mcal{A}}: \mcal{R} \times \Z \rightarrow \mcal{U}$
such that for any length-$t$ sequence $v[0..t-1]$, $\mcal{A}(r; v[0..t-1]) = \widehat{A}(r; t)$.
\end{definition}

%

%
%

We can next define a distance notion between
two mechanisms 
based on power functions.  Intuitively,
it quantifies the difficulty of distinguishing between
two mechanisms.

\begin{definition}[Power Function between Mechanisms]
\label{defn:power_mech}
Given two interactive mechanisms $\algM_0$ and $\algM_1$,
we overload the notation and denote $\Pow\left(\algM_0 \| \algM_1 \right) \leq f
$, for some power function~$f$, if
for all adversaries $\algA$, 
$\Pow\left(\view(\algA \leftrightarrow \algM_0) \| \view(\algA \leftrightarrow \algM_1)\right) \leq f.
$

If we have the weaker requirement that the inequality holds
for just non-adaptive adversaries,
we denote:
 $\Pow^{\NA}\left(\algM_0 \| \algM_1 \right) \leq f$.

\end{definition}

\noindent \textbf{Simulation of Two Close Interactive
Mechanisms with Close Random Seeds.}
The following result from~\cite{DBLP:conf/stoc/Vadhan023}
states that if two mechanisms are close in the sense
of Definition~\ref{defn:power_mech}, then it is possible to simulate
either one of them exactly with the same mechanism transition function that initially
takes a random seed generated from one of the two appropriate distributions.
We use the finite-space, bounded-termination version of this result.
Under the finite-precision convention above, every bounded-horizon
mechanism considered in this paper falls within its scope.

%

This is a very useful fact in proving privacy composition results,
because instead of interacting with two potentially different mechanisms
in every time step, we can view the adversary as interacting with the
same mechanism transition function but with initial random
seeds drawn from two different distributions.  One useful analogy
is that after a sensitive database is anonymized with appropriate
noise, then it can be queried unlimited number of times
without privacy degradation because of data post-processing.

\begin{fact}[Exact Simulation of Two Close Interactive Mechanisms~\cite{DBLP:conf/stoc/Vadhan023}, Theorem 1.5]
\label{fact:exact_sim}
Suppose $\algM_0$ and $\algM_1: \mcal{R} \times \mcal{U}^* \rightarrow \mcal{U}$
are two interactive mechanisms with bounded termination,
and $f$ is a power function
such that $\Pow(\algM_0 \| \algM_1) \leq f$
in the sense of Definition~\ref{defn:power_mech}.

Then, there exist two distributions
$N_0$ and $N_1$ on some sample space~$\Omega$
and a transition function~$\algP: \Omega \times \mcal{U}^* \rightarrow \mcal{U}$
such that $\Pow(N_0 \| N_1) \leq f$,
and, for all adversaries $\algA$, the following holds
for both $b \in \{0,1\}$:
$\view(\algA \leftrightarrow \algM_b) \equiv \view(\algA \leftrightarrow \algP(N_b)),$

\ignore{
\[
\view(\algA \leftrightarrow \algM_0) \equiv \view(\algA \leftrightarrow \algP(N_0)))
\quad \textrm{and} \quad
\view(\algA \leftrightarrow \algM_1) \equiv \view(\algA \leftrightarrow \algP(N_1)),
\]
}
where $\algP({N})$ is
the interactive mechanism that uses the specified transition function~$\algP$
with a random seed drawn from the distribution~$N$.
\end{fact}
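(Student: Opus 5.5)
The plan is to reduce the interactive statement to the non-interactive Blackwell-type fact: if $\Pow(P_0\|P_1)\le f$, then $(P_0,P_1)$ is a post-processing of a canonical pair with power function exactly $f$. Such a canonical pair exists, for example distributions on $[0,1]$ whose likelihood-ratio ordering realizes the knapsack greedy of Definition~\ref{defn:power}.

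\textbf{Step 1: reduce to finitely many deterministic adversaries.} Because $\mcal{U}$ and $\mcal{R}$ are finite and termination is bounded by $T=\max(T_{\algM_0},T_{\algM_1})$, I would first restrict to deterministic adversaries. There are finitely many strategy trees of depth $T$. By joint convexity (Fact~\ref{fact:joint_convexity}), a randomized adversary is a mixture of these, so it is enough that the simulated views agree in law with the true views for every deterministic strategy tree. I would also replace each $\algM_b$ by its equivalent sequential form: at history $x[1..t]$ with past outputs $y[0..t-1]$, it samples $y_t$ from the conditional law $K_b(\cdot\mid x[1..t],y[0..t-1])$. Views depend only on these kernels.

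\textbf{Step 2: build $N_b$ and $\algP$ by induction on the horizon.} For $T=0$ the claim is just Blackwell's theorem applied to the laws of $y_0$. For the inductive step, take the common seed space to be $\Omega=\Omega_0\times\Omega'$.
\begin{itemize}
\item The first coordinate $\omega_0\in\Omega_0$ produces $y_0$ from a canonical pair for the first-step power function $f_0=\Pow(y_0^{(0)}\|y_0^{(1)})$.
\item Conditioned on $y_0$, the residual interactions $\algM_b\mid y_0$ are mechanisms of horizon $T-1$, with some power functions $f_{y_0}$ against all adversaries.
\item The induction hypothesis gives simulators $\algP_{y_0}$ with seed pairs $N^{y_0}_b$ satisfying $\Pow(N^{y_0}_0\|N^{y_0}_1)\le f_{y_0}$.
\item Define $\algP$ to first output $y_0$ from $\omega_0$, then run $\algP_{y_0}$ on the residual seed.
\end{itemize}
The view identity $\view(\algA\leftrightarrow\algM_b)\equiv\view(\algA\leftrightarrow\algP(N_b))$ then follows by unrolling the kernels. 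It holds exactly because the adversary's next message is a function of the past view only, so the history on which $\algP_{y_0}$ is evaluated matches the history on which $K_b$ is evaluated.

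\textbf{Step 3 (main obstacle): the seed pair must be $f$-close, not merely $f_0\otimes\sup_{y_0} f_{y_0}$-close.} The composition rule (Fact~\ref{fact:composition}) gives only the latter, which can be much weaker than $f$. The adversary's adaptivity is exactly what forces the tighter bound. My first attempt would be to use the hypothesis $\Pow(\algM_0\|\algM_1)\le f$ over \emph{all} adaptive adversaries as a family of linear constraints on the finite joint law of the full response tables. I would then seek a coupling, i.e.\ a joint law over response tables for each $b$, whose pair is itself $f$-close while having the prescribed conditional kernels. This is a finite LP feasibility question, and I would attack it by LP duality. A violating dual certificate is a test function on response tables. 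The plan is to show that such a test can be realized as the rejection region of some adaptive adversary, by choosing, at each history, the continuation maximizing the likelihood-ratio mass; this is the step where adaptivity is essential. That would contradict $\Pow(\algM_0\|\algM_1)\le f$. Setting $N_b$ to be the resulting pair of table distributions, with $\algP$ table evaluation, completes the argument. Should this duality step resist, the fallback is to follow the original argument in~\cite{DBLP:conf/stoc/Vadhan023}, which reduces via a reduction to the case $f=\DP_{\varepsilon,\delta}$ handled there explicitly. Consistent with this being imported infrastructure, I would treat the result as cited rather than reprove every detail.
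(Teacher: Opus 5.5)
The paper gives no proof of this fact; it imports it from Vadhan--Zhang (Theorem~1.5). Your fallback of citing it is therefore exactly what the paper does. As a self-contained argument, however, your proposal has a genuine gap, and it sits in Step~3.

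The claim that a violating dual certificate ``can be realized as the rejection region of some adaptive adversary'' is not a step of the proof; it is the whole theorem. The constraint $\Pow(N_0\|N_1)\le f$ is an intersection of hockey-stick constraints $\max_\phi[N_1(\phi)-\lambda N_0(\phi)]\le\sup_\alpha[f(\alpha)-\lambda\alpha]$ over several slopes $\lambda$. A separating certificate is therefore a weighted family of table tests at several slopes, all evaluated against one common pair of couplings. The hypothesis, by contrast, controls each slope separately, possibly through different adversaries. Moreover, each table test reads outputs at histories that no single adversary ever visits. By the same duality, showing that every such certificate is dominated by adaptive-adversary values is equivalent to the statement itself. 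Your greedy ``continuation maximizing likelihood-ratio mass'' is not shown to give any such inequality.

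The obstacle that sent you to Step~3 is an artifact of bounding with Fact~\ref{fact:composition}; your Step~2 construction already works once you compute its power function exactly. Let $p_b$ be the law of $y_0$ under $\algM_b$. Take $f_y$ to be the supremum (Definition~\ref{defn:power_sup}), over all adversaries, of the residual view power functions after $y_0=y$, not merely some bound. Let the seed be $(y_0,\omega')$ with $y_0\sim p_b$ and $\omega'\sim N^{y_0}_b$. This seed pair is a disjoint union indexed by $y_0$, so the fractional-knapsack definition gives
\[
\Pow(N_0\|N_1)(\alpha)\le\max\Bigl\{\sum_y p_1(y)\,f_y(\alpha_y)\ :\ \sum_y p_0(y)\,\alpha_y\le\alpha\Bigr\}.
\]
Here each branch $y$ pays only its own $f_y$, not $\sup_y f_y$.

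Adaptivity now closes the argument. Fix $\alpha$ and an allocation $(\alpha_y)$. Consider the adversary that observes $y_0=y$ and then runs a continuation attaining $f_y(\alpha_y)$; a coin-mixture of at most two continuations realizes the two-point formula of Definition~\ref{defn:power_sup}. Its view power at level $\alpha$ is at least $\sum_y p_1(y)f_y(\alpha_y)$, and the hypothesis bounds this by $f(\alpha)$.

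Step~2 also omits one ingredient. Conditioning on $y_0$ does not shorten the horizon, because the residual must answer every possible first input $x_1$ from a seed sampled in advance. To handle this:
\begin{enumerate}
\item Apply the induction hypothesis to each residual after $(y_0,x_1)=(y,x)$. Its supremum is at most $f_y$, since an adversary may commit to $x$.
\item Write each resulting seed pair as a Blackwell garbling of one canonical pair with power function $f_y$.
\item Apply all these garblings, with independent coins, to a single canonical sample.
\end{enumerate}
Each execution uses only the $x_1$-marginal, so exactness is preserved, and data processing keeps the joint seed pair within $f_y$. With these two repairs, the round-by-round induction proves the statement with no duality argument.
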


\newcommand{\hop}{\mathsf{hop}}

\section{Modular Composition of Interactive Mechanisms with Hidden Outputs}
\label{sec:modular_comp}

\paragraph{Interactive Mechanism with Hidden Output.}
We next consider a special
form of interactive mechanism
with a transition function of the form
$\algM:\mcal{R} \times \mcal{X}^* \to \mcal{V}\times \mcal{Y}$.
Here, 
we use $\mcal{X}$ to denote the input space.
In each time step,
the mechanism produces $(v, y) \in \mcal{V} \times \mcal{Y}$,
where the adversary can observe~$v$ from
the view space~$\mcal{V}$ and
the output~$y$ from
the output space~$\mcal{Y}$ is potentially hidden from the adversary,
which may be passed to another interactive mechanism (that could however
possibly leak information about $y$).
If an adversary only interacts with $\algM$, then
it can be described by a transition function
with the form
$\algA: \mcal{R} \times \mcal{V}^* \to \mcal{X}$.

\noindent \textbf{Neighboring Input Sequences.}
Similar to the conventional differential privacy,
we need a neighboring notion for input sequences such that
intuitively the behaviors of a mechanism on two neighboring input
sequences should be similar.

\begin{assumption}[Prefix-Closed Neighboring Relation]
\label{assume:ns}
We assume there is a neighboring relation~$\sim_{\mathcal{X}^*}$
(or just $\sim$ for succinctness) that satisfies the property that
if two sequences $x[1..t] \sim x'[1..t']$ are neighboring, then
the following holds:
\begin{compactitem}
\item[(i)] They have the same length, i.e., $t = t'$.

\item[(ii)] All their corresponding prefixes are neighboring,
i.e., for all $1 \leq \tau < t$, $x[1..\tau] \sim x'[1..\tau]$.
\end{compactitem}

\end{assumption}

\noindent \emph{Privacy for Interactive Mechanisms.}
Since the inputs for an interactive mechanism in later time steps
can be influenced by its earlier behavior,
one cannot formally define its privacy just based on neighboring input
sequences.  In the literature, \emph{adaptive} differential privacy~\cite{DBLP:conf/icml/0004RSS23} is based on a construction that we name as \emph{paired simulation}.

\begin{definition}[Paired Simulation]
\label{def:paired_sim}
Suppose an interactive mechanism
is given by the transition function
$\algM: \mcal{R} \times \mcal{X}^* \to \mcal{V} \times \mcal{Y}$
and the random seed distribution $R_\algM$,
where a neighboring relation is defined on $\mcal{X}^*$.

Then, the paired simulation consists
of a canonical pair of interactive mechanisms
$(\algM^\pair_0, \algM^\pair_1)$
such that
for each $b \in \{0, 1\}$,
the transition function has the form $\algM^\pair_b: \mcal{R} \times (\mcal{X}^2)^*
\to \mcal{V} \times \mcal{Y}$.
Moreover, in the paired simulation, an adversary with transition function
 $\mcal{A}: \mcal{R} \times \mcal{V}^* \to \mcal{X}^2$ interacts with
$\algM^\pair_b$ as follows.

\begin{tcolorbox}

The interaction between an adversary $\mcal{A}$
and $\algM^\pair_b$:

\begin{enumerate}[leftmargin=1pt]
\item 
$\algA$ and $\mcal{M}$ sample the hidden random seeds $r_\algA$ and $r_{\algM}$
from their own distributions, respectively.


\item At time step $t = 0$, the simulation calls $(v_0, y_0) \gets \algM(r_{\algM})$.
If $\algM$ is supposed to wait for the first input message,
$(v_0, y_0) = (\bot, \bot)$.

\item At time step $t \geq 1$,
$\mcal{A}$ constructs a pair
$(x^{(0)}_t, x^{(1)}_t) \gets \algA(r_\algA; v[0..t-1]) \in \mcal{X}^2$
and sends to the simulation.

If the sequences $x^{(0)}[1..t]$ and $x^{(1)}[1..t]$
generated so far are not neighboring,
the whole simulation terminates.

Otherwise,
$(v_t, y_t) \gets \algM(r_\algM; x^{(b)}[1..t])$,
and $\algA$ observes~$v_t$.

If either $\algA$ or $\algM$ decides to halt, then the whole simulation
terminates.

\end{enumerate}

The adversarial view is denoted as $\view(\mcal{A} \leftrightarrow \algM^\pair_b)$.
If the simulation does not terminate before
time step~$t$,
then the view up to time~$t$ includes
the random seed $r_\algA$ and $v[0..t]$.

\end{tcolorbox}

\end{definition}

\ignore{
Inspired from differential obliviousness~\cite{DBLP:journals/jacm/ChanCMS22}
and adaptive differential privacy~\cite{DBLP:conf/icml/0004RSS23},
we can define adaptive differential privacy
for interactive mechanisms based on power functions.
}

\begin{definition}[Differentially Private (DP) View]
\label{defn:do}
Given an interactive mechanism~$\algM$,
suppose $(\algM^\pair_0, \algM^\pair_1)$ is the paired
simulation as in Definition~\ref{def:paired_sim}.
For a power function~$f$,
the mechanism~$\algM$ is adaptively $f$-differentially private (DP)
if

$\Pow(\algM^\pair_0 \| \algM^\pair_1) \leq f$,
for all adaptive adversaries in the sense of Definition~\ref{defn:power_mech}.
Similarly, non-adaptively $f$-DP is defined by
$\Pow^\NA(\algM^\pair_0 \| \algM^\pair_1) \leq f$.

\ignore{
Because of Fact~\ref{fact:pow},
it suffices to consider \textbf{deterministic} adversaries
in Definition~\ref{defn:power_mech}.
}
\end{definition}

\noindent \textbf{Multi-Hop Neighboring Notion.}
In the literature, group DP or multi-hop neighbors can be generalized
to sequences readily.

\begin{definition}[$k$-Hop Neighboring Sequences]
\label{defn:hop-neighboring}
For $k=1$, $1$-hop neighboring is the same as that in Assumption~\ref{assume:ns}.
For $k \geq 2$, two sequences $x$ and $x'$ are $k$-hop neighboring
if there exist $x_1, \ldots, x_{k-1}$
such that for $0 \leq i < k$, $x_i$ and $x_{i+1}$ are $1$-hop neighboring,
where $x_0 = x$ and $x_k = x'$.
\end{definition}

We also generalize Definition~\ref{def:paired_sim}
to multi-hop neighboring simulation.

\begin{definition}[$k$-Hop Neighboring Simulation]
\label{defn:hop_sim}
Generalizing the description in Definition~\ref{def:paired_sim},
in each time step~$t$,
the adversary generates a tuple $(x^{(0)}_t, x^{(1)}_t, \ldots, x^{(k)}_t)$
such that for each $0 \leq i < k$,
$x^{(i)}[1..t]$ and $x^{(i+1)}[1..t]$ are neighboring sequences.

Moreover, for $b \in [0..k]$, $\algM^{\hop}_b$ is a simulation
of $\algM$ that takes the $b$-th component $x^{(b)}_t$ at each step~$t$.
\end{definition}

\begin{fact}[Adaptive DP for Multi-Hop Neighboring Sequences]
\label{fact:multi_hop_DP}
Suppose $\algM$ is adaptively $f$-DP as in Definition~\ref{defn:do}.
Then, in a $k$-hop neighboring simulation as in Definition~\ref{defn:hop_sim},
we have:
$\Pow(\algM^\hop_0 \| \algM^\hop_k) \leq f^{\circ k}$,

where $f^{\circ 1} = f$ and $f^{\circ (i+1)} = f^{\circ i} \circ f$.
\end{fact}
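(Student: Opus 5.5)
The plan is to prove Fact~\ref{fact:multi_hop_DP} by induction on $k$, using the intermediate hybrids $\algM^\hop_1,\dots,\algM^\hop_{k-1}$ together with the triangle inequality for power functions (Fact~\ref{fact:triangle_power}). The base case $k=1$ is exactly Definition~\ref{defn:do}. A $1$-hop neighboring simulation is a paired simulation, since at every step the adversary's pair $(x^{(0)}_t,x^{(1)}_t)$ must keep the prefixes neighboring. Hence $\Pow(\algM^\hop_0\|\algM^\hop_1)\le f$.

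For the inductive step, fix an adaptive adversary $\algA$ for the $k$-hop simulation, producing tuples $(x^{(0)}_t,\dots,x^{(k)}_t)$. From it I build an adversary $\algA'$ for the ordinary paired simulation between coordinates $k-1$ and $k$. The new adversary $\algA'$ runs $\algA$ internally on the views it receives and forwards only the pair $(x^{(k-1)}_t,x^{(k)}_t)$. This is a legal paired-simulation adversary, because those prefixes are neighboring by Definition~\ref{defn:hop_sim}. The observed views coincide with those of $\algA$ against $\algM^\hop_{k-1}$ and $\algM^\hop_k$, since the mechanism in hop $b$ only ever sees $x^{(b)}$.

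A termination subtlety must be handled here. In the $k$-hop game, the run terminates if any consecutive pair fails to be neighboring. I will let $\algA'$ halt exactly when $\algA$'s tuple violates some neighbor constraint. Then both views are the same deterministic function of $(r_\algA, v[0..t])$. Conversely, adaptivity of $\algM$ gives $\Pow(\view(\algA\leftrightarrow\algM^\hop_{k-1})\|\view(\algA\leftrightarrow\algM^\hop_k))\le f$. Symmetrically, dropping the last coordinate gives a $(k-1)$-hop adversary, so the induction hypothesis yields $\Pow(\algM^\hop_0\|\algM^\hop_{k-1})\le f^{\circ(k-1)}$.

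Applying Fact~\ref{fact:triangle_power} with $X,Y,Z$ the three view distributions for the same $\algA$ gives $\Pow(X\|Z)\le f\circ f^{\circ(k-1)}$. This agrees with $f^{\circ k}$ as defined, because iterates of a single function commute. Taking all $\algA$ then finishes the proof.

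The main obstacle is the reduction step: checking that the views really are identically distributed under the simulated adversaries, including the handling of $\halt$ and termination. The key point is that the view of $\algA$ is just $(r_\algA,v[0..t])$, and the internal simulation reproduces the same inputs to hop $b$. This is routine but needs care, because the adversary's random seed must be included in the simulated adversary's seed.
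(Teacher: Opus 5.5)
Your proof is correct and matches the paper's. Both project the $k$-hop adversary onto consecutive coordinate pairs to obtain valid adaptive paired-simulation adversaries, then chain the resulting one-hop $f$ bounds with the triangle inequality for power functions; your induction on $k$ simply formalizes the paper's ``apply repeatedly.'' Your explicit halting rule for constraint violations on the coordinates you drop is a detail the paper leaves implicit.
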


\begin{proof}
Fix a $k$-hop adaptive adversary, and let $P_i$ denote the
distribution of its view in simulation $\algM_i^{\hop}$.  For every
$i<k$, projecting each adversarially generated tuple onto coordinates
$(i,i+1)$ gives a valid adaptive paired-simulation adversary.  Hence
adaptive one-hop $f$-DP gives
\[
\Pow(P_i\|P_{i+1})\leq f.
\]
Applying the triangle inequality for power functions repeatedly yields
\[
\Pow(P_0\|P_k)\leq f^{\circ k}.
\]
\end{proof}

%
%

\noindent \textbf{Modular Composition of Interactive Mechanisms.}
As mentioned above, the hidden output of one mechanism
may be fed as the input of another mechanism.
To this end, we define a specific form of composition
that we call \emph{modular composition}.
The adversary only directly supplies an input
to the ``head'' mechanism of the list, where
each intermediate mechanism receives its input from the (hidden)
output from the previous mechanism.

\begin{tcolorbox}
\emph{Modular composition}  is
an interactive mechanism
with a transition function
of the form
$(\mcal{M} \to \mcal{N}):
\mcal{R}_1 \times \mcal{R}_2 \times \mcal{X}^* \to \mcal{V} \times \mcal{W} \times \mcal{Z}$
described as follows.

\begin{enumerate}[leftmargin=1pt]
\item The mechanisms $\algA$, $\mcal{M}$ and $\mcal{N}$ sample the hidden random seeds $r_\algA$, $r_{\algM}$, $r_{\mcal{N}}$
from their own distributions, respectively.

\item At time step $t = 0$,
call the subroutine
$(v_0,y_0) \gets \mcal{M}(r_{\algM})$ and $(w_0, z_0) \gets \mcal{N}(r_\mcal{N})$.

The adversary observes $(v_0, w_0)$.

\item At time step $t \geq 1$,

\begin{enumerate}[label=(\alph*)]
\item the adversary generates $x_t \gets \algA(r_\algA; v[0..t-1], w[0..t-1])$,
and sends it to the head mechanism of the modular composition $(\mcal{M} \to \mcal{N})$.

\item
call the subroutine $(v_t, y_t) \gets \mcal{M}(r_{\algM}; x[1..t])$
and pass the hidden output $y_t$ to the next mechanism $\mcal{N}$;

\item
call the subroutine $(w_t, z_t) \gets \mcal{N}(r_{\mcal{N}}; y[1..t])$;

\item the adversary observes $(v_t, w_t)$.

\end{enumerate}

The whole process terminates as soon as one of
$\mcal{A}$, $\mcal{M}$ and $\mcal{N}$ terminates.
Otherwise, the process continues for the next time step.

\end{enumerate}

If the process terminates at time step~$t$,
the view of the adversary consists
of its random seed $r_\mcal{A}$,
$v[0..t]$ and $w[0..t]$.

\end{tcolorbox}

\begin{definition}[Modular Composition]
\label{defn:list_comp}
Suppose two interactive
mechanisms are given
by the transition functions
$\mcal{M}: \mcal{R}_1 \times \mcal{X}^* \rightarrow \mcal{V} \times \mcal{Y}$
and $\mcal{N}: \mcal{R}_2 \times \mcal{Y}^* \to \mcal{W} \times \mcal{Z}$,
with random seed distributions $R_\mcal{M}$
and $R_\mcal{N}$, respectively.
Then, an adversary with
transition function $\mcal{A}: \mcal{R} \times (\mcal{V} \times \mcal{W})^*
\to \mcal{X}$ interacts
with the modular composition $\mcal{M} \to \mcal{N}$ as above.
\end{definition}

\noindent
\emph{Neighbor-Preserving Paired Simulation.}
To consider composition of differentially 
oblivious mechanisms~\cite{DBLP:journals/jacm/ChanCMS22},
the framework neighbor-preserving differential obliviousness
(NPDO)~\cite{DBLP:conf/eurocrypt/ZhouSCM23,DBLP:conf/innovations/ZhouZCS24}
has been introduced to formalize the condition that if neighboring inputs
are submitted to a mechanism, then the corresponding outputs
would somehow be neighboring for a subsequent mechanism.
To generalize this to interactive mechanisms,
we will need to introduce neighbor-preserving paired simulation.
In addition to neighboring input sequences in $\mcal{X}^*$
for mechanism $\mcal{M}$,
we assume that there is a similar notion of neighboring
sequences in $\mcal{Y}^*$ (satisfying Assumption~\ref{assume:ns})
for mechanism $\mcal{N}$.

\begin{definition}[Neighbor-Preserving Paired Simulation (NPP)]
\label{def:npp}
Given an interactive mechanism
with transition function $\algM: \mcal{R} \times \mcal{X}^* \rightarrow
\mcal{V} \times \mcal{Y}$
and random seed distribution $R_\algM$,
a neighbor-preserving paired simulation (which might not be unique)
is a pair $(\algM^\npp_0, \algM^\npp_1)$
of interactive mechanisms such that the following holds.

\begin{enumerate}[leftmargin=1pt]

\item
There exists possibly another collection $\mcal{R}'$ 
of random seeds and an appropriate distribution $R'_{\mcal{M}}$
such that
for each $b \in \{0,1\}$,
the corresponding mechanism
has a transition function
of the form $\algM^\npp_b: \mcal{R} \times \mcal{R}' \times (\mcal{X}^2)^*
\to \mcal{V} \times \mcal{Y}^2$,
where the random seed distribution on $\mcal{R}$ still follows $R_\algM$,
and the random seed from $\mcal{R}'$ is sampled from~$R'_{\mcal{M}}$.

In other words,
in addition to a random seed in $\mcal{R}$,
an extra random seed in $\mcal{R}'$ may be taken.
Just like paired simulation in Definition~\ref{def:paired_sim},
a sequence of input pairs are also taken.
However, the transition function returns
a tuple in $\mcal{V} \times \mcal{Y}^2$,
i.e., in addition to $v \in \mcal{V}$,
a pair $(y_0, y_1) \in \mcal{Y}^2$ is returned.

\item
For each $b \in \{0,1\}$,
$r \in \mcal{R}$, $r' \in \mcal{R}'$,
$x^{(0)}[1..t]$ and $x^{(1)}[1..t] \in \mcal{X}^*$,

if
$x^{(0)}[1..t]$ and $x^{(1)}[1..t]$ are not neighboring sequences,
then
$\algM^\npp_b(r, r' ; x^{(0)}[1..t], x^{(1)}[1..t]) = \bot$,
and the whole simulation terminates at time step~$t$.

Otherwise,
the pair $(y^{(0)}[1..t],
y^{(1)}[1..t])$ of output sequences
produced in the first $t$ time steps
are neighboring in $\mcal{Y}^*$.

\item
If for some $(v, y_0, y_1) \in \mcal{V} \times \mcal{Y}^2$,
$\algM^\npp_b(r, r' ; x^{(0)}[1..t], x^{(1)}[1..t]) = (v, y_0, y_1)$,
then $\algM(r; x^{(b)}[1..t]) = (v, y_b)$.

In other words,
we require consistency between $\algM^\npp_b$ and $\algM^\pair_b$ (from
Definition~\ref{def:paired_sim}).
Specifically, the projection of $\algM^\npp_b$'s output onto $\algM^\pair_b$'s range must match
the behavior of $\algM^\pair_b$.

However, unlike paired simulation $\algM^\pair_b$ whose
behavior is totally determined by the given $\algM$,
in the tuple $(v, y_0, y_1)$ returned above by $\algM^\npp_b$,
the value $y_{\overline{b}}$ (where $b \neq \overline{b}$) can
depend on all the input parameters in a manner that is not specified by $\algM$.

\item \textbf{Augmented Adversary View.} When an adversary interacts with
$\algM^\npp_b$, the adversarial view consists
of its randomness and the sequence in $(\mcal{V} \times \mcal{Y}^2)^*$ produced
until termination.

\end{enumerate}

\end{definition}

\begin{definition}[NPDP Interactive Mechanisms]
\label{defn:NPDO_mech}
Given a power function~$f$,
an interactive mechanism~$\algM$
is said to be adaptively $f$-neighbor-preserving differentially private (NPDP),
if there exists a neighbor-preserving paired simulation
$(\algM^\npp_0, \algM^\npp_1)$
such that
$\Pow(\algM^\npp_0 \| \algM^\npp_1) \leq f$,
in the sense of Definition~\ref{defn:power_mech}.

\ignore{
Because of Fact~\ref{fact:pow},
it suffices to consider \textbf{deterministic} adversaries
in Definition~\ref{defn:power_mech}.
}
\end{definition}

\begin{remark}[Construction of Neighbor-Preserving Paired Simulation]
Observe that if one wishes
to prove that a mechanism is NPDP as in
Definition~\ref{defn:NPDO_mech},
there is potentially some freedom to construct
an appropriate NPP simulation.

On the contrary, for non-interactive mechanisms (that
only run for one time step),
the recent result on universally closest distribution refinements~\cite{DBLP:conf/innovations/ChanX25}
implies that a ``best possible'' NPP simulation may be used.

Unfortunately, this does not seem to generalize to
interactive mechanisms.  The high level reason is that
there may be no closest distribution refinements
on two random sequences of length~$t+1$
such that its prefix projection would give
a closest distribution refinement for the corresponding
prefixes of length~$t$.  In other words,
if one constructs the ``best possible'' NPP up to time step~$t$,
it may not result in the best NPP if the process enters step~$t+1$.
\end{remark}

\section{Modular Composition Theorem for Adaptively Differentially Private Interactive Mechanisms}
\label{sec:list_composition}

We next describe a modular composition theorem for interactive mechanisms.
At a high level, the theorem states that if the first stage is
\emph{neighbor-preserving} (NPDP) with privacy profile $f$, and the second stage is
(adaptive) DP with profile $g$ on the intermediate stream, then the end-to-end
interactive pipeline is (adaptive) DP with profile $f \otimes g$.

\paragraph{How this theorem relates to existing composition frameworks.}
Our theorem can be viewed as a direct consequence of two existing ideas,
once NPDP is interpreted through \emph{paired simulation}.

First, it is shown~\cite{DBLP:conf/innovations/ZhouZCS24} that
neighbor-preserving notions (including differential obliviousness/NPDO)
admit an equivalent characterization in terms of a paired simulation that
outputs a \emph{refinement pair}.  Informally, in the standard secret-bit
distinguishing game, an adversary provides at each time step a neighboring
pair $(x_t^{(0)},x_t^{(1)})$, and a secret bit $b \in \{0,1\}$ selects which
component is the ``real'' input.  A mechanism is DP if the adversary's views
under $b=0$ and $b=1$ are close.  The refinement-pair view resolves the
usual obstacle in modular composition: even if the real world uses
$x^{(b)}$ and produces a real intermediate output $y^{(b)}$, one also needs
a synthetic neighbor $\overline{y}^{(1-b)}$ so that the next stage can be
analyzed in a paired simulation.  The NPDP guarantee precisely asserts the
existence of such a paired simulator that, for each $b$, produces a pair
$(y^{(0)},y^{(1)})$ whose $b$-marginal matches the real execution, and such
that the two resulting pair-distributions (for $b=0$ vs.\ $b=1$) are close.

Second, a
general framework for adaptive privacy of continual mechanisms~\cite{Henzinger2026Continual}
is developed to prove
concurrent composition theorems under interactive adversaries.
In particular, once a mechanism is expressed in the standard secret-bit game
(and the controller only interacts with downstream submechanisms through
their privatized outputs), privacy of the overall system follows from
standard interactive composition and post-processing.
Under the paired-simulation perspective, our two-stage pipeline
$\algM \rightarrow \mcal{N}$ fits this template: the NPDP stage can be viewed
as a mechanism that, on secret bit $b$, outputs a paired intermediate stream
$(y^{(0)}_t,y^{(1)}_t)$ (together with visible outputs $v_t$), and the DP stage
$\mcal{N}$ is then run on $y^{(b)}$.
Thus, one may reduce modular composition to ordinary interactive composition:
compose the lifted NPDP mechanism (as DP-on-the-bit with profile $f$) with the
second-stage DP mechanism (profile $g$), and finally apply post-processing to
remove the extra synthetic transcript.

While this reduction gives a useful conceptual guide and could be formalized
by instantiating the conditions of~\cite{Henzinger2026Continual}
together with the refinement-pair characterization
of~\cite{DBLP:conf/innovations/ZhouZCS24}, we include below a self-contained
proof tailored to our setting and our power-function accounting.
In particular, a more general concurrent setting (multiple
submechanisms with interleaving calls) is studied in~\cite{Henzinger2026Continual}, whereas we only need the serial
modular composition used by our buffering--aggregation pipeline.  Providing
a direct proof also makes explicit how the privacy profile tensor $f \otimes g$
arises in our notation, and it cleanly interfaces with the certification
arguments in other sections.


\ignore{
\begin{definition}
Suppose $f$ and $g$ are power functions
that are attained by the corresponding distributions in
$\Pow(P_1 \| P_2) = f$ and $g = \Pow(Q_1 \| Q_2)$.
Then,
their corresponding tensor
product $f \otimes g = \Pow(P_1 \times Q_1 \| P_2 \times Q_2)$
is defined in terms the corresponding
product distributions.
\end{definition}
}

\begin{theorem}
\label{thm:DP-comp}
  Suppose an interactive mechanism with transition function $\algM: \mcal{R}_1 \times \mcal{X}^* \to \mcal{V} \times \mcal{Y}$ is adaptively $f$-NPDP (as in Definition~\ref{defn:NPDO_mech}), and
	another one with
	$\mcal{N}: \mcal{R}_2 \times \mcal{Y}^* \to \mcal{W}\times \mcal{Z}$ is adaptively $g$-DP (as in Definition~\ref{defn:do}). Then, 
	the modular composition (as in Definition~\ref{defn:list_comp}) $\algM \rightarrow \mcal{N}: \mcal{R}_1 \times \mcal{R}_2 \times \mcal{X}^* \to \mcal{V} \times \mcal{W}\times  \mcal{Z}$ is adaptively $f \otimes g$-DP.
\end{theorem}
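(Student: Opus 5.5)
Fix an arbitrary adaptive adversary $\algA$ for the paired simulation of the composition $\algM\to\mcal{N}$, and write $P_b$ for the distribution of its view $(r_\algA; v[0..t], w[0..t])$ when the composition is run on $x^{(b)}$. We must show $\Pow(P_0\|P_1)\leq f\otimes g$.

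\textbf{Step 1: pass to the NPP simulation.} Using the NPP simulation $(\algM^\npp_0,\algM^\npp_1)$ from Definition~\ref{defn:NPDO_mech}, define a hybrid game $H_b$. In $H_b$, the adversary's input pairs are fed to $\algM^\npp_b$, which returns $(v_t,y^{(0)}_t,y^{(1)}_t)$. The downstream mechanism $\mcal{N}$ is run on $y^{(b)}[1..t]$ and produces $w_t$. By consistency (item~3 of Definition~\ref{def:npp}), the pair $(v_t,y^{(b)}_t)$ is produced exactly as $\algM(r;x^{(b)}[1..t])$ would produce it. Hence the adversary's view in $H_b$ is distributed exactly as $P_b$. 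Item~2 guarantees that $(y^{(0)}[1..t],y^{(1)}[1..t])$ are always neighboring in $\mcal{Y}^*$.

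\textbf{Step 2: simulate both stages with fixed transition functions.}
\begin{itemize}
\item Since $\Pow(\algM^\npp_0\|\algM^\npp_1)\leq f$, Fact~\ref{fact:exact_sim} gives distributions $N_0,N_1$ with $\Pow(N_0\|N_1)\leq f$ and a single transition function $\algP$ such that $\algP(N_b)$ reproduces $\algM^\npp_b$ against every adversary. This holds in particular for the augmented view containing the $y$-pairs.
\item Apply the same step to the paired simulation $(\mcal{N}^\pair_0,\mcal{N}^\pair_1)$. It is adaptively $g$-close, so there are $K_0,K_1$ with $\Pow(K_0\|K_1)\leq g$ and a single $\mcal{Q}$ such that $\mcal{Q}(K_b)$ reproduces $\mcal{N}^\pair_b$. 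The inputs to $\mcal{Q}$ are neighboring pairs of $y$-sequences.
\end{itemize}

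\textbf{Step 3: rebuild $H_b$ as a post-processing of the seeds.} Run the following procedure $\Phi$. Take seeds $(n,k)$. Drive $\algP(n)$ with the adversary's pairs. Feed the pair $(y^{(0)}_t,y^{(1)}_t)$ it outputs into $\mcal{Q}(k)$ as the $\mcal{N}$-input pair. Return $(v_t,w_t)$ to $\algA$. Each simulator's output must be reproduced against an arbitrary adaptive environment, here the combination of $\algA$ with the other stage. Therefore:
\begin{itemize}
\item $\Phi(N_b\times K_b)$ has the same view distribution as $P_b$;
\item the mixed run $\Phi(N_b\times K_{b'})$ corresponds to $\algM^\npp_b$ composed with $\mcal{N}^\pair_{b'}$.
\end{itemize}
Since $\Phi$, together with $r_\algA$, is a fixed channel, data processing gives
\[
\Pow(P_0\|P_1)\leq \Pow(N_0\times K_0\|N_1\times K_1)=\Pow(N_0\|N_1)\otimes\Pow(K_0\|K_1)\leq f\otimes g,
\]
where the final step uses monotonicity of $\otimes$ in each argument. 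Monotonicity follows from Fact~\ref{fact:composition} with product kernels.

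\textbf{Main obstacle.} The delicate point is justifying that the product $N_b\times K_b$ really reproduces $H_b$. The $\mcal{N}$-simulator is driven by inputs that depend on the $\algM$-stage seed, and the $\algM$-stage simulator in turn sees feedback from $\mcal{N}$'s outputs through $\algA$. So neither stage faces a fixed adversary.

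The plan for this step is to invoke Fact~\ref{fact:exact_sim} twice, each time treating "the adversary plus the other stage with its seed fixed" as a single adaptive adversary:
\begin{enumerate}
\item For each fixed $k$, the combination of $\algA$ and $\mcal{Q}(k)$ is a legitimate adaptive adversary against $\algM^\npp_b$. Replacing $\algM^\npp_b$ by $\algP(N_b)$ conditionally on $k$ therefore preserves the joint law.
\item Symmetrically, for fixed $n$, the combination of $\algA$ and $\algP(n)$ is a legitimate adaptive paired adversary for $\mcal{N}$. This uses the neighboring guarantee of item~2, so the paired game never terminates early.
\end{enumerate}

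Chaining the two replacements swaps each stage for its seed simulator in turn. If the augmented-view requirement causes trouble, the fallback is to apply Fact~\ref{fact:composition} directly on the view, with $P=N_b$ and kernel $Q(\omega)=\Phi(\omega,K_b)$.
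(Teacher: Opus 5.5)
Your proposal is correct and follows essentially the same route as the paper. Like the paper, you pass to the hybrid in which $\algM^\npp_b$ feeds its output pairs into $\mcal{N}^\pair_b$, replace both stages by their exact seed simulators via Fact~\ref{fact:exact_sim}, and conclude $\Pow(P_0\|P_1)\leq\Pow(N_0\times K_0\|N_1\times K_1)\leq f\otimes g$ by data processing and Fact~\ref{fact:composition}.

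Your two-step hybrid justification, which absorbs the other stage with its seed fixed into the adversary, makes explicit a step the paper only asserts. As you list them, though, each step assumes the other stage has already been replaced, so neither can go first. Order them consistently: first replace $\algM^\npp_b$ using the adversary $\algA$ combined with $\mcal{N}^\pair_b$ at a fixed seed, then replace $\mcal{N}^\pair_b$ using $\algA$ combined with $\algP(n)$.
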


\begin{tcolorbox}
The augmented adversary $\widehat{\mcal{A}}$ interacts
with $\widehat{\mcal{L}}_b$, for each $b \in \{0,1\}$, as follows.

\begin{enumerate}[leftmargin=1pt]
\item The mechanisms 
$\algA$, $(\mcal{M}^\npp_0, \mcal{M}^\npp_1)$ and $(\mcal{N}^\pair_0,
\mcal{N}^\pair_1)$, sample the random seeds $r_\algA$, $(r_{\algM}, r'_{\algM})$, $r_{\mcal{N}}$
from the appropriate distributions, respectively.

\item At time step $t = 0$,
call the subroutines
$(v_0, y^{(0)}_0, y^{(1)}_0 ) \gets \mcal{M}^\npp_b(r_{\algM}, r'_{\algM})$ and $(w_0, z_0) \gets \mcal{N}^\pair_b(r_\mcal{N})$.

$\widehat{\mcal{A}}$ observes $(v_0, w_0)$.

\item At time step $t \geq 1$,

\begin{enumerate}[label=(\alph*)]
\item $\widehat{\mcal{A}}$ generates $(x^{(0)}_t, x^{(1)}_t ) \gets \algA(r_\algA; v[0..t-1], w[0..t-1])$,
and sends it to $\widehat{\mcal{L}}_b$.

\item
$(v_t, y^{(0)}_t, y^{(1)}_t) \gets 
\mcal{M}^\npp_b(r_{\algM}, r'_{\algM}; x^{(0)}[1..t], x^{(1)}[1..t])$
and pass the hidden output pair $(y^{(0)}_t, y^{(1)}_t)$ to the next mechanism $\mcal{N}^\pair_b$;

\item
$(w_t, z_t) \gets 
\mcal{N}^\pair_b(r_{\mcal{N}}; y^{(0)}[1..t], y^{(1)}[1..t])$;

\item $\widehat{\mcal{A}}$  observes $(v_t, w_t)$.

\end{enumerate}

The whole process terminates as soon as one of
$\mcal{A}$, $\mcal{M}^\npp_b$ and $\mcal{N}^\pair_b$ terminates.
Otherwise, the process continues for the next time step.

\end{enumerate}

If the process terminates at time step~$t$,
the view of the adversary consists
of its random seed $r_\mcal{A}$,
$v[0..t]$ and $w[0..t]$.

\end{tcolorbox}

\begin{proof}
We denote the resulting interactive
mechanism from the modular composition
by $\mcal{L} := \algM \rightarrow \mcal{N}$.
In view of Definition~\ref{defn:do},
we consider the paired simulation
$(\mcal{L}^\pair_0, \mcal{L}^\pair_1)$
as described in Definition~\ref{def:paired_sim}.
Specifically,
our goal is to show that for any adversary $\mcal{A}$,
the two distributions $\view(\mcal{A} \leftrightarrow \mcal{L}^\pair_0)$
and $\view(\mcal{A} \leftrightarrow \mcal{L}^\pair_1)$
are ``close'' in the sense quantified by the power function
$f \otimes g$.

Our proof strategy is to construct 
an augmented pair $(\widehat{\mcal{L}}_0, \widehat{\mcal{L}}_1)$
of interactive mechanisms
and an augmented adversary $\widehat{\mcal{A}}$
whose views
have the same distributions as those from
$(\mcal{L}^\pair_0, \mcal{L}^\pair_1)$ and $\mcal{A}$.

We next use Fact~\ref{fact:exact_sim} to show that 
instead of considering 
the pair $(\mcal{L}^\pair_0, \mcal{L}^\pair_1)$
with the same random seed distribution,
we can equivalently simulate the pair with
some common transition function, but using two
different random seed distributions.
By the data processing inequality,
it suffices to analyze the alternative simulation
and the power function of the two random seed distributions.
Because $\mcal{M}$ is $f$-NPDP,
let $(\mcal{M}^\npp_0, \mcal{M}^\npp_1)$ be the corresponding
NPP simulation as guaranteed by Definition~\ref{defn:NPDO_mech}
such that 
$\Pow(\mcal{M}^\npp_0 \| \mcal{M}^\npp_1) \leq f$.
By Fact~\ref{fact:exact_sim},
there exist some interactive mechanism $\mcal{P}$
and two random seed distributions $F_0$ and $F_1$
such that $\Pow(F_0 \| F_1) \leq f$
and for each $b \in \{0 ,1\}$,
$\mcal{P}(F_b)$ is an exact simulation of $\mcal{M}^\npp_b$
(with the appropriate random seed distribution
$R_{\mcal{M}} \times R'_{\mcal{M}}$)
for any adversary.
Specifically, for each~$b$,
$\mcal{M}^\npp_b(R_{\mcal{M}} \times R'_{\mcal{M}})$
and $\mcal{P}(F_b)$ are equivalent.

Similarly,
since $\mcal{N}$ is $g$-DP,
the paired simulation $(\mcal{N}^\pair_0, \mcal{N}^\pair_1)$
satisfies $\Pow(\mcal{N}^\pair_0 \| \mcal{N}^\pair_1 ) \leq g$.
Again, by
Fact~\ref{fact:exact_sim},
there exist some interactive mechanism $\mcal{Q}$
and two random seed distributions $G_0$ and $G_1$
such that $\Pow(G_0 \| G_1) \leq g$
and for each $b \in \{0 ,1\}$,
$\mcal{Q}(G_b)$ is an exact simulation of $\mcal{N}^\pair_b$
(with the appropriate random seed distribution $R_\mcal{N}$)
for any adversary.
Specifically, for each~$b$,
$\mcal{N}^\pair_b(R_{\mcal{N}})$
and $\mcal{Q}(G_b)$ are equivalent.

Given an adversary $\mcal{A}$ that
interacts with $\mcal{L}^\pair_b$ for $b \in \{0,1\}$,
we describe how an augmented adversary $\widehat{\mcal{A}}$ interacts
with $\widehat{\mcal{L}}_b$ in the above colored box.
Because we follow exactly the same structure as in Definition~\ref{defn:list_comp},
after comparing the two descriptions line-by-line,
it follows that
$\view(\mcal{A} \leftrightarrow \mcal{L}^\pair_b)$
has the same distribution as $\view(\widehat{\mcal{A}} 
\leftrightarrow \widehat{\mcal{L}}_b)$.

As aforementioned,
we use Fact~\ref{fact:exact_sim},
for each $b \in \{0, 1\}$,
we can equivalently replace $\mcal{M}^\npp_b(R_{\algM}, R'_{\algM})$
with $\mcal{P}(F_b)$, and 
$\mcal{N}^\pair_b(R_{\mcal{N}})$ with 
$\mcal{Q}(G_b)$ in the above description.

Hence, the first inequality below follows from the data processing inequality:

\begin{align*}
\Pow\!\big(\view(\mcal{A} \leftrightarrow \mcal{L}^\pair_0)\,\big\|\,\view(\mcal{A} \leftrightarrow \mcal{L}^\pair_1)\big)
&= \Pow\!\big(\view(\widehat{\mcal{A}} \leftrightarrow \widehat{\mcal{L}}_0)\,\big\|\,\view(\widehat{\mcal{A}} \leftrightarrow \widehat{\mcal{L}}_1)\big)\\
&\le \Pow(F_0 \times G_0 \,\|\, F_1 \times G_1)\\
&\le f \otimes g,
\end{align*}

where the second inequality follows from the data processing inequality and the exact simulations,
and the last inequality follows from Fact~\ref{fact:composition}.
\end{proof}

\section{RandBin: Achieving Privacy for Edit-Style Neighboring Inputs}
\label{sec:randbin}

Continual-release mechanisms are usually analyzed under Hamming-style
neighboring streams. Given a collection~$\mcal{O}$ of objects, an infinite
input stream is a map $x:\Z_{\geq1}\rightarrow\mcal{O}$. Two streams
$x\sim_H x'$ are Hamming-neighboring if they differ in at most one index.
In our setting, however, participation itself is sensitive, so one stream
may instead be obtained from the other by inserting or deleting one item.

\begin{definition}[Neighboring in the Edit Sense]
\label{defn:neighbor_edit}
Given a stream~$x$ and $t\in\Z_{\geq1}$, let
$\mathsf{Delete}(x;t)$ be the stream~$x'$ defined by
\[
x'[\tau]
=
\begin{cases}
x[\tau], & \tau<t,\\
x[\tau+1], & \tau\geq t.
\end{cases}
\]
Two streams $x\simE x'$ are neighboring in the edit sense if one is
obtained from the other by one such deletion. Two finite sequences of the
same length are edit-neighboring if they are prefixes of two
edit-neighboring infinite streams.
\end{definition}

\subsection{Motivating Example: Private Prefix Sums Under Edit Adjacency}
\label{sec:example_prefixsum}

For an integer stream $\sigma$, the prefix-sum workload releases estimates
of $\sum_{i=1}^t\sigma[i]$ for every~$t$. Standard tree-based mechanisms
are designed for Hamming neighbors, where one bounded coordinate changes
between the two streams~\cite{DBLP:conf/stoc/DworkNPR10,DBLP:journals/tissec/ChanSS11}.
An edit can instead shift every subsequent coordinate. For example,
deleting the first element of
\[
(1,0,1,0,\ldots)
\]
produces $(0,1,0,1,\ldots)$, so the two streams differ at every position.
Thus the usual one-coordinate sensitivity analysis does not apply
directly.

The role of $\randbin$ is to bridge these neighboring notions.
It buffers arriving items and occasionally emits a random-sized prefix of
the buffer as one concrete bin. Consider the neighboring streams
\[
\sigma=[a,b,c,d,e,f],
\qquad
\sigma'=[b,c,d,e,f,g].
\]
One possible pair of binnings is
\begin{align*}
\sigma
&:\ [\bot,\bot,\{a,b\},\bot,\bot,\{c,d,e\}],\\
\sigma'
&:\ [\bot,\bot,\{b\},\bot,\bot,\{c,d,e\}].
\end{align*}
The emitted sequences now differ in only one concrete bin. The formal
construction below makes the release schedules identical and confines the
effect of one edit to at most two adjacent concrete bins, while ensuring
that the two paired executions have close distributions.

\subsection{Technical Description of $\randbin$}

We first define the structured Hamming-style relation used at the output
interface.

\begin{definition}[Neighboring Bin Sequences]
\label{defn:neighbor_bin}
For the parameter~$U$ fixed below, let $\mcal{B}$ be the collection of
non-empty concrete bins, each containing an array of at most $2U+1$
objects. A bin sequence of length~$T$ is a map
\[
B:[1..T]\rightarrow\mcal{B}\cup\{\bot\},
\]
where $\bot$ denotes that no concrete bin is emitted. Two concrete bins at
times $i<j$ are adjacent if all outputs strictly between them are~$\bot$.

Two bin sequences $B\simB B'$ are neighboring if they have the same
concrete-bin emission times and differ in at most two positions. More
precisely:
\begin{enumerate}
\item If they differ in exactly one position, that position is the last
concrete bin in both sequences, and the two bins have the form
\[
[\sigma_1,w,\sigma_2]
\qquad\text{and}\qquad
[\sigma_1,\sigma_2,u].
\]
\item If they differ in exactly two positions, these positions contain
adjacent concrete bins in both sequences, and the two pairs have the form
\begin{align*}
[\sigma_1,w,\sigma_2],\ \bot,\ldots,\bot,\ [u,\sigma_3],\\
[\sigma_1,\sigma_2,u],\ \bot,\ldots,\bot,\ [\sigma_3],
\end{align*}
for items $w,u$ and possibly empty arrays
$\sigma_1,\sigma_2,\sigma_3$.
\end{enumerate}
Identical bin sequences are also neighboring. In every case, concatenating
the concrete bins gives finite prefixes of edit-neighboring item streams.
\end{definition}

Thus $\simB$ is a structured distance-at-most-two relation with respect to
ordinary per-bin Hamming adjacency. If a downstream mechanism is analyzed
only for one-coordinate Hamming neighbors, its distance-two, or group-DP,
profile must be used when invoking the modular composition theorem.

\begin{definition}[Symmetric Geometric Distribution]
For $\alpha>1$, the symmetric geometric distribution
$\text{Geom}(\alpha)$ on~$\Z$ has probability mass function
\[
\Pr{\text{Geom}(\alpha)=k}
=
\frac{\alpha-1}{\alpha+1}\alpha^{-|k|}.
\]
\end{definition}

\begin{definition}[Truncated Geometric Distribution]
\label{def:truncated_geom}
Let $\varepsilon>0$ and $\delta\in(0,1)$. Let $U$ be the smallest even
positive integer such that
\[
\Pr{\left|\text{Geom}(e^\varepsilon)\right|\geq U/2}\leq\delta.
\]
In particular,
\[
U=O\left(\frac{1}{\varepsilon}\log\frac{1}{\delta}\right).
\]
The truncated geometric distribution $G(\varepsilon,\delta)$ is obtained
by sampling $R\sim\text{Geom}(e^\varepsilon)$ and returning
\[
\operatorname{clip}(R,[-U/2,U/2])
:=
\min\{\max\{R,-U/2\},U/2\}.
\]
\end{definition}

Algorithm~\ref{alg:randbin} uses the prefix-sum mechanism
$\presum_{\varepsilon_2,\delta_2}$ from
Theorem~\ref{th:prefix_sum}, instantiated with $[a..b]=[U..2U]$.
Write
\[
E[i]:=E_{\varepsilon_2,\delta_2}(i).
\]
The theorem gives all properties needed here: the mechanism is adaptively
$(\varepsilon_2,\delta_2)$-DP for unit-Hamming integer streams; on every
actual load stream in $[U..2U]$, it satisfies
\[
\left|S[i]-\sum_{j=1}^i C[j]\right|\leq E[i],
\qquad
S[i]-S[i-1]\in[U..2U]
\]
simultaneously for every~$i$ and every realization of its randomness.

\begin{algorithm}[H]
\caption{Interactive $\randbin$ Mechanism}
\label{alg:randbin}
\begin{algorithmic}[1]
\STATE \textbf{Input:} $\varepsilon>0$, $\delta\in(0,1)$; a stream
$x:\Z_{\geq1}\rightarrow\mcal{O}$
\STATE \textbf{Output:} A bin sequence
$\Z_{\geq1}\rightarrow\mcal{B}\cup\{\bot\}$; the visible output reveals
only whether a concrete bin or $\bot$ is emitted
\STATE Set
$\varepsilon_1=\varepsilon_2=\varepsilon/2$ and
$\delta_1=\delta_2=\delta/2$
\STATE Let $U$ be the value fixed by
Definition~\ref{def:truncated_geom}
\STATE Let
\[
\mcal{G}:=\frac{3U}{2}+G(\varepsilon_1,\delta_1),
\]
which has support $[U..2U]$
\STATE Independently initialize
$\presum_{\varepsilon_2,\delta_2}$ from
Theorem~\ref{th:prefix_sum} for the interval $[U..2U]$, and set
$E[i]:=E_{\varepsilon_2,\delta_2}(i)$
\STATE Initialize an empty FIFO queue $\buf\gets\emptyset$
\STATE Set
$\ind\gets0$, $C[0]\gets0$, $S[0]\gets0$, $E[0]\gets0$, and
$\bin[0]\gets\bot$
\FOR{each step $t\geq1$ at which $x[t]$ is received}
  \STATE Append $x[t]$ to $\buf$
  \IF{$t<S[\ind]+E[\ind]+2U$} \algline{ln:compare}
    \STATE Output $\bot$
  \ELSE
    \STATE If $C[\ind]=0$, set $\bin[\ind]\gets\bot$; otherwise, remove
    the first $C[\ind]$ items from $\buf$ to form $\bin[\ind]$
    \algline{ln:remove_b}
    \STATE Output $\bin[\ind]$
    \STATE $\ind\gets\ind+1$
    \STATE Sample $C[\ind]\sim\mcal{G}$ independently
    \STATE
    $S[\ind]\gets\presum_{\varepsilon_2,\delta_2}(C[\ind])$
  \ENDIF
\ENDFOR
\end{algorithmic}
\end{algorithm}

\begin{theorem}[Utility Guarantee of $\randbin$]
\label{thm:randbin_utility}
At the end of every step~$t$, for every realization of the randomness,
the concatenation of the concrete bins emitted so far is a prefix of the
input stream, and the number of items remaining in $\buf$ is
\[
O\left(
1+
\frac{1}{\varepsilon}\log(t+1)
\left(
\log(t+1)+\log\frac{1}{\delta}
\right)
\right).
\]
Equivalently, the emitted prefix has length at least $t$ minus this
quantity.
\end{theorem}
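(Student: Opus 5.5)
The plan is a deterministic, realization-by-realization argument. It uses only the two always-valid guarantees of $\presum_{\varepsilon_2,\delta_2}$ quoted from Theorem~\ref{th:prefix_sum}: $|S[i]-\sum_{j\le i}C[j]|\le E[i]$, and $S[i]-S[i-1]\in[U..2U]$. I will also use that the support of $\mcal{G}$ is $[U..2U]$. Write $\ind_t$ for the value of $\ind$ at the end of step $t$, and write $\theta(i):=S[i]+E[i]+2U$ for the emission threshold in line~\ref{ln:compare}. After $\bin[i]$ has been formed, the number of items removed from $\buf$ is exactly $\sum_{j\le i}C[j]$, because $C[0]=0$ and bins are removed in index order.

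\textbf{Step 1 (prefix property).} Bin $\ind=i$ is formed at a step $t\ge\theta(i)$. At that step,
\[
t\;\ge\; S[i]+E[i]+2U\;\ge\;\sum_{j\le i}C[j]+2U\;>\;\sum_{j\le i}C[j],
\]
so the $t$ items received so far suffice for all removals up to and including $\bin[i]$. Since $\buf$ is FIFO, each removal takes the next items in arrival order. By induction on $i$, the concatenation of the concrete bins equals $x[1..\sum_{j\le i}C[j]]$, which is a prefix of the input stream.

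\textbf{Step 2 (the scheduler keeps pace).} I will show the invariant $t\le\theta(\ind_t)$ for every $t$. Three facts drive it. First, at most one emission happens per step. Second, $\theta$ increases by at least $U\ge1$ per index, since $S$ increments by at least $U$; here I will either check that $E(\cdot)$ is nondecreasing from Theorem~\ref{th:prefix_sum}, or replace $E[i]$ by $\max_{i'\le i}E[i']$ in the bound, which the error guarantee still permits. Third, $\theta(0)=2U>0$. Together these give an induction on $t$. If step $t$ has no emission, then $t<\theta(\ind_t)$. If step $t$ emits, then $t-1\le\theta(\ind_{t-1})$ from the previous step, and $\theta(\ind_t)\ge\theta(\ind_{t-1})+U\ge t$. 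In the same induction I get $\ind_t\le t$, because each step increments $\ind$ by at most one.

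\textbf{Step 3 (backlog bound).} Let $i=\ind_t$. All bins with index $<i$ have been formed, so the number of removed items is
\[
\sum_{j\le i-1}C[j]\;\ge\; S[i-1]-E[i-1]\;\ge\; S[i]-2U-E[i-1].
\]
Combining this with Step 2, the buffer size satisfies
\[
t-\sum_{j\le i-1}C[j]\;\le\; E[i]+E[i-1]+4U\;\le\;2E_{\varepsilon_2,\delta_2}(t)+4U .
\]
The last inequality uses $i\le t$ and monotonicity. Substituting $E_{\varepsilon_2,\delta_2}(t)=O(\varepsilon^{-1}\log(t+1)(\log(t+1)+\log(1/\delta)))$ from Theorem~\ref{th:prefix_sum} and $U=O(\varepsilon^{-1}\log(1/\delta))$ from Definition~\ref{def:truncated_geom} gives the claimed bound, with $\varepsilon_2=\varepsilon/2$ and $\delta_2=\delta/2$ affecting only the constants. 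The $U$ term is absorbed because $\log(t+1)\ge\log 2$ for every $t\ge1$. The equivalent statement about the emitted prefix length follows immediately.

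The main obstacle is the index bookkeeping in Step 2. Bin $\ind$ is emitted using the old $C[\ind]$, and only then are $C[\ind+1]$ and $S[\ind+1]$ drawn, so it is easy to make an off-by-one error in comparing the removed count $\sum_{j<\ind}C[j]$ with the threshold $\theta(\ind)$. Ensuring that $E$ may be treated as monotone is the other point to check first against Theorem~\ref{th:prefix_sum}.
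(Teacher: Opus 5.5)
Your proposal is correct and follows the paper's proof essentially step by step: the same underflow/prefix argument from $t\ge S[i]+E[i]+2U\ge A_i+2U$, the same threshold invariant (the paper states it strictly as $t<S[\ind]+E[\ind]+2U$, and your non-strict version costs nothing), and the same chain $t-A_{\ind-1}\le 4U+E[\ind]+E[\ind-1]$ combined with $\ind\le t$. The only loose end is $\ind=0$, where your chain refers to the undefined $S[-1]$; the paper treats this case separately via $t<2U$. Monotonicity of $E$ is stated directly in Theorem~\ref{th:prefix_sum}, so you do not need your running-max fallback, which in any case would not change the threshold the algorithm actually uses.
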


\begin{proof}
Let
\[
A_i:=\sum_{j=1}^i C[j],
\qquad A_0:=0.
\]
Suppose bin~$i\geq1$ is emitted at step~$t$. The emission condition and
the always-valid error bound imply
\[
t
\geq S[i]+E[i]+2U
\geq A_i+2U.
\]
Immediately before removing bin~$i$, the queue therefore contains
\[
t-A_{i-1}\geq C[i]+2U
\]
items, so underflow is impossible. Since the queue is FIFO, the emitted
items always form a prefix of the input stream.

At the end of every step, the current index~$\ind$ satisfies
\begin{equation}
\label{eq:randbin_threshold_invariant}
t<S[\ind]+E[\ind]+2U.
\end{equation}
This is immediate if no bin is emitted. If a bin is emitted, the previous
threshold is crossed for the first time at the current integer step, while
\[
S[\ind+1]-S[\ind]\geq U\geq1
\quad\text{and}\quad
E[\ind+1]\geq E[\ind],
\]
so the next threshold is strictly larger than~$t$.

If $\ind\geq1$, the number of buffered items at the end of step~$t$ is
$t-A_{\ind-1}$. Using
\eqref{eq:randbin_threshold_invariant}, consistency, and the error bound,
\begin{align*}
t-A_{\ind-1}
&<
S[\ind]+E[\ind]+2U
-
\bigl(S[\ind-1]-E[\ind-1]\bigr)\\
&\leq
4U+E[\ind]+E[\ind-1].
\end{align*}
For $\ind=0$, it is simply $t<2U$. Since $\ind\leq t$,
Definition~\ref{def:truncated_geom} and
Theorem~\ref{th:prefix_sum} give the claimed bound.
\end{proof}

\subsection{Adaptive Privacy Analysis}

The visible release schedule of $\randbin$ depends only on its sampled
loads and the prefix-sum scheduler, not on the item values. Nevertheless,
the concrete bins are hidden outputs passed to the downstream mechanism.
We must therefore control the joint distribution of the visible schedule
and these hidden outputs, as required by adaptive NPDP in
Definition~\ref{defn:NPDO_mech}.

\paragraph{The NPP simulation.}
Recall from Definition~\ref{def:npp} that an NPP simulator receives paired
edit-neighboring input prefixes $(x^{(0)},x^{(1)})$. In branch~$b$, it must
simulate the real execution on $x^{(b)}$ and simultaneously construct a
synthetic output sequence for $x^{(1-b)}$ so that the two output sequences
are $\simB$-neighboring. The construction below makes explicit the paired
refinement underlying the static RandBin analysis
in~\cite{DBLP:conf/eurocrypt/ZhouSCM23}.

As a standard bookkeeping convention, we assume that each logical item
appears at most once and carries a unique tag, preserved across the two
neighboring streams and ignored by downstream computation. Thus, if the
two input streams first differ at step~$\tau$, then at the beginning of
step~$\tau+1$ the simulator can determine which stream contains the
unmatched item~$w$. This one-step delay is harmless: the $2U$ slack
ensures that $w$ cannot be emitted at step~$\tau$.

\begin{definition}[NPP Simulation for $\randbin$]
\label{defn:randbin_npp}
For $b\in\{0,1\}$, $\randbin^{\npp}_b$ runs
Algorithm~\ref{alg:randbin} on $x^{(b)}$, producing the visible release
schedule and the real bin sequence~$y^{(b)}$. It uses the same release
times to construct the synthetic sequence~$y^{(1-b)}$.

Until the unmatched item affects an emitted bin, the two bin sequences are
identical. Suppose first that the real branch is the longer input stream,
and let the real adjacent concrete bins containing the unmatched item~$w$
and the following item~$u$ be
\[
[\sigma_1,w,\sigma_2],
\ \bot,\ldots,\bot,\
[u,\sigma_3].
\]
The simulator outputs in the synthetic sequence
\[
[\sigma_1,\sigma_2,u],
\ \bot,\ldots,\bot,\
[\sigma_3].
\]
If the second concrete bin has not yet been emitted, only the first
replacement is made, and the second is deferred. If the real branch is
the shorter stream, the simulator applies the inverse transformation.
The construction for $b=1$ is symmetric.

Consequently, the two output sequences have identical release times and
are neighboring according to Definition~\ref{defn:neighbor_bin}.
\end{definition}

The construction is online. Indeed, suppose the first affected bin is
bin~$J$. At its emission time~$t$,
\[
t\geq S[J]+E[J]+2U\geq A_J+2U.
\]
Thus, after removing its $C[J]$ items, at least one further item~$u$ is
already available in the buffer.

We now isolate the two privacy components: the shifted fresh load and the
private prefix-sum scheduler.

\begin{lemma}[Adaptive Privacy of $\randbin$] 
\label{lemma:randbin_NPDO}
Let $Z\sim\mcal{G}$. Suppose that, for some power function
$f_{\mathrm{load}}$,
\begin{equation}
\label{eq:randbin_load_profile}
\Pow\left(
Z
\middle\|
Z-\theta
\right)
\leq f_{\mathrm{load}}
\qquad
\text{for both }\theta\in\{-1,+1\}.
\end{equation}
Suppose also that the prefix-sum scheduler is adaptively
$f_{\mathrm{sum}}$-DP for unit-Hamming integer streams and uses randomness
independent of the load samples. Then the NPP simulation in
Definition~\ref{defn:randbin_npp} satisfies
\[
\Pow\left(
\randbin^{\npp}_0
\middle\|
\randbin^{\npp}_1
\right)
\leq
f_{\mathrm{load}}\otimes f_{\mathrm{sum}}.
\]
Hence $\randbin$ is adaptively
$(f_{\mathrm{load}}\otimes f_{\mathrm{sum}})$-NPDP.
\end{lemma}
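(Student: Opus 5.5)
Fix an adaptive adversary $\mathcal{A}$ for the NPP game and compare the two branches $\randbin^{\npp}_0$ and $\randbin^{\npp}_1$. The plan is a coupling of the two branches' randomness followed by the composition rule (Fact~\ref{fact:composition}) and joint convexity (Fact~\ref{fact:joint_convexity}).

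\textbf{Step 1: the augmented view is a function of the randomness.} The augmented view consists of the visible schedule together with the pair of bin sequences $(y^{(0)},y^{(1)})$. The key structural observation is that, given the adversary's inputs, this pair is a deterministic function of the load vector $C[1],C[2],\ldots$ and the scheduler outputs $S[1],S[2],\ldots$. The release times are determined by $S[\ind]+E[\ind]+2U$. The bin contents are determined by the cumulative loads $A_i$ applied to $x^{(b)}$, with the synthetic sequence given by the rule in Definition~\ref{defn:randbin_npp}.

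\textbf{Step 2: the coupling.} Let $J$ be the first bin that contains, or would contain, the unmatched item $w$. Suppose branch $b$ is the longer stream, with bin $J$ equal to $[\sigma_1,w,\sigma_2]$. In the other branch the same release times must produce bin $J$ equal to $[\sigma_1,\sigma_2,u]$, and this has the same load. Following the next bin then produces $[\sigma_3]$ in place of $[u,\sigma_3]$, which is one item fewer.

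The natural map between the two branches therefore keeps all loads equal except $C[J+1]$, which is shifted by $\theta=\pm1$. The scheduler input stream then differs in exactly one coordinate, so it is unit-Hamming. This is why the scheduler must consume the loads rather than the item counts.

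Two points need checking. First, the sign and index should be pinned down via the equivalence "branch $b$ with loads $C$" $\leftrightarrow$ "branch $1-b$ with loads $C'$", where $C'$ differs from $C$ in one coordinate by $\pm1$. Second, the position $J$ is chosen adaptively: it depends on the earlier schedule and on when $\mathcal{A}$ inserts the edit. Both views should then be the same deterministic post-processing of (load sample at the distinguished index, scheduler transcript, all other loads), with branch $b$ using $(Z,\ S\text{ on }C)$ and branch $1-b$ using $(Z-\theta,\ S\text{ on }C')$.

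\textbf{Step 3: bounding the profile.} I would condition on the common prefix up to the step where the discrepancy index $J$ becomes determined. Joint convexity (Fact~\ref{fact:joint_convexity}) then reduces the problem to a fixed history.

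Conditionally, the view splits into two parts. The first is the single load draw $Z$ versus $Z-\theta$, which has profile at most $f_{\mathrm{load}}$ by \eqref{eq:randbin_load_profile}. The second, conditioned on that draw, is the scheduler's adaptive interaction on unit-Hamming neighboring integer streams, where later loads are fresh common draws that act as adaptive inputs. This part has profile at most $f_{\mathrm{sum}}$ by adaptivity of the scheduler and independence of its randomness from the loads.

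Fact~\ref{fact:composition} then gives $f_{\mathrm{load}}\otimes f_{\mathrm{sum}}$. I expect the order to matter here: first reveal the load, then run the scheduler, with the scheduler's view depending on the load. The data processing inequality then transfers the bound to the augmented views.

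\textbf{Main obstacle.} The hard part is making the change of variables rigorous while $J$, and even whether the edit has occurred yet, is random and adversary-dependent. It must also respect the $\simB$ boundary cases: the deferred second replacement, the case where the second bin is never emitted, and the one-step tag delay.

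I would first handle this by realizing the scheduler through Fact~\ref{fact:exact_sim} as $\mathcal{Q}(G_b)$. I would then treat the whole NPP execution as an interactive mechanism fed by a load-sampling submechanism, and define the coupling round by round. Before the discrepancy is triggered the executions are identical. At the trigger round exactly one load is shifted. After that all randomness is common.
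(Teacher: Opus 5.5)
Your Steps 1--2 match the paper. The augmented view is a FIFO post-processing of the input pairs, the paired loads and the scheduler outputs. The only discrepancy is one load at $K=J+1$, shifted by $\theta\in\{-1,+1\}$, so the scheduler sees unit-Hamming neighbors.

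\textbf{The gap is in Step 3.} The common prefix you condition on contains the scheduler's own releases $S[1],\dots,S[K-1]$, since the emission times are functions of them. You then assert that the conditional continuation of the scheduler has profile at most $f_{\mathrm{sum}}$ ``by adaptivity of the scheduler.'' Adaptive $f_{\mathrm{sum}}$-DP only bounds the unconditional law of the whole transcript. Continuations given a prefix of the mechanism's own outputs do not inherit it, and joint convexity (Fact~\ref{fact:joint_convexity}) runs the wrong way here.

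A two-round example shows the failure. Publish an input-independent coin with bias $\delta$; if it lands heads, release the second input exactly. This is adaptively $(0,\delta)$-DP, yet conditioned on heads the continuation is perfectly distinguishable. The clipped scheduler of Theorem~\ref{th:prefix_sum} has this flavor, because its $\delta_2$ comes from a noise event that some prefixes make likely. Fact~\ref{fact:composition} needs the second-stage bound for \emph{every} first-stage outcome, so this cannot be averaged away.

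Your closing plan with $\mcal{Q}(G_b)$ does not rescue this ordering either. Under exact simulation the scheduler seeds are never common: they are $G_0$ versus $G_1$ from the start. Conditioning on the common prefix changes their laws, so $\Pow(G_0\|G_1)\le f_{\mathrm{sum}}$ no longer applies.

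\textbf{How the paper avoids this.} It conditions only the load randomness. Since $\operatorname{Law}(C[K]\mid\mcal{F}_K)=\mcal{G}$ at every reachable pre-$K$ history, scheduler outputs included, the injective map $a\mapsto(a,a+\theta)$ shows that the paired-length generator \emph{by itself} is adaptively $f_{\mathrm{load}}$-NPDP. The scheduler then enters only through Theorem~\ref{thm:DP-comp}. That proof draws both modules' seeds up front from $F_b$ and $G_b$ and bounds the view as a post-processing of $F_b\times G_b$, so the scheduler's guarantee is never conditioned.

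In your language, reverse the order. First draw the scheduler seed $s\sim G_b$. Then, for each fixed $s$, show the load process is $f_{\mathrm{load}}$-close; conditioning on the pre-$K$ history is now harmless. This gives $f_{\mathrm{sum}}\otimes f_{\mathrm{load}}=f_{\mathrm{load}}\otimes f_{\mathrm{sum}}$.

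\textbf{A second point left open.} You do not show why $K$ and $\theta$ are fixed before $C[K]$ is drawn. Without this, the conditional law of the distinguished load is not $Z$ versus $Z-\theta$. The paper settles it with the $2U$ slack:
\begin{itemize}
\item the edited item cannot enter the current bin, so $J\ge i+1$;
\item the deletion direction is known by step $\tau+1$, and at most one bin is emitted per step;
\item $C[K]$ is sampled only after bin $J$ is emitted.
\end{itemize}
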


\begin{proof}
Fix a finite horizon~$T$. By joint convexity
(Fact~\ref{fact:joint_convexity}), it suffices to fix a deterministic
adaptive adversary~$\algA$. We analyze the stronger view that also reveals
the paired bin lengths and every prefix-sum output~$S[k]$ when it is
generated. The original NPP transcript is a post-processing of this view.

If no edit occurs, the two branches are identical. Otherwise, let
$\tau$ be the first step at which the input prefixes differ, and let~$i$
be the current bin index immediately before the item at step~$\tau$ is
appended. If $i=0$, this item cannot enter bin~$0$ because $C[0]=0$.
If $i\geq1$, bin~$i-1$ was emitted at some step $r<\tau$. At the beginning
of step~$\tau$, before appending the new item, the buffer therefore
contains at least
\[
r-A_{i-1}
\geq
S[i-1]+E[i-1]+2U-A_{i-1}
\geq2U
\]
older items. Since $C[i]\leq2U$, the new item cannot enter bin~$i$.

Let $J$ be the first affected concrete-bin index. The preceding argument
gives $J\geq i+1$. The simulator preserves the length of bin~$J$ by
replacing~$w$ with~$u$; hence the paired length streams
$D^{(0)},D^{(1)}$ differ only at
\[
K:=J+1,
\]
where their values differ by exactly~$1$. The deletion direction is known
at the beginning of step~$\tau+1$, whereas at most one bin can be emitted
in any step. Thus bin~$J$ cannot be emitted before the direction is known.
Moreover, Algorithm~\ref{alg:randbin} samples $C[K]$ only after bin~$J$
has been emitted. Consequently, both $K$ and the sign
\[
\theta:=D^{(1)}[K]-D^{(0)}[K]\in\{-1,+1\}
\]
are determined before $C[K]$ is sampled.

Let $\mcal{F}_k$ be the complete augmented history immediately before
$C[k]$ is sampled. Since the loads are fresh and independent,
\begin{equation}
\label{eq:randbin_fresh_load}
\operatorname{Law}(C[k]\mid\mcal{F}_k)=\mcal{G}.
\end{equation}
For $k\neq K$, the paired lengths satisfy
\[
(D^{(0)}[k],D^{(1)}[k])=(C[k],C[k])
\]
in both branches. At coordinate~$K$, they satisfy
\[
(D^{(0)}[K],D^{(1)}[K])
=
\begin{cases}
(C[K],C[K]+\theta), & \text{in branch }0,\\
(C[K]-\theta,C[K]), & \text{in branch }1.
\end{cases}
\]
Thus, conditioned on every reachable pre-$K$ history, the first coordinate
has law~$Z$ in branch~$0$ and law~$Z-\theta$ in branch~$1$. In either
branch, the second coordinate is obtained from the first by adding
$\theta$. Hence the paired release is obtained from the corresponding
scalar experiment through the same injective map
\[
a\longmapsto(a,a+\theta),
\]
whose inverse on its image is projection onto the first coordinate.
Its conditional power function is therefore exactly
\[
\Pow(Z\|Z-\theta)
\leq f_{\mathrm{load}}
\]
by \eqref{eq:randbin_load_profile}.

All earlier and later load kernels are common to the two branches. By the
composition rule for power functions, joint convexity, and data processing,
the interactive paired-length generator is adaptively
$f_{\mathrm{load}}$-NPDP. This conclusion remains valid although $K$ is
selected from previous scheduler outputs, because the conditional bound
holds at every reachable pre-$K$ history.

The streams $D^{(0)}$ and $D^{(1)}$ are unit-Hamming neighbors. Their
synthetic coordinates may lie in $[U-1..2U+1]$, but
Theorem~\ref{th:prefix_sum} defines the scheduler and guarantees adaptive
privacy on all integer streams; the interval $[U..2U]$ is needed only for
the execution-wise consistency guarantee on the actual load stream.
Applying the modular composition theorem
(Theorem~\ref{thm:DP-comp}) to the paired-length generator and the
adaptive prefix-sum scheduler gives the joint profile
\[
f_{\mathrm{load}}\otimes f_{\mathrm{sum}}.
\]

Finally, the input pairs, paired lengths, and scheduler outputs determine
the complete NPP transcript by the FIFO rule: the scheduler outputs
determine the emission times, and the paired lengths determine the two bin
contents. As shown above, the $2U$ slack makes this reconstruction causal.
The claimed bound follows by data processing. Since $T$ and $\algA$ were
arbitrary, the result holds for the unbounded interaction.
\end{proof}

\begin{corollary}
\label{cor:adap_randbin}
The interactive mechanism $\randbin$ in
Algorithm~\ref{alg:randbin} is adaptively
$(\varepsilon,\delta)$-NPDP with input neighboring relation~$\simE$ and
output neighboring relation~$\simB$.
\end{corollary}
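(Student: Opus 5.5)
The corollary follows by instantiating Lemma~\ref{lemma:randbin_NPDO} with the parameter split of Algorithm~\ref{alg:randbin}. Two inputs must be checked: a bound on $f_{\mathrm{load}}$ and a bound on $f_{\mathrm{sum}}$. The tensor product is then bounded by ordinary $(\varepsilon_1+\varepsilon_2,\delta_1+\delta_2)=(\varepsilon,\delta)$ composition. Definition~\ref{defn:randbin_npp} already guarantees that the paired outputs are $\simB$-neighboring whenever the inputs are $\simE$-neighboring, so the neighboring relations in the statement come for free.

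\textbf{Load profile.} I will show that $Z\sim\mcal{G}=\frac{3U}{2}+G(\varepsilon_1,\delta_1)$ satisfies
\[
\Pow(Z\|Z-\theta)\leq\DP_{\varepsilon_1,\delta_1}\qquad\text{for }\theta\in\{-1,+1\}.
\]
Shifting by a constant is a bijection, so it suffices to treat the truncated geometric $G=\operatorname{clip}(R,[-U/2,U/2])$ against $G-\theta$.

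\emph{Interior points.} For $k$ with $|k|,|k+\theta|<U/2$, both masses are untruncated symmetric-geometric masses. Their ratio is $\alpha^{\pm1}=e^{\pm\varepsilon_1}$.

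\emph{Endpoints.} The only problematic points are the endpoints $\pm U/2$ of one support and the point $\pm(U/2+1)$ just outside it, which is in the support of $G-\theta$ but not of $G$. That outside point carries mass
\[
\Pr[G=\pm U/2]=\Pr[\pm R\geq U/2]\leq\Pr[|R|\geq U/2]\leq\delta_1
\]
by the choice of $U$ in Definition~\ref{def:truncated_geom}. The endpoint masses themselves are tail sums of a geometric series, and their ratio against the neighboring tail (shifted by one) is again at most $e^{\varepsilon_1}$.

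\emph{Conclusion.} For every set $S$,
\[
\Pr[G\in S]\leq e^{\varepsilon_1}\Pr[G-\theta\in S]+\delta_1,
\]
and symmetrically in the other direction. By the power-function characterization of DP, this is $\Pow\leq\DP_{\varepsilon_1,\delta_1}$.

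\textbf{Scheduler profile.} Theorem~\ref{th:prefix_sum} (as quoted before Algorithm~\ref{alg:randbin}) states that $\presum_{\varepsilon_2,\delta_2}$ is adaptively $(\varepsilon_2,\delta_2)$-DP for unit-Hamming integer streams on all integer streams. Its randomness is initialized independently of the loads. Hence $f_{\mathrm{sum}}=\DP_{\varepsilon_2,\delta_2}$.

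\textbf{Combining.} Lemma~\ref{lemma:randbin_NPDO} gives the profile $\DP_{\varepsilon_1,\delta_1}\otimes\DP_{\varepsilon_2,\delta_2}$. Basic composition for power functions bounds this by $\DP_{\varepsilon_1+\varepsilon_2,\delta_1+\delta_2}$. One way to see this is to write each $(\varepsilon,\delta)$ pair as a mixture of a pure-DP pair and a failure event of mass $\delta$, then apply Fact~\ref{fact:composition} with joint convexity. With $\varepsilon_i=\varepsilon/2$ and $\delta_i=\delta/2$ this yields $\DP_{\varepsilon,\delta}$, which is the claim.

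\textbf{Main obstacle.} I expect the endpoint bookkeeping in the load profile to be the hardest step. The clipping places point masses at $\pm U/2$, and one must verify that the shift by $\theta$ leaves exactly one point where the likelihood ratio is infinite, and that this point's mass is controlled by $\Pr[|R|\geq U/2]\leq\delta_1$ rather than by a two-sided quantity. The natural fix is to compare tails rather than individual atoms at the boundary.
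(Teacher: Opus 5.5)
Your architecture is the paper's: you instantiate Lemma~\ref{lemma:randbin_NPDO} with $f_{\mathrm{sum}}=\DP_{\varepsilon_2,\delta_2}$ from Theorem~\ref{th:prefix_sum}, show $f_{\mathrm{load}}\leq\DP_{\varepsilon_1,\delta_1}$, and finish with basic composition. The scheduler and combining steps are fine. The gap is in the load profile, at your claim that the clipped endpoint masses have likelihood ratio at most $e^{\varepsilon_1}$ against ``the neighboring tail.'' A pointwise likelihood ratio never compares two tails. At a boundary point, one law carries a clipped tail atom and the other carries a single unclipped atom, or nothing.

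Take $\alpha=e^{\varepsilon_1}$, $m=U/2$ and $\theta=+1$, so $\Pr[G-1=k]=\Pr[G=k+1]$. At $k=-m$, which lies in both supports, you are comparing $\Pr[R\leq -m]=\alpha^{1-m}/(\alpha+1)$ with $\Pr[R=1-m]=(\alpha-1)\alpha^{1-m}/(\alpha+1)$. The ratio is $1/(e^{\varepsilon_1}-1)$. This exceeds $e^{\varepsilon_1}$ whenever $\varepsilon_1<\ln\frac{1+\sqrt5}{2}$, and it is about $1/\varepsilon_1$ in the usual small-$\varepsilon$ regime. Suppose you charge the support-mismatch point in each direction: $-m-1$ for one inequality and $m$ for the other. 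Each direction still contains a second atom that your argument wrongly treats as a bounded-ratio point. So the set inequality you assert does not follow from your argument, for example at $S=\{-m\}$.

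The repair is what the paper does. In each direction, put \emph{both} atoms carrying clipped tail mass of the left-hand law into the exceptional set. For $\Pr[G\in S]\leq e^{\varepsilon_1}\Pr[G-\theta\in S]+\delta_1$ these are the atoms $\pm m$ of $G$. For the reverse inequality they are the atoms $\pm m-\theta$ of $G-\theta$. At every other point the relevant ratio is $\alpha^{\pm1}$, $\alpha-1$, or $0$, all at most $e^{\varepsilon_1}$. The exceptional mass in either direction is $\Pr[R\geq m]+\Pr[R\leq -m]=\Pr[|R|\geq U/2]\leq\delta_1$. This is exactly the two-sided quantity fixed in Definition~\ref{def:truncated_geom}.

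So your remark that the charge should be controlled by a one-sided rather than a two-sided quantity is backwards. ``Comparing tails rather than atoms'' is also not a valid substitute, since the inequality must hold for singleton sets. Your final parameters come out right only because you bounded a single tail by $\Pr[|R|\geq U/2]$, which happens to leave room for the second atom.
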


\begin{proof}
Let $Z\sim\mcal{G}$ and fix $\theta\in\{-1,+1\}$. Away from the atoms
inherited from the two clipped tails, the likelihood ratio between $Z$
and $Z-\theta$ is at most $e^{\varepsilon_1}$. The exceptional atoms
under $Z$ have total mass at most
\[
\Pr{\left|\text{Geom}(e^{\varepsilon_1})\right|\geq U/2}
\leq\delta_1.
\]
Hence
\[
\Pow(Z\|Z-\theta)
\leq
\DP_{\varepsilon_1,\delta_1}
\qquad
\text{for both }\theta\in\{-1,+1\}.
\]
By Theorem~\ref{th:prefix_sum}, the scheduler is adaptively
$(\varepsilon_2,\delta_2)$-DP for unit-Hamming integer streams.
Lemma~\ref{lemma:randbin_NPDO} and basic composition now give
\begin{align*}
\Pow\left(
\randbin^{\npp}_0
\middle\|
\randbin^{\npp}_1
\right)
&\leq
\DP_{\varepsilon_1,\delta_1}
\otimes
\DP_{\varepsilon_2,\delta_2}\\
&\leq
\DP_{\varepsilon_1+\varepsilon_2,\,
\delta_1+\delta_2}
=
\DP_{\varepsilon,\delta}.
\end{align*}
\end{proof}

\section{Adaptive Safety via Quantitative Decomposability}
\label{sec:ind_decomp}

This section proves the adaptive-safety certification used in the modular
pipeline.  The goal is to identify structural conditions under which an
ordinary non-adaptive privacy analysis of a continual mechanism already
implies privacy against adaptive adversaries.  The result applies to
Hamming-style neighboring input streams and is stated in the power-function
language used throughout the appendix.

The first condition is a fresh-randomness, prefix-causality condition.  At
round~$t$, the mechanism may depend on the entire input prefix $x[1..t]$, but
the only randomness used for the round-$t$ output is a fresh seed
$\omega_t$ independent of all other rounds.  This rules out hidden
cross-round correlations introduced by reused noise, while still allowing the
release at time~$t$ to be an arbitrary deterministic function of the input
prefix.

\begin{definition}[Independently Decomposable Mechanisms]
\label{defn:indep_decomp}
An (abstract) interactive mechanism with
transition function 
$\algM: \mcal{R} \times \mcal{U}^*\rightarrow \mcal{U}$
is independently decomposable if the following properties hold.

\begin{compactitem}

\item Each random seed $r \in \mcal{R}$
takes the form $r = (\omega_1, \omega_2, \ldots) \in \Omega^{\N}$.
Moreover, the mechanism
samples $\omega_t$ independently from some distribution~$R_t$ in time step~$t$.

\item  For each~$t \geq 1$,
there exists a deterministic function $\algM_t$
such that the output $y_t \gets \algM_t(\omega_t; x[1..t]) \in \mcal{U}$
depends only on $x[1..t]$ and the randomness~$\omega_t$ sampled in step~$t$.

\end{compactitem}

\end{definition}

\paragraph{Quantitatively Decomposable Mechanisms.}
Independent decomposability alone records where the randomness enters, but
does not quantify how much a change in the input prefix can affect a
one-round release.  We therefore add a quantitative condition.  The condition
says that, for each round~$t$, the power function between the two possible
round-$t$ output laws is determined exactly by the distance between two
aggregates of selected input coordinates.  The invariant-pair condition below
ensures that common aggregation context can be ignored: only the part of the
aggregate containing the Hamming discrepancy contributes to the distance.

\begin{definition}[Invariant Pair]
\label{defn:inv_pair}
Let $d:\mcal{U}\times\mcal{U}\rightarrow\R_{\geq 0}$ be a metric, and let $(\mcal{U},\oplus,0_{\oplus})$ be a monoid. 
We say that $(d, \oplus)$ is an invariant pair 
(under common aggregation context) if for all $a,b,u,u'\in\mcal{U}$,
$$
d(a\oplus u\oplus b, a\oplus u'\oplus b)=d(u,u').
$$
\end{definition}

\noindent \textbf{Notation.}
For an index set $I=\{\tau_1<\tau_2<\cdots<\tau_k\}\subseteq[1..t]$ and an input history $x[1..t]\in\mcal{U}^t$, define
$$
\bigoplus_{\tau\in I} x[\tau]
:=
x[\tau_1]\oplus x[\tau_2]\oplus\cdots\oplus x[\tau_k].
$$
For $I=\emptyset$, we use the convention
$\bigoplus_{\tau\in I}x[\tau]:=0_{\oplus}$.
Since $\oplus$ is associative and the elements are aggregated in increasing index order, this expression is well-defined without requiring commutativity.

\begin{definition}[Quantitatively Decomposable Mechanism]
\label{defn:quant_decomp}
An interactive mechanism
$\algM:\mcal{R}\times\mcal{U}^*\rightarrow\mcal{U}$
is \emph{quantitatively decomposable} with respect to an invariant pair
$(d,\oplus)$ in the sense of Definition~\ref{defn:inv_pair}
if it is independently decomposable in the sense of
Definition~\ref{defn:indep_decomp}, and there exist a sequence of index sets
$\{I_t\}_{t\geq 1}$ with $I_t\subseteq[1..t]$ and, for every
$t\geq 1$ and $\theta\geq 0$, a power function
$g_t(\theta):[0,1]\rightarrow[0,1]$ such that the following hold.

\begin{compactitem}
\item For every $t\geq 1$, $g_t(\theta)\leq g_t(\theta')$ whenever
$0\leq \theta\leq \theta'$.

\item For every $t\geq 1$ and every pair of input histories
$x[1..t],x'[1..t]\in\mcal{U}^t$,
\[
\Pow\left(
\algM_t(R_t;x[1..t])
\|
\algM_t(R_t;x'[1..t])
\right)
=
g_t\left(
d\left(
\bigoplus_{\tau\in I_t}x[\tau],
\bigoplus_{\tau\in I_t}x'[\tau]
\right)
\right).
\]
\end{compactitem}
Here, $\algM_t(R_t;x[1..t])$ denotes the distribution of
$\algM_t(\omega_t;x[1..t])$ when $\omega_t$ is sampled from $R_t$.
\end{definition}

We can now state the certification theorem.  The only-if direction is
immediate, since non-adaptive adversaries are a special case of adaptive
adversaries.  The content is the converse: for quantitatively decomposable
mechanisms, adaptivity does not increase the worst-case power-function
profile.  Thus any finite-horizon non-adaptive $f$-DP proof automatically
lifts to the paired-simulation adaptive game.

\begin{theorem}[Adaptive Safety of Quantitatively Decomposable Mechanisms]
\label{thm:noise-ind-DP}
Consider the Hamming-style neighboring relation on equal-length input
sequences in $\mcal{U}$, where two sequences are neighboring if they differ
in at most one position.  Let $f$ be a power function.  Suppose
$\algM:\mcal{R}\times\mcal{U}^*\rightarrow\mcal{U}$ is quantitatively
decomposable in the sense of Definition~\ref{defn:quant_decomp}.  If
$\algM$ is non-adaptively $f$-DP in the paired-simulation sense of
Definition~\ref{defn:do}, for every finite horizon, then $\algM$ is
adaptively $f$-DP in the same sense, for every finite horizon.
\end{theorem}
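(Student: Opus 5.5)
We argue by induction on the horizon $T$, in the paired-simulation game of Definition~\ref{def:paired_sim}. By joint convexity (Fact~\ref{fact:joint_convexity}) we may fix a deterministic adaptive adversary $\algA$. Since the Hamming relation is prefix-closed and admits at most one differing position, every paired transcript splits at a random time $\tau^*$, namely the first round at which $x^{(0)}_t\neq x^{(1)}_t$. Before $\tau^*$ all pairs are equal, and after $\tau^*$ they must again be equal componentwise.

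The key reduction concerns the rounds after $\tau^*$. Fix $\tau^*=s$ with discrepancy pair $(u,u')$. For every $t\geq s$, take common histories $x[1..t]$ and $x'[1..t]$ that agree except at $s$. Splitting $\bigoplus_{\tau\in I_t}$ into the part before $s$, the term at $s$, and the part after $s$, the invariant-pair property gives a distance of $d(u,u')$ when $s\in I_t$ and $0$ otherwise. By quantitative decomposability, the round-$t$ kernel therefore has power function exactly $g_t(\theta_t)$, where $\theta_t\in\{0,d(u,u')\}$ depends only on $(t,s,u,u')$ and not on the adaptively chosen common inputs.

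Fresh randomness then lets us apply the composition rule (Fact~\ref{fact:composition}) round by round. The adversary's choices at rounds $t>s$ are functions of past views, and every branch of that tree of choices satisfies the same per-round bound. Hence the view distributions from round $s$ onward are bounded by $\bigotimes_{t\geq s} g_t(\theta_t)$.

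The next step is to show that this tensor is attained, up to $\leq$, by a \emph{non-adaptive} adversary. Consider the fixed streams that use the realized common prefix, the pair $(u,u')$ at position $s$, and arbitrary common values afterwards. Their view is exactly a product of the same kernels, because the outputs are independent given fixed inputs. Its power function is therefore $\bigotimes_{t\leq T} g_t(\theta_t)$, where rounds before $s$ contribute $g_t(0)=\mathsf{Id}$ (equal inputs give identical laws). The non-adaptive hypothesis then yields $\bigotimes_t g_t(\theta_t)\leq f$.

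It remains to handle the rounds before $\tau^*$. Here we induct on the pre-discrepancy phase. While the pairs are equal, both branches produce identical output laws, so we condition on the common prefix of views. The adaptive choice of $s$ and $(u,u')$ is then a mixture over common histories, and joint convexity combines the per-history bounds into the bound $f$. This is the structure of the proof sketch for Theorem~\ref{thm:cert}: if the discrepancy is at the first pair, apply the tensorization directly; otherwise condition on the first release and apply the induction hypothesis to the shifted mechanism, which is again quantitatively decomposable with the first input fixed as common context.

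The main obstacle is making the ``adaptive product equals a non-adaptive product'' step rigorous. The adaptive view is not literally a product distribution, because later inputs depend on earlier outputs. The plan is to prove by backward induction over rounds that $\Pow$ of the adaptive view is at most $g_s(\theta_s)\otimes\Pow(\text{rest})$, using Fact~\ref{fact:composition} with the conditional kernels, and that the remaining part is uniformly bounded by the non-adaptive tensor of the remaining rounds. Monotonicity of $g_t$ is not even needed, since $\theta_t$ is pinned exactly.

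A subtlety is that the non-adaptive adversary must also realize the same $(s,u,u')$. Because the common pre-discrepancy history can be fixed deterministically after conditioning, this works once we show that the bound $f$ holds for every fixed $(s,u,u')$ and every fixed common prefix.
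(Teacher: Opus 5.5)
Your proposal is correct and follows the paper's overall skeleton. The invariant-pair property fixes each post-discrepancy round at profile $g_t(\theta_t)$, regardless of the adaptively chosen common inputs. A fixed pair of Hamming-neighboring streams realizes the same tensor. The common pre-discrepancy phase can be stripped off because its law is identical in both branches.

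\textbf{Tensorization step.} The paper claims an \emph{exact} identity between the adaptive transcript's power function and $\bigotimes_t h_t$, justified by a likelihood-ratio characterization of power functions. You only need the inequality $\leq\bigotimes_{t\geq s}g_t(\theta_t)$, which follows by iterating Fact~\ref{fact:composition} over the conditional kernels. The non-adaptive comparison stream, which is a genuine product, then gives $\bigotimes_{t\geq s}g_t(\theta_t)\leq f$. Your route is more elementary and also more robust. It avoids the informal ``up to splitting of atoms'' argument. It also covers prefixes on which the adversary has already terminated the simulation, simply because $\mathsf{Id}\leq g_t(\theta_t)$; on such prefixes the conditional kernel is $\mathsf{Id}$ rather than $h_t$, so the paper's exactness claim needs a caveat there. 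The paper's route, where it holds, buys the stronger statement that adaptivity changes nothing in Case~1.

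\textbf{Pre-discrepancy phase.} You split the aggregate as $a\oplus u\oplus b$, with both left and right common context. This lets you condition at once on the entire stopped prefix up to $\tau^*$ and compare against a non-adaptive adversary that replays it. Joint convexity then finishes, since the law of the stopped prefix, including the event $\{\tau^*=s\}$, is the same under both bits. This makes three pieces of the paper's argument unnecessary: the induction on the horizon, the check that the shifted mechanism $\algM^{(c)}$ is again quantitatively decomposable, and the auxiliary $\sup_\lambda g_\lambda$. The inductive variant you also sketch is exactly the paper's Case~2 and works as well.
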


\begin{proof}
We prove the following finite-horizon statement by induction on~$T$: for every
power function~$f$, if $\algM$ is non-adaptively $f$-DP up to horizon~$T$,
then $\algM$ is adaptively $f$-DP up to horizon~$T$.  This implies the theorem
because the theorem assumes the non-adaptive guarantee for every finite horizon.

We first reduce to deterministic adversaries.  A randomized adversary can be
viewed as first sampling its random seed and then running a deterministic
adversary.  For every fixed seed, the deterministic-adversary case gives the
desired power-function bound.  Taking the mixture over the same seed
distribution under the two secret bits preserves the bound by the joint
convexity property in Fact~\ref{fact:joint_convexity}.  Hence it suffices to
consider deterministic adaptive adversaries.

We also assume, without loss of generality, that the adversary observes the
entire output of $\algM$ in each round.  If the real adversary observes only a
view of the output, the real view is obtained from the full-output transcript
by post-processing, and the desired bound follows from data processing.  If
the mechanism has an initial output before the adversary sends any input, this
initial output has the same distribution under the two secret bits; it may be
included as an additional common first observation and does not affect the
privacy bound.  Thus we count rounds from the first adversarial input.

Let $\algA$ be a deterministic adaptive adversary in the paired simulation
$(\algM^\pair_0,\algM^\pair_1)$.  We pad terminated executions with a fixed
halt symbol, so all transcripts have length~$T$.  This padding is
post-processing and does not change the privacy claim.  We must prove
\[
\Pow\!\left(
\view(\algA \leftrightarrow \algM^\pair_0)
\|
\view(\algA \leftrightarrow \algM^\pair_1)
\right)
\leq f .
\]

The base case $T=1$ is immediate.  With only one input round, the adversary has
no previous mechanism output on which to adapt, so the adaptive and
non-adaptive paired simulations coincide.

Assume the statement holds for horizon $T-1$, and consider horizon $T>1$.
Let
\[
(x^{(0)}[1],x^{(1)}[1])
\]
be the first pair of inputs produced by $\algA$.  Since $\algA$ is
deterministic, this pair is fixed.  Let
\[
U^{(b)} := \algM_1(R_1;x^{(b)}[1])
\]
be the first-round output distribution under secret bit $b$.

We split into two cases.

\smallskip
\noindent\textbf{Case 1: the first round is the challenge round.}
Assume $x^{(0)}[1]\neq x^{(1)}[1]$, and write
\[
\Delta := d(x^{(0)}[1],x^{(1)}[1]).
\]
Because the paired simulation only allows Hamming-neighboring prefixes, once
the first inputs differ, every later input pair produced by the adversary must
have equal components.  That is, on every non-terminated transcript prefix and
for every $t\geq 2$, the adversary must output a pair of the form $(x_t,x_t)$.
We use the finite-horizon convention that terminated executions are padded
with a fixed halt symbol; the padded rounds have identical outputs under the
two secret bits and hence contribute the identity power function.

For each round~$t$, define
\[
h_t :=
\begin{cases}
g_t(\Delta), & \text{if } 1\in I_t,\\
\mathsf{Id}, & \text{if } 1\notin I_t.
\end{cases}
\]
We first show that, at every round~$t$ and every transcript prefix that can
arise before round~$t$, the conditional next-output distributions under the
two secret bits have power function exactly~$h_t$.

Fix such a transcript prefix.  Since the adversary is deterministic,
conditioning on this common prefix fixes all input pairs generated so far.
As argued above, the two input histories entering round~$t$ differ only in
the first coordinate.  If $1\notin I_t$, then the two aggregates
\[
\bigoplus_{\tau\in I_t}x^{(0)}[\tau]
\quad\text{and}\quad
\bigoplus_{\tau\in I_t}x^{(1)}[\tau]
\]
are identical.  Hence their distance is zero.  Applying
Definition~\ref{defn:quant_decomp} to identical histories gives
$g_t(0)=\mathsf{Id}$, and therefore the one-round power function is exactly
$\mathsf{Id}$.

If $1\in I_t$, then the two aggregates have the form
\[
x^{(0)}[1]\oplus c
\quad\text{and}\quad
x^{(1)}[1]\oplus c
\]
for the same common suffix aggregate~$c$ over the remaining indices in~$I_t$.
By the invariant-pair property,
\[
d(x^{(0)}[1]\oplus c,\;x^{(1)}[1]\oplus c)
=
d(x^{(0)}[1],x^{(1)}[1])
=
\Delta .
\]
Hence Definition~\ref{defn:quant_decomp} gives the exact one-round identity
\[
\Pow\!\left(
\algM_t(R_t;x^{(0)}[1..t])
\|
\algM_t(R_t;x^{(1)}[1..t])
\right)
=
g_t(\Delta)
=
h_t .
\]
Thus the one-round power function is exactly $h_t$, and $h_t$ depends only on
$t$, not on the transcript prefix.

We now justify that these exact one-round identities tensorize to an exact
identity for the whole adaptive transcript in this case.  Let $P_b$ be the
distribution of the full length-$T$ transcript under secret bit $b$.  For a
transcript prefix $s=y[1..t-1]$, let $K^b_{t,s}$ be the conditional
distribution of $y_t$ given the prefix~$s$ under bit~$b$.  By independent
decomposability, $K^b_{t,s}$ is generated using the fresh seed~$R_t$, which
is independent of the previous seeds.  From the preceding paragraph,
\[
\Pow(K^0_{t,s}\|K^1_{t,s})=h_t
\]
for every prefix~$s$ that can arise.

We use the following finite-experiment characterization of power functions.
For two finite distributions $P$ and $Q$, the power function
$\Pow(P\|Q)$ is determined by the decreasing rearrangement of the likelihood
ratio $Q(\omega)/P(\omega)$ under $\omega\sim P$; equivalently, two pairs of
finite distributions with the same power function have the same likelihood
ratio distribution under their first distribution, up to the usual splitting
of atoms.  Moreover, tensor product of power functions corresponds to taking
the product of independent likelihood ratios.

Apply this characterization to the conditional kernels.  For each prefix~$s$,
the likelihood ratio
\[
L_{t,s}(y_t):=\frac{K^1_{t,s}(y_t)}{K^0_{t,s}(y_t)}
\]
has, under $y_t\sim K^0_{t,s}$, the likelihood-ratio distribution determined
by~$h_t$.  This distribution is the same for every prefix~$s$.  Since
$R_t$ is fresh and independent of the previous seeds, the conditional
likelihood-ratio increment at round~$t$ is independent of the past after
conditioning on the prefix, and its distribution is determined only by~$h_t$.

For a full transcript $y[1..T]$, the likelihood ratio between $P_1$ and
$P_0$ factors as
\[
\frac{P_1(y[1..T])}{P_0(y[1..T])}
=
\prod_{t=1}^T
\frac{K^1_{t,y[1..t-1]}(y_t)}
     {K^0_{t,y[1..t-1]}(y_t)} .
\]
Under $P_0$, the factors in this product have the same joint distribution as
independent likelihood-ratio variables whose one-step power functions are
$h_1,\ldots,h_T$.  Therefore the power function of the full adaptive
transcript pair is exactly the tensor product:
\[
\Pow(P_0\|P_1)
=
\bigotimes_{t=1}^T h_t
=
\bigotimes_{\substack{t\in[1..T]\\ 1\in I_t}} g_t(\Delta),
\]
where the second equality uses $\mathsf{Id}\otimes q=q$ for every power
function~$q$.

Finally, consider any non-adaptive adversary that submits the fixed first
pair $x^{(0)}[1],x^{(1)}[1]$ and then submits identical inputs in every later
round.  The later common inputs may be chosen arbitrarily.  By the invariant
pair property, they only contribute common aggregation context and therefore
do not change the one-step power labels $h_t$.
 For this
non-adaptive input pair, the same tensor product is the power function of the
transcript pair:
\[
\bigotimes_{\substack{t\in[1..T]\\ 1\in I_t}} g_t(\Delta).
\]
Since $\algM$ is non-adaptively $f$-DP up to horizon~$T$, this tensor product
is at most~$f$.  Hence
\[
\Pow\!\left(
\view(\algA \leftrightarrow \algM^\pair_0)
\|
\view(\algA \leftrightarrow \algM^\pair_1)
\right)
\leq f
\]
in Case~1.

\smallskip
\noindent\textbf{Case 2: the first round is not the challenge round.}
Assume $x^{(0)}[1]=x^{(1)}[1]$.  Denote this common value by $c$.  Then
$U^{(0)}=U^{(1)}$ as distributions, and hence
\[
\Pow(U^{(0)}\|U^{(1)})=\mathsf{Id}.
\]

We now define a shifted mechanism for the remaining $T-1$ rounds after the
common first input~$c$ has been fixed.  Let $\algM^{(c)}$ be the mechanism
whose round-$s$ output, for $1\leq s\leq T-1$, is
\[
\algM^{(c)}_s(R_{s+1};z[1..s])
:=
\algM_{s+1}(R_{s+1};c,z[1..s]).
\]
This mechanism uses the independent seeds $R_2,\ldots,R_T$, so it is
independently decomposable.  It is also quantitatively decomposable.  Indeed,
for round~$s$ define
\[
I^{(c)}_s := \{\tau-1:\tau\in I_{s+1},\ \tau\geq 2\}\subseteq[1..s].
\]
If $1\notin I_{s+1}$, then the aggregate used by $\algM_{s+1}$ is exactly the
aggregate over $I^{(c)}_s$ in the suffix history.  If $1\in I_{s+1}$, then
the aggregate has the common left context~$c$, and the invariant-pair
property removes this common context.  Therefore, for every two suffix
histories $z[1..s]$ and $z'[1..s]$,
\[
\Pow\!\left(
\algM^{(c)}_s(R_{s+1};z[1..s])
\|
\algM^{(c)}_s(R_{s+1};z'[1..s])
\right)
=
g_{s+1}\!\left(
d\!\left(
\bigoplus_{\rho\in I^{(c)}_s}z[\rho],
\bigoplus_{\rho\in I^{(c)}_s}z'[\rho]
\right)
\right).
\]
Thus $\algM^{(c)}$ satisfies the same structural assumptions as $\algM$,
with the time indices shifted by one.

Let $\Lambda$ be the set of all Hamming-neighboring pairs of suffix histories
of length $T-1$.  For
\[
\lambda=(z^{(0)}[1..T-1],z^{(1)}[1..T-1])\in\Lambda,
\]
let $V_\lambda^{(b)}$ be the distribution of the length-$(T-1)$ transcript
generated by $\algM^{(c)}$ on the fixed suffix input $z^{(b)}[1..T-1]$.
Define
\[
g_\lambda := \Pow(V_\lambda^{(0)}\|V_\lambda^{(1)})
\quad\text{and}\quad
g := \sup_{\lambda\in\Lambda} g_\lambda .
\]
By definition of the supremum of power functions, $g$ is a power function.

We claim that $\algM^{(c)}$ is non-adaptively $g$-DP up to horizon $T-1$.
For deterministic non-adaptive suffix adversaries this is immediate from the
definition of~$g$.  Randomized non-adaptive suffix adversaries are mixtures of
deterministic ones, and the same bound follows from Fact~\ref{fact:joint_convexity}.

Moreover, $g\leq f$.  To see this, fix any $\lambda\in\Lambda$ and consider
the non-adaptive length-$T$ adversary for $\algM$ that submits the common first
input~$c$ in both branches and then submits the suffix pair~$\lambda$.  The
first-round output has the same distribution under the two secret bits, and
the remaining transcript has law $V_\lambda^{(b)}$ under bit~$b$.  Hence the
power function of this full non-adaptive transcript pair is
\[
\mathsf{Id}\otimes g_\lambda = g_\lambda .
\]
The non-adaptive $f$-DP assumption for $\algM$ gives $g_\lambda\leq f$ for
every $\lambda\in\Lambda$.  Taking the supremum over $\lambda$ gives
$g\leq f$.

Now condition on a possible first-round output $u$ in the support of
$U^{(0)}=U^{(1)}$.  After observing $u$, the original deterministic adaptive
adversary $\algA$ induces a deterministic adaptive suffix adversary
$\algA_u$ for the shifted mechanism $\algM^{(c)}$: it continues exactly as
$\algA$ would continue after the first transcript symbol~$u$.  Let
\[
W^{(b)}(u)
:=
\view(\algA_u \leftrightarrow (\algM^{(c)})^\pair_b)
\]
be the distribution of the remaining transcript under secret bit~$b$, and
write
\[
h_u := \Pow(W^{(0)}(u)\|W^{(1)}(u)).
\]
Since $\algM^{(c)}$ is quantitatively decomposable and non-adaptively
$g$-DP up to horizon $T-1$, the induction hypothesis gives
\[
h_u\leq g
\]
for every such first-round output~$u$.

Finally, the full transcript under secret bit~$b$ has the same distribution as
\[
(U,W^{(b)}(U)),
\]
where $U\sim U^{(0)}=U^{(1)}$.  The first coordinate has identical
distribution under the two secret bits, and, conditional on $U=u$, the
continuation distributions have power function at most~$g$.  Applying
Fact~\ref{fact:composition} with first-stage profile $\mathsf{Id}$ and
second-stage profile~$g$ yields
\[
\Pow\!\left(
\view(\algA \leftrightarrow \algM^\pair_0)
\|
\view(\algA \leftrightarrow \algM^\pair_1)
\right)
\leq
\mathsf{Id}\otimes g
=
g
\leq f .
\]
This proves Case~2.

The two cases complete the induction step.  Hence the finite-horizon adaptive
$f$-DP guarantee holds for every deterministic adaptive adversary.  As argued
at the beginning, joint convexity then extends the same guarantee to
randomized adaptive adversaries.  Therefore $\algM$ is adaptively $f$-DP for
every finite horizon.
\end{proof}

\begin{remark}[Vector-Valued Aggregates]
\label{rem:vector_quant_decomp}
Definition~\ref{defn:quant_decomp} states quantitative decomposability using a
scalar distance parameter.  This is sufficient for the theorem as used in the
main text.  Some standard vector-valued mechanisms, however, have an exact
one-round power profile that depends on the full difference vector rather
than only on its norm.  For example, for coordinate-wise Laplace noise on
$\mathbb{R}^k$, the exact profile of $u+\eta$ versus $u'+\eta$ may depend on
the vector $u-u'$.

The theorem and proof extend verbatim to this setting by replacing the scalar
distance $d(u,u')$ with an invariant difference map
$\Delta(u,u')$ taking values in some parameter space, such as
$\Delta(u,u')=u-u'\in\mathbb{R}^k$, and replacing $g_t(\theta)$ by a family
$g_t(v)$ indexed by this difference parameter.  The only property needed in
the proof is invariance under common aggregation context:
\[
\Delta(a\oplus u\oplus b,\;a\oplus u'\oplus b)=\Delta(u,u').
\]
Under this vector-indexed variant, the binary-tree prefix-sum mechanism with
coordinate-wise Laplace noise is covered exactly for vector updates in
$\mathbb{R}^k$.  In the main text, we state the simpler scalar version because
the usual $(\varepsilon,\delta)$ consequences can also be obtained by
upper-bounding the vector-indexed profile in terms of a norm, such as
$\|u-u'\|_1$ for coordinate-wise Laplace noise.
\end{remark}

\subsection{Why Independent Decomposability Alone Is Insufficient}
\label{subsec:ind_decomp_counterexample}

The certification theorem in Section~\ref{sec:ind_decomp} requires
quantitative decomposability, not merely independent decomposability.  The
following example shows why the stronger condition is needed.  Even when each
round uses fresh independent randomness and the round-$t$ output is a
deterministic function of the input prefix and the fresh seed, a
non-adaptive approximate-DP guarantee need not lift to adaptive adversaries.
The obstruction is that the one-round leakage can depend on the adaptive
common context, rather than only on the location and magnitude of the Hamming
discrepancy.

\paragraph{Counterexample with Fresh Independent Randomness.}
Consider a two-round mechanism with input and output alphabets $\{0,1\}$.
Write $x_1=b$ and $x_2=q$.  Using independent fresh randomness in the two
rounds, the mechanism outputs $Y_1,Y_2\in\{0,1\}$ with
\[
\Pr[Y_1=1\mid b=0]=0.68,\qquad
\Pr[Y_1=1\mid b=1]=0.65,
\]
and
\[
\begin{array}{c|cc}
& q=0 & q=1 \\ \hline
b=0 & \Pr[Y_2=1]=0.28 & \Pr[Y_2=1]=0.20 \\
b=1 & \Pr[Y_2=1]=0.19 & \Pr[Y_2=1]=0.28 .
\end{array}
\]
Since each output is a deterministic function of the current prefix and a
fresh independent seed, the mechanism is independently decomposable in the
sense of Definition~\ref{defn:indep_decomp}.

For every fixed input stream $(b,q)$, the transcript distribution is the
product distribution of $Y_1$ and $Y_2$.  Since two-sided $(0,\delta)$-DP is
equivalent to total variation distance at most $\delta$, it suffices to check
the four Hamming-neighboring pairs:
\[
\begin{array}{c|c}
\text{Fixed neighboring streams} & \text{Total variation distance} \\ \hline
(0,0)\text{ and }(1,0) & 0.09 \\
(0,1)\text{ and }(1,1) & 0.08 \\
(0,0)\text{ and }(0,1) & 0.08 \\
(1,0)\text{ and }(1,1) & 0.09 .
\end{array}
\]
Hence the mechanism is non-adaptively $(0,0.09)$-DP.

Now consider the deterministic adaptive adversary that chooses the challenge
pair $x^{(0)}_1=0$ and $x^{(1)}_1=1$ at round~$1$, and, after observing
$Y_1=y$, submits the common second input
$x^{(0)}_2=x^{(1)}_2=y$.  This is a valid adaptive Hamming-neighbor
adversary: the two submitted streams differ only at coordinate~$1$, but their
common second input depends on the first release.

The resulting transcript distributions are:
\[
\begin{array}{c|cc}
(Y_1,Y_2) & \text{World }0 & \text{World }1 \\ \hline
(0,0) & 0.2304 & 0.2835 \\
(0,1) & 0.0896 & 0.0665 \\
(1,0) & 0.5440 & 0.4680 \\
(1,1) & 0.1360 & 0.1820 .
\end{array}
\]
For $S=\{(0,1),(1,0)\}$, we have
\[
\Pr_0[S]-\Pr_1[S]=0.6336-0.5345=0.0991>0.09.
\]
Therefore,
\[
\operatorname{TV}(\mathsf{View}_0,\mathsf{View}_1)=0.0991,
\]
so the mechanism is not adaptively $(0,0.09)$-DP.

This does not contradict Theorem~\ref{thm:noise-ind-DP}, because the theorem
assumes quantitative decomposability.  The example shows that fresh
independent randomness alone is not enough: the adaptive adversary can choose
a common later input that changes how the earlier Hamming discrepancy is
revealed.  Quantitative decomposability rules out precisely this behavior by
forcing the one-round power function to depend only on the distance between
the relevant aggregates, with common aggregation context removed by the
invariant-pair property.

\section{Streaming Prefix-Sum Mechanism}
\label{sec:prefix-sum}

There have been numerous works on differentially private prefix-sum
algorithms since the \emph{binary tree mechanism} was
introduced~\cite{DBLP:conf/icalp/ChanSS10,DBLP:journals/tissec/ChanSS11}.
We first state the standard pure-DP primitive with a simultaneous
high-probability error bound, and then derive the adaptive,
always-error-bounded, and consistent variant needed by \randbin.

\noindent\textbf{Problem Setting.}
At each time step $t\geq 1$, an integer $x_t$ is passed to an
interactive mechanism, which releases an estimate $S_t$ of the prefix sum
\[
Q_t:=\sum_{\tau=1}^t x_\tau.
\]
The additive error at time~$t$ is $|S_t-Q_t|$, and the adversary may
choose later inputs after observing earlier releases.

\noindent\textbf{Neighboring Notion.}
Two equal-length input streams are neighboring if they differ in at most
one time step, where the two values at that step differ by exactly~$1$.
As aforementioned, an unbounded interactive mechanism is adaptively
$(\varepsilon,\delta)$-DP if its truncation to every finite horizon
$T>0$ is adaptively $(\varepsilon,\delta)$-DP.

As in~\cite{DBLP:journals/jacm/ChanCMS22}, we begin with a pure-DP
prefix-sum mechanism having a simultaneous high-probability error bound.

\begin{fact}[Pure-DP Prefix Sums~\cite{DBLP:journals/tissec/ChanSS11}]
\label{thm:dp_unbounded_prefix_sum}
For every $\varepsilon>0$, there exists an integer-valued interactive
mechanism $\hpresum_{\varepsilon}$, defined on streams in $\Z$, with the
following properties.

\begin{itemize}
\item \textbf{Non-adaptive privacy.}
For the unit-Hamming neighboring notion above,
$\hpresum_{\varepsilon}$ is non-adaptively
$(\varepsilon,0)$-DP. This guarantee holds for the stronger transcript
consisting of all noisy interval counts described below.

\item \textbf{Simultaneous error.}
For every $\beta\in(0,1)$, there is a deterministic nondecreasing function
\[
B_{\varepsilon,\beta}(t)
=
O\left(
\frac{1}{\varepsilon}\log(t+1)
\left(
\log(t+1)+\log\frac{1}{\beta}
\right)
\right)
\]
such that, with probability at least $1-\beta$,
\[
\left|
Y_t-\sum_{\tau=1}^t x_\tau
\right|
\leq B_{\varepsilon,\beta}(t)
\qquad\text{for every }t\geq1,
\]
where $Y_t$ is the estimate released at time~$t$. The simultaneous
success event may be chosen to depend only on the mechanism's noises.
Consequently, the same event applies when the inputs are selected
adaptively from previous releases.

\item \textbf{Certified core structure.}
The mechanism admits a $p$-sum implementation whose stronger core
transcript contains exactly one new noisy interval count in each original
round. The round-$t$ core output can be written as
\[
Z_t
=
\sum_{i\in I_t}x_i+\eta_t,
\qquad
I_t=[c_t..t]\subseteq[1..t].
\]
The schedule $(I_t)_{t\geq1}$ is data-independent, and the noises $\eta_t$
are fresh and independent across rounds. They may be taken to be symmetric
geometric noises. The exact one-round power function depends only on the
absolute integer shift between the two interval sums and is nondecreasing in
that shift. The published estimate $Y_t$ is a deterministic function of the
core releases available by time~$t$.

\end{itemize}
\end{fact}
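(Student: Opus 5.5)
We would prove the Fact by exhibiting the hybrid binary-tree ($p$-sum) construction of~\cite{DBLP:journals/tissec/ChanSS11}, arranged so that every round reveals exactly one noisy dyadic interval count. The plan is to build the core first and then read off the four bullets in turn, with the error bound as the one step requiring real calculation.

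\textbf{Construction.} Partition time into epochs $\mathcal{E}_k=[2^k..2^{k+1}-1]$ for $k\ge 0$. Inside epoch $k$ we run a binary tree over its $2^k$ positions. At round $t\in\mathcal{E}_k$, let $I_t=[c_t..t]$ be the unique dyadic node of that tree whose right endpoint is $t$; its length is the lowest set bit of $t-2^k+1$. We release
\[
Z_t=\sum_{i\in I_t}x_i+\eta_t,\qquad \eta_t\sim\text{Geom}\bigl(e^{\varepsilon/(k+1)}\bigr),
\]
drawn freshly and independently. The schedule $(I_t)$ is data-independent, and $Z_t$ is integer-valued. At the last round of epoch $k$, the node $I_t$ is the whole epoch, so this root count also serves as the epoch's contribution to all later prefix sums. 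The estimate $Y_t$ adds two kinds of core outputs: the root counts of all completed epochs, and the at most $k+1$ disjoint dyadic nodes, released so far, that cover $[2^k..t]$. Hence $Y_t$ is a deterministic function of $Z_1,\dots,Z_t$, which gives the \emph{certified core structure} bullet.

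\textbf{Privacy and the one-round profile.} Fix non-adaptive unit-Hamming neighbors that differ at an index $i\in\mathcal{E}_k$. The only core outputs whose sums change are the $k+1$ tree nodes of epoch $k$ containing $i$, and each sum shifts by exactly $1$. For the symmetric geometric mechanism with parameter $e^{\varepsilon/(k+1)}$, a shift of $1$ costs pure $\varepsilon/(k+1)$. Basic composition over these $k+1$ independent noises therefore gives $(\varepsilon,0)$-DP for the full core transcript, and post-processing extends this to $(Y_t)$. For the exact one-round profile, compare $a+\eta$ with $a'+\eta$ for $\eta$ symmetric geometric. By translation invariance and symmetry, the power function depends only on $|a-a'|$. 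It is nondecreasing in $|a-a'|$ because a shift by $m$ is an $m$-fold composition of unit shifts, which is a data-processing degradation of the larger shift.

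\textbf{Simultaneous error (the main obstacle).} The error $Y_t-\sum_{\tau\le t}x_\tau$ is a sum of $m_t=O(\log(t+1))$ independent symmetric geometric variables, each of scale at most $O(\log(t+1)/\varepsilon)$. A Bernstein/sub-exponential tail bound gives, for failure probability $\beta_t$, a deviation of
\[
O\Bigl(\tfrac{\log(t+1)}{\varepsilon}\bigl(\sqrt{m_t\log(1/\beta_t)}+\log(1/\beta_t)\bigr)\Bigr).
\]
Choosing $\beta_t=\beta/(2t^2)$ and taking a union bound over $t$ turns this into $O\bigl(\tfrac1\varepsilon\log(t+1)(\log(t+1)+\log\tfrac1\beta)\bigr)$, since $\sqrt{m_t\log(1/\beta_t)}\le m_t+\log(1/\beta_t)$. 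We then define $B_{\varepsilon,\beta}(t)$ as the running maximum of this expression, which makes it nondecreasing. The resulting success event is a set of noise vectors, so it holds for every input stream simultaneously, including adaptively chosen ones.

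The delicate point is this tail bookkeeping: we must keep the per-node scales bounded by $\log(t+1)/\varepsilon$ in both the current epoch and all earlier epochs, while keeping the union bound summable. We would carry this out with the standard moment-generating-function bound for symmetric geometric noise.
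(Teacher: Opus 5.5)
Your argument follows the paper's route. Like the paper, you use the $p$-sum view of the Chan--Shi--Song hybrid/binary-tree mechanism with symmetric geometric noise, a per-$t$ sub-exponential tail bound with a summable failure schedule (your $\beta/(2t^2)$ in place of the paper's $\beta/(t(t+1))$), and the observation that the error is a fixed function of the noises, so the success event survives adaptive inputs. You give more detail than the paper, which essentially cites the construction. Using each epoch's tree root as that epoch's total also makes ``exactly one new interval count per round'' hold literally, without changing the order of the bound.

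One sub-step does not hold up as written: monotonicity of the one-round profile in the shift. Decomposing a shift by $m$ into $m$ unit shifts only gives the upper bound $\Pow(\eta\|\eta+m)\leq\Pow(\eta\|\eta+1)^{\circ m}$ via the triangle inequality (Fact~\ref{fact:triangle_power}). That bound does not compare shift $m$ with shift $m+1$. You also exhibit no kernel sending $\eta\mapsto\eta$ and $\eta+m'\mapsto\eta+m$ in law, which is what the claimed ``data-processing degradation'' would require.

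A correct argument uses the monotone likelihood ratio of $\eta\sim\text{Geom}(\alpha)$. For $m>0$, $\Pr[\eta+m=x]/\Pr[\eta=x]=\alpha^{|x|-|x-m|}$ is nondecreasing in $x$. So for every $m>0$, the greedy solution of the knapsack in Definition~\ref{defn:power} at capacity $a$ is one and the same nondecreasing test $\phi_a:\Z\to[0,1]$: an upper tail plus one fractional atom, with $\mathbb{E}\,\phi_a(\eta)=a$. Hence $\Pow(\eta\|\eta+m)(a)=\mathbb{E}\,\phi_a(\eta+m)$, which is nondecreasing in $m$, and symmetry of $\eta$ handles negative shifts.

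There are also two harmless slips. ``The unique dyadic node ending at $t$'' should read ``the largest'', as your length rule indicates. And an index lies in at most $k+1$ released nodes, not exactly $k+1$.
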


The stronger-transcript statement follows from the standard $p$-sum view
of the hybrid mechanism. For completeness, its simultaneous
infinite-horizon error guarantee can be obtained as follows. For time~$t$,
apply the standard sub-exponential tail bound to the sum of the relevant
geometric noises with failure probability
\[
\beta_t:=\frac{\beta}{t(t+1)}.
\]
Since $\sum_{t\geq1}\beta_t=\beta$ and
\[
\log\frac{1}{\beta_t}
=
O\left(
\log(t+1)+\log\frac{1}{\beta}
\right),
\]
a union bound gives the stated error bound simultaneously for all
$t\geq1$. Moreover, after subtracting the true prefix sum, the error is a
deterministic function only of the core noises. Replacing the usual
Laplace perturbations by their symmetric geometric analogues preserves
pure DP and the same tail order while making all releases integer-valued.

\begin{theorem}[Adaptive Prefix Sums with Always-Valid Error and Consistency]
\label{th:prefix_sum}
For every $\varepsilon>0$, $\delta\in(0,1)$, and integer interval
$[a..b]\subseteq\Z$, there exists an interactive mechanism
$\presum_{\varepsilon,\delta}$, defined on all integer streams, that is
adaptively $(\varepsilon,\delta)$-DP for unit-Hamming neighboring streams.

Moreover, there is a deterministic nondecreasing integer-valued function
\[
E_{\varepsilon,\delta}:\Z_{\geq0}\rightarrow\Z_{\geq0},
\qquad
E_{\varepsilon,\delta}(0)=0,
\]
such that, for every $t\geq1$,
\[
E_{\varepsilon,\delta}(t)
=
O\left(
1+
\frac{1}{\varepsilon}\log(t+1)
\left(
\log(t+1)+\log\frac{1}{\delta}
\right)
\right).
\]
On every execution whose inputs satisfy $x_t\in[a..b]$, the integer outputs
$S_t$ satisfy, simultaneously for every $t\geq1$,
\begin{align*}
\left|S_t-\sum_{\tau=1}^t x_\tau\right|
&\leq E_{\varepsilon,\delta}(t),
&&\text{\emph{(always-valid error)}}\\
S_t-S_{t-1}
&\in[a..b],
&&\text{\emph{(consistency)}}
\end{align*}
where $S_0:=0$. These properties hold for every realization of the
mechanism's randomness.
\end{theorem}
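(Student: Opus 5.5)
We build $\presum_{\varepsilon,\delta}$ from the pure-DP mechanism $\hpresum$ of Fact~\ref{thm:dp_unbounded_prefix_sum}, in three steps. First we lift its privacy to the adaptive setting. Then we make its error bound hold surely rather than with high probability. Finally we enforce consistency by post-processing.

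\textbf{Step 1: adaptive privacy.} Run $\hpresum_{\varepsilon'}$ with $\varepsilon'=\varepsilon/2$ and take its $p$-sum core $Z_t=\sum_{i\in I_t}x_i+\eta_t$ as the certified transcript.
\begin{itemize}
\item The core is independently decomposable: $\eta_t$ is fresh and the schedule $I_t$ is data-independent.
\item It is quantitatively decomposable with respect to the invariant pair $(|\cdot|,+)$ on $\Z$, with $g_t$ the exact geometric-noise profile. By the fact, this profile depends only on the integer shift and is nondecreasing in it.
\end{itemize}
Theorem~\ref{thm:noise-ind-DP} therefore lifts the non-adaptive $(\varepsilon',0)$ guarantee to an adaptive one for every finite horizon. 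The published estimates $Y_t$ are post-processing of the core, so they inherit it.

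\textbf{Step 2: always-valid error.} Choose $\beta=\delta$ and let $E_{\varepsilon,\delta}(t):=\lceil B_{\varepsilon',\delta}(t)\rceil$, adjusted to be nondecreasing, integer-valued, and zero at $0$. The fact says the simultaneous success event $G$ is determined by the noises alone, so $\Pr[G]\ge 1-\delta$ even against adaptive inputs.

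We modify the mechanism as follows: the noise vector is resampled conditioned on $G$, or equivalently we replace every noise realization outside $G$ by a fixed good realization. The modified and original noise laws are within total variation $\delta$. Hence, for any adversary, the views of the modified mechanism are within $\delta$ of the original ones under each secret bit. This yields adaptive $(\varepsilon',\delta')$-DP with $\delta'\le \delta(1+e^{\varepsilon'})$; rescaling $\delta$ by a constant absorbs this into the $O(\cdot)$ of $E$.

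A subtlety is that conditioning on $G$ correlates the noises across rounds, which breaks the fresh-randomness structure. This is why the lift through Theorem~\ref{thm:noise-ind-DP} must be done first, on the unconditioned mechanism, and the $\delta$-coupling applied afterwards.

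\textbf{Step 3: consistency by projection.} Given the noisy $Y_t$, define $S_t$ online by clipping:
\[
S_t:=\min\{\max\{Y_t,\,S_{t-1}+a\},\,S_{t-1}+b\}.
\]
This is post-processing, so privacy is preserved, and consistency $S_t-S_{t-1}\in[a..b]$ holds by construction. For the error bound, let $Q_t=\sum_{\tau\le t}x_\tau$ and argue by induction on $t$ that $|S_t-Q_t|\le E(t)$ whenever $x_\tau\in[a..b]$ and $|Y_t-Q_t|\le E(t)$ (the latter holds surely after Step 2).
\begin{itemize}
\item If no clipping occurs, then $S_t=Y_t$ and the bound is immediate.
\item If $S_t=S_{t-1}+b<Y_t$, then $S_t-Q_t<Y_t-Q_t\le E(t)$. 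On the other side, $S_t-Q_t=(S_{t-1}-Q_{t-1})+(b-x_t)\ge -E(t-1)\ge -E(t)$, using monotonicity of $E$.
\item The lower-clipping case is symmetric.
\end{itemize}

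The main obstacle is Step~2: we must make the error guarantee hold for every realization while paying only $\delta$, and the truncation to good noise must be ordered so that it does not invalidate the certification argument of Step~1.
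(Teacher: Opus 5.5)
Your proposal is correct. Steps~1 and~3 coincide with the paper's: the certification lift via the invariant pair $(|\cdot|,+)$, and the online projection onto $[S_{t-1}+a,\,S_{t-1}+b]$ with the same induction. The genuine difference is Step~2.

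\textbf{The paper's Step~2.} The paper keeps the fresh-noise mechanism unchanged and clips each raw estimate into the data-dependent window $[Q_t-E(t),\,Q_t+E(t)]$, with $E(t)=\lceil B(t)\rceil+2$. Because this is not post-processing, it proves privacy directly through a \emph{common} transcript set $G_T$. This set is defined from both counterfactual prefix sums, using $|Q^{(0)}_t-Q^{(1)}_t|\le 1$ and the $+2$ slack. On $G_T$ the raw and clipped laws coincide in both branches, and $P_c(G_T^c)\le\delta$. This gives exactly $(\varepsilon,\delta)$ with $\beta=\delta$.

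\textbf{Your Step~2.} You truncate the noise law to the input-independent good event and apply a generic total-variation argument. This is simpler: no counterfactual sums, no data-dependent clipping, and no slack. You also correctly order it after the certification lift. The price is that the TV perturbation is charged on both sides of the DP inequality, so you get $\delta(1+e^{\varepsilon'})$ instead of $\delta$.

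\textbf{A quantitative caveat.} Contrary to your remark, $1+e^{\varepsilon'}$ is not an absolute constant. Taking $\beta=\delta/(1+e^{\varepsilon'})$ adds about $\frac{1}{\varepsilon'}\log(t+1)\log(1+e^{\varepsilon'})\approx\log(t+1)$ to $E(t)$. This is absorbed into the claimed $O\bigl(1+\frac{1}{\varepsilon}\log(t+1)(\log(t+1)+\log\frac{1}{\delta})\bigr)$ only when $\varepsilon=O(\log(t+1)+\log\frac{1}{\delta})$, for instance when $\varepsilon\le 1$. For very large $\varepsilon$ your route gives a slightly weaker bound than stated, whereas the common-safe-set argument avoids this loss. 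Separately, halving to $\varepsilon'=\varepsilon/2$ is harmless but unnecessary.
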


\begin{proof}
Let $\hpresum_{\varepsilon}$ be the mechanism in
Fact~\ref{thm:dp_unbounded_prefix_sum}, and abbreviate
\[
B(t):=B_{\varepsilon,\delta}(t).
\]
Define
\[
E_{\varepsilon,\delta}(0):=0,
\qquad
E_{\varepsilon,\delta}(t):=\left\lceil B(t)\right\rceil+2
\quad(t\geq1).
\]
Because $B$ is nondecreasing, $E_{\varepsilon,\delta}$ is a nondecreasing
integer-valued function with the asymptotic bound stated in the theorem.
We construct the claimed mechanism in three stages.

\smallskip
\noindent\emph{1. Adaptive privacy of the raw mechanism.}
Use the stronger core transcript from
Fact~\ref{thm:dp_unbounded_prefix_sum}. Its round-$t$ release is
\[
Z_t
=
\sum_{\tau\in I_t}x_\tau+\eta_t,
\]
where $I_t\subseteq[1..t]$ is data-independent and $\eta_t$ is fresh and
independent of all other rounds.

Take the aggregation operation to be integer addition and define
\[
d(u,u'):=|u-u'|.
\]
Then $(d,+)$ is an invariant pair because, for all integers
$a,b,u,u'$,
\[
d(a+u+b,\;a+u'+b)=|u-u'|=d(u,u').
\]

For every nonnegative integer $k$, define
\[
\widetilde g_t(k)
:=
\Pow(\eta_t\|\eta_t+k).
\]
By symmetry, the same power function is obtained for a shift by $-k$.
Moreover, for symmetric geometric noise,
$\widetilde g_t(k)$ is nondecreasing in~$k$. Extend it to
$\theta\geq0$ by
\[
g_t(\theta):=\widetilde g_t(\lfloor\theta\rfloor).
\]
Because every relevant distance is an integer, this extension preserves
the required exact identities.

Translation invariance and symmetry now give, for every pair of integer
input histories,
\[
\Pow\left(
\sum_{\tau\in I_t}x_\tau+\eta_t
\;\middle\|\;
\sum_{\tau\in I_t}x'_\tau+\eta_t
\right)
=
g_t\left(
\left|
\sum_{\tau\in I_t}x_\tau-
\sum_{\tau\in I_t}x'_\tau
\right|
\right).
\]
Thus the noisy-count core is quantitatively decomposable with respect to
the invariant pair $(d,+)$ in the sense of
Definition~\ref{defn:quant_decomp}. By
Fact~\ref{thm:dp_unbounded_prefix_sum}, this core is non-adaptively
$(\varepsilon,0)$-DP for every finite horizon. Therefore,
Theorem~\ref{thm:noise-ind-DP} implies that it is adaptively
$(\varepsilon,0)$-DP for every finite horizon. Since the raw prefix
estimates are causal post-processing of the core transcript,
$\hpresum_{\varepsilon}$ is also adaptively
$(\varepsilon,0)$-DP.

\smallskip
\noindent\emph{2. An always-valid error bound.}
Given the raw candidate $Y_t$ at time~$t$, define the clipped release
\[
\widehat Y_t
:=
\operatorname{clip}\left(
Y_t,\,
\left[
Q_t-E_{\varepsilon,\delta}(t),\,
Q_t+E_{\varepsilon,\delta}(t)
\right]
\right),
\]
where
\[
\operatorname{clip}(y,[\ell,u])
:=
\min\{\max\{y,\ell\},u\}.
\]
This immediately gives
$|\widehat Y_t-Q_t|\leq E_{\varepsilon,\delta}(t)$ on every execution.
Because the clipping interval depends on the private prefix sum, however,
adaptive privacy does not follow from ordinary post-processing. We prove it
directly.

Fix a finite horizon~$T$ and, by joint convexity, a deterministic adaptive
adversary~$\algA$. For every candidate released transcript
$v[1..T]$, let
\[
x_t^{(c)}(v[1..t-1]),
\qquad c\in\{0,1\},
\]
be the $c$-th input selected by $\algA$ at time~$t$ after observing
$v[1..t-1]$, and define the counterfactual prefix sums
\[
Q_t^{(c)}(v[1..t-1])
:=
\sum_{\tau=1}^t
x_\tau^{(c)}(v[1..\tau-1]).
\]
We may assume that the adversary generates unit-Hamming neighboring
streams after every transcript prefix; its behavior after a prefix at
which the paired simulation would terminate is immaterial and can be
redefined arbitrarily. Hence
\begin{equation}
\label{eq:counterfactual_prefix_close}
\left|
Q_t^{(0)}(v[1..t-1])
-
Q_t^{(1)}(v[1..t-1])
\right|
\leq1
\end{equation}
for every transcript prefix and every $t\leq T$.

Let $P_c$ and $\widehat P_c$ be the laws of the raw and clipped released
transcripts, respectively, in branch~$c$. Define the common measurable set
\begin{equation}
\label{eq:prefix_sum_common_safe_set}
G_T
:=
\left\{
v[1..T]:
\left|
v_t-Q_t^{(c)}(v[1..t-1])
\right|
<
E_{\varepsilon,\delta}(t)
\text{ for every $t\leq T$ and $c\in\{0,1\}$}
\right\}.
\end{equation}
This is one fixed subset of the transcript space, defined using both
counterfactual input sequences.

For branch~$c$, let
\[
H_c
:=
\left\{
\left|
Y_t-Q_t^{(c)}(Y[1..t-1])
\right|
\leq B(t)
\text{ for every }t\leq T
\right\}.
\]
The noise-only simultaneous error event in
Fact~\ref{thm:dp_unbounded_prefix_sum} gives
$P_c(H_c)\geq1-\delta$, even for adaptively selected inputs. On $H_c$,
\eqref{eq:counterfactual_prefix_close} implies, for every $t\leq T$,
\begin{align*}
\left|
Y_t-Q_t^{(c)}(Y[1..t-1])
\right|
&\leq B(t)<E_{\varepsilon,\delta}(t),\\
\left|
Y_t-Q_t^{(1-c)}(Y[1..t-1])
\right|
&\leq B(t)+1<E_{\varepsilon,\delta}(t).
\end{align*}
Hence $H_c\subseteq G_T$ and
\begin{equation}
\label{eq:raw_prefix_sum_bad_set}
P_c(G_T^c)\leq\delta.
\end{equation}

Couple the raw and clipped executions in a fixed branch~$c$ using the same
complete noise sequence. We claim that
\begin{equation}
\label{eq:raw_clipped_prefix_sum_identity}
Y[1..T]\in G_T
\quad\Longleftrightarrow\quad
\widehat Y[1..T]\in G_T,
\end{equation}
and that the two transcripts are identical whenever either condition
holds. For the forward direction, induct on~$t$. If
$Y[1..t-1]=\widehat Y[1..t-1]$, then the adversary supplies the same input
and the shared noises give the same raw candidate in both executions.
Because $Y[1..T]\in G_T$, that candidate lies strictly inside the
branch-$c$ clipping interval, so clipping leaves it unchanged. For the
reverse direction, the same induction applies: if
$\widehat Y[1..T]\in G_T$, then $\widehat Y_t$ lies strictly inside the
clipping interval; if clipping had changed the raw candidate, the released
value would instead equal an endpoint. Thus clipping did not occur.

It follows from \eqref{eq:raw_clipped_prefix_sum_identity} that, for every
measurable transcript event~$A$,
\begin{equation}
\label{eq:restricted_prefix_sum_laws}
\widehat P_c(A\cap G_T)=P_c(A\cap G_T),
\qquad
\widehat P_c(G_T^c)=P_c(G_T^c)\leq\delta.
\end{equation}
Using adaptive pure DP of the raw mechanism, we obtain
\begin{align*}
\widehat P_0(A)
&\leq
\widehat P_0(A\cap G_T)+\widehat P_0(G_T^c)\\
&=
P_0(A\cap G_T)+P_0(G_T^c)\\
&\leq
e^\varepsilon P_1(A\cap G_T)+\delta\\
&=
e^\varepsilon\widehat P_1(A\cap G_T)+\delta\\
&\leq
e^\varepsilon\widehat P_1(A)+\delta.
\end{align*}
The same argument with the branches interchanged gives the reverse
inequality. Thus the clipped mechanism is adaptively
$(\varepsilon,\delta)$-DP for every finite horizon~$T$, and hence for the
unbounded interaction.

\smallskip
\noindent\emph{3. Consistency.}
Apply the following deterministic online projection to the clipped
transcript:
\[
S_0:=0,
\qquad
S_t
:=
\operatorname{clip}\left(
\widehat Y_t,\,
[S_{t-1}+a,\;S_{t-1}+b]
\right).
\]
This is causal post-processing, so it preserves adaptive
$(\varepsilon,\delta)$-DP: any adversary interacting with the projected
releases, together with this causal wrapper, induces an adversary for the
clipped mechanism. The projection immediately gives
$S_t-S_{t-1}\in[a..b]$.

It remains to verify the error. Suppose inductively that
$|S_{t-1}-Q_{t-1}|\leq E_{\varepsilon,\delta}(t-1)$. Since
$x_t\in[a..b]$ and $E_{\varepsilon,\delta}$ is nondecreasing,
\[
\operatorname{dist}\left(
Q_t,\,
[S_{t-1}+a,S_{t-1}+b]
\right)
\leq
|Q_{t-1}-S_{t-1}|
\leq
E_{\varepsilon,\delta}(t).
\]
Projection onto an interval satisfies
\[
\left|
\operatorname{clip}(y,J)-q
\right|
\leq
\max\{|y-q|,\operatorname{dist}(q,J)\}.
\]
Together with
$|\widehat Y_t-Q_t|\leq E_{\varepsilon,\delta}(t)$, this yields
$|S_t-Q_t|\leq E_{\varepsilon,\delta}(t)$. The induction starts from
$S_0=Q_0=0$, proving both claimed execution-wise properties.
\end{proof}

\subsection{Prefix Sums via Lower-Triangular Matrix Factorizations}
\label{sec:prefixsum_factor}

The mechanism above is the integer prefix-sum scheduler used internally by
\randbin. We now separately analyze the Gaussian lower-triangular
aggregators.
These aggregators require adaptive
privacy, but not the probability-one error or consistency properties of
Theorem~\ref{th:prefix_sum}.

Fix a horizon $T$ and write a scalar stream as $x\in\R^T$. The prefix sums
form the linear workload
\[
M_{\mathsf{count}}x\in\R^T,
\qquad
M_{\mathsf{count}}[t,i]=\mathbf{1}[t\geq i],
\]
so $(M_{\mathsf{count}}x)_t=\sum_{i=1}^t x_i$.

We use event-level neighbors $x\sim x'$ satisfying
\[
x-x'=\xi e_i
\qquad\text{for some }i\in[T]\text{ and }|\xi|\leq\Delta.
\]
For scalar unit-count streams, $\Delta=1$. For vector-valued updates, the
two streams differ in at most one time block, the $\ell_2$-norm of that
block difference is at most~$\Delta$, and the corresponding linear
operator is $R\otimes I$.

\noindent\emph{One-hop versus two-hop accounting.}
Theorem~\ref{th:prefix_sum} is used for \randbin's internal scheduler under
the unit-Hamming notion stated above. The privacy bounds in this subsection
are likewise primitive-level, distance-one guarantees. When a Gaussian
prefix-sum mechanism is used as the downstream aggregator, however, a single
input edit may induce at most two adjacent Hamming changes at \randbin's
emitted-update interface (Lemma~\ref{lem:randbin_wrapper}). Hence adaptive
group privacy (Fact~\ref{fact:multi_hop_DP}) gives
\[
\left(
2\widehat\varepsilon_a,\,
(1+e^{\widehat\varepsilon_a})\widehat\delta_a
\right)\text{-DP}.
\]
We therefore set
\[
\widehat\varepsilon_a=\frac{\varepsilon_a}{2},
\qquad
\widehat\delta_a=
\frac{\delta_a}{1+e^{\varepsilon_a/2}},
\]
so that the total downstream cost is at most
$(\varepsilon_a,\delta_a)$.

\paragraph{Privacy Notions: R\'enyi DP (RDP) and zCDP.}
We first state the usual fixed-stream, non-adaptive guarantees; their
adaptive lift is established below. Let $P,Q$ have densities $p,q$ with
respect to a common base measure. For $\alpha>1$,
\[
D_\alpha(P\|Q)
:=
\frac{1}{\alpha-1}
\log\int p(y)^\alpha q(y)^{1-\alpha}\,dy.
\]
A fixed-stream mechanism $\mcal{M}$ is
$(\alpha,\varepsilon_\alpha)$-RDP if, for all neighboring $x\sim x'$,
\[
D_\alpha\bigl(\mcal{M}(x)\|\mcal{M}(x')\bigr)
\leq\varepsilon_\alpha.
\]
It is $\rho$-zCDP if, for every $\alpha>1$ and every neighboring
$x\sim x'$,
\[
D_\alpha\bigl(\mcal{M}(x)\|\mcal{M}(x')\bigr)
\leq\rho\alpha.
\]
Equivalently, $\rho$-zCDP implies
$(\alpha,\rho\alpha)$-RDP simultaneously for every $\alpha>1$.

\paragraph{Converting RDP to $(\varepsilon,\delta)$-DP.}
If $\mcal{M}$ satisfies $(\alpha,\varepsilon_\alpha)$-RDP for some
$\alpha>1$, then, for every $\delta\in(0,1)$, it satisfies
$(\varepsilon,\delta)$-DP with
\[
\varepsilon
=
\varepsilon_\alpha+\frac{\log(1/\delta)}{\alpha-1}.
\]
Optimizing over the order gives
\[
\varepsilon(\delta)
=
\inf_{\alpha>1}
\left\{
\varepsilon_\alpha+\frac{\log(1/\delta)}{\alpha-1}
\right\}.
\]

\paragraph{Lower-Triangular Factorization Mechanism.}
Let $M_{\mathsf{count}}=LR$ be a possibly rectangular factorization, where
$R\in\R^{m\times T}$ and $L\in\R^{T\times m}$ are lower-triangular in
time, so all core releases and published outputs are causal. Define the
column sensitivity
\[
s(R):=\max_{i\in[T]}\|Re_i\|_2=\|R\|_{1\to2}.
\]
Consider
\[
\mcal{M}_{L,R,\sigma}(x)
:=
L(Rx+z),
\qquad
z\sim\mcal{N}(0,\sigma^2I_m),
\]
where the coordinates of $z$ are independent. Since multiplication by
$L$ is post-processing, privacy is governed by the Gaussian core
$Rx+z$.

\begin{lemma}[zCDP/RDP Bounds for $L(Rx+z)$]
\label{lem:factor_privacy_constants}
For neighboring $x\sim x'$ differing at coordinate~$i$,
\[
\|R(x-x')\|_2
=
|\xi|\,\|Re_i\|_2
\leq
\Delta s(R).
\]
Hence, in the fixed-stream experiment,
$\mcal{M}_{L,R,\sigma}$ satisfies $\rho$-zCDP with
\[
\rho=\frac{\Delta^2s(R)^2}{2\sigma^2},
\]
and, for every $\alpha>1$, it satisfies
$(\alpha,\varepsilon_\alpha)$-RDP with
\[
\varepsilon_\alpha
=
\alpha\rho
=
\frac{\alpha\Delta^2s(R)^2}{2\sigma^2}.
\]
\end{lemma}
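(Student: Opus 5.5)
The statement to prove is Lemma~\ref{lem:factor_privacy_constants}: a fixed-stream bound for the Gaussian core $Rx+z$, after which the factor $L$ is handled by post-processing. My plan has three steps: compute the sensitivity of the linear core, invoke the standard Rényi divergence between two Gaussians with the same isotropic covariance, and transfer the bound through post-processing.

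\textbf{Sensitivity.} Fix neighbors $x\sim x'$ with $x-x'=\xi e_i$ and $|\xi|\leq\Delta$. By linearity, $R(x-x')=\xi Re_i$, so
\[
\|R(x-x')\|_2=|\xi|\,\|Re_i\|_2\leq\Delta\max_j\|Re_j\|_2=\Delta s(R).
\]
For vector-valued updates the operator is $R\otimes I$ and the difference is a single time block $v$ with $\|v\|_2\leq\Delta$. Then
\[
\|(R\otimes I)(e_i\otimes v)\|_2=\|Re_i\|_2\,\|v\|_2,
\]
which gives the same bound.

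\textbf{Gaussian Rényi divergence.} The core outputs are $\mathcal{N}(Rx,\sigma^2I_m)$ and $\mathcal{N}(Rx',\sigma^2I_m)$. I will use the closed form
\[
D_\alpha\bigl(\mathcal{N}(\mu,\sigma^2I)\,\|\,\mathcal{N}(\mu',\sigma^2I)\bigr)=\frac{\alpha\|\mu-\mu'\|_2^2}{2\sigma^2}.
\]
This is the only real computation. One can verify it by rotating so that $\mu-\mu'$ lies along a single coordinate, which reduces it to a one-dimensional Gaussian integral obtained by completing the square. Plugging in the sensitivity bound gives
\[
D_\alpha\leq\frac{\alpha\Delta^2s(R)^2}{2\sigma^2}=\alpha\rho\qquad\text{for every }\alpha>1,
\]
which is exactly $\rho$-zCDP, and equivalently $(\alpha,\alpha\rho)$-RDP for every order $\alpha$.

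\textbf{Post-processing.} The released output $L(Rx+z)$ is a deterministic function of $Rx+z$. Rényi divergence satisfies the data processing inequality, so the same bounds hold for $\mathcal{M}_{L,R,\sigma}$.

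I expect no step to be a genuine obstacle. The main point needing care is that the lemma is stated only for the non-adaptive, fixed-stream experiment, so no adaptivity argument belongs here; the adaptive lift is deferred to Theorem~\ref{thm:noise-ind-DP}, in the vector-indexed form of Remark~\ref{rem:vector_quant_decomp}.
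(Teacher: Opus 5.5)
Your proposal is correct and follows the paper's route. The paper's argument is largely inline: it derives the sensitivity identity $\|R(x-x')\|_2=|\xi|\,\|Re_i\|_2\leq\Delta s(R)$ by linearity, applies the standard equal-covariance Gaussian R\'enyi divergence $\alpha\|\mu-\mu'\|_2^2/(2\sigma^2)$ to obtain $\rho$-zCDP and $(\alpha,\alpha\rho)$-RDP, and passes through $L$ by post-processing. Your Kronecker check $\|(R\otimes I)(e_i\otimes v)\|_2=\|Re_i\|_2\,\|v\|_2$ also correctly covers the vector-valued case, which the paper mentions only in passing.
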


\paragraph{Binary Tree Mechanism as a Factorization.}
Assume first that $T=2^h$ for an integer $h\geq0$; otherwise, pad to the
next power of two. Index the complete binary-tree nodes by their dyadic
intervals. Let $q=R_{\mathsf{tree}}x$ be the vector of all dyadic-interval
sums, where $R_{\mathsf{tree}}$ is the corresponding $0/1$ incidence
matrix. Each leaf belongs to exactly its $h+1$ ancestors, and each prefix
$[1..t]$ has a canonical dyadic partition of size at most $h+1$.
Therefore, there is a causal reconstruction matrix $L_{\mathsf{tree}}$
such that
$M_{\mathsf{count}}=L_{\mathsf{tree}}R_{\mathsf{tree}}$.

\begin{proposition}[Privacy Bounds for the Binary-Tree Factorization]
\label{prop:tree_zcdp_rdp}
For $T=2^h$,
\[
s(R_{\mathsf{tree}})^2
=
\|R_{\mathsf{tree}}\|_{1\to2}^2
=
h+1.
\]
Consequently, the Gaussian tree mechanism
$L_{\mathsf{tree}}(R_{\mathsf{tree}}x+z)$, with
$z\sim\mcal{N}(0,\sigma^2I_m)$, is non-adaptively
$\rho_{\mathsf{tree}}$-zCDP with
\[
\rho_{\mathsf{tree}}
=
\frac{\Delta^2(h+1)}{2\sigma^2},
\]
and, for every $\alpha>1$, is
$(\alpha,\varepsilon_{\alpha,\mathsf{tree}})$-RDP with
\[
\varepsilon_{\alpha,\mathsf{tree}}
=
\frac{\alpha\Delta^2(h+1)}{2\sigma^2}.
\]
For general $T$, the same bounds hold with
$h:=\lceil\log_2T\rceil$.
\end{proposition}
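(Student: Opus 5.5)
The plan is to compute the column sensitivity of $R_{\mathsf{tree}}$ directly and then plug it into Lemma~\ref{lem:factor_privacy_constants}. The columns of $R_{\mathsf{tree}}$ are indexed by leaves $i\in[T]$, and the rows by dyadic intervals (tree nodes). The entry in row $v$ and column $i$ is $\mathbf{1}[i\in v]$. So $R_{\mathsf{tree}}e_i$ is the $0/1$ indicator vector of the set of nodes whose interval contains $i$.

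In a complete binary tree of height $h$ with $T=2^h$ leaves, the dyadic intervals at each level $\ell\in\{0,\ldots,h\}$ partition $[1..T]$. Hence exactly one node per level contains $i$, namely the ancestors of leaf $i$ including the leaf and the root. This gives $\|R_{\mathsf{tree}}e_i\|_2^2=h+1$ for every $i$, so $s(R_{\mathsf{tree}})^2=\max_i\|R_{\mathsf{tree}}e_i\|_2^2=h+1$ with equality. The factorization $M_{\mathsf{count}}=L_{\mathsf{tree}}R_{\mathsf{tree}}$ with both factors causal was established in the text preceding the proposition. Lemma~\ref{lem:factor_privacy_constants} then gives $\rho_{\mathsf{tree}}=\Delta^2(h+1)/(2\sigma^2)$ and $\varepsilon_{\alpha,\mathsf{tree}}=\alpha\rho_{\mathsf{tree}}$, as stated. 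For vector updates, the same argument applies with $R_{\mathsf{tree}}\otimes I$: the block difference has $\ell_2$ norm at most $\Delta$ and is copied into $h+1$ node blocks.

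For general $T$, I would pad the stream with zeros to length $T'=2^h$, where $h=\lceil\log_2 T\rceil$, and run the mechanism on the padded stream. The padding is data-independent. Two neighboring streams of length $T$ remain neighboring after padding, since they still differ in one coordinate by at most $\Delta$. The released prefix sums for $t\le T$ are obtained by restricting (post-processing) the length-$T'$ releases, so the zCDP and RDP bounds carry over by data processing.

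The main point needing care is not the computation but stating precisely what ``non-adaptively'' covers. The bound is a fixed-stream Gaussian-mechanism bound on $R_{\mathsf{tree}}x+z$. Multiplying by $L_{\mathsf{tree}}$ is post-processing, and the RDP data processing inequality lets the bound pass to the published outputs. I also need to be sure the rows of $R_{\mathsf{tree}}$ actually present are exactly all dyadic nodes, with no duplicate rows that would inflate the column norm. That holds by the definition of $q$ as the vector of all dyadic-interval sums, each appearing once.
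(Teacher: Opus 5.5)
Your proposal is correct and follows the paper's route. Each leaf lies in exactly its $h+1$ ancestor dyadic intervals, one per level, so every column of $R_{\mathsf{tree}}$ has squared norm $h+1$, and Lemma~\ref{lem:factor_privacy_constants} then gives the stated zCDP and RDP constants. Your treatment of general $T$ also matches the paper's ``pad to the next power of two'' convention: zero-padding is data-independent and preserves neighbors, and restricting to $t\le T$ is post-processing.
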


\paragraph{A Smooth Lower-Triangular Factorization~\cite{DBLP:conf/icml/FichtenbergerHU23}.}
Fichtenberger et al.~give an explicit Toeplitz lower-triangular
factorization $M_{\mathsf{count}}=LR$ with $L=R$, defined by
\[
f(0)=1,
\qquad
f(k)=\left(\frac{2k-1}{2k}\right)f(k-1)
\quad(k\geq1),
\qquad
L[t,i]=R[t,i]=f(t-i)
\quad(t\geq i),
\]
and zero otherwise. Let $\gamma$ be the Euler--Mascheroni constant and
define
\[
\Psi(T)
:=
1-\frac{1-\gamma}{\pi}+\frac{\ln T}{\pi}+\frac{2}{T}.
\]

\begin{proposition}[Privacy Bounds for the Fichtenberger--Henzinger--Upadhyay Factorization]
\label{prop:fhu_zcdp_rdp}
For the factorization above,
\[
s(R)^2
=
\|R\|_{1\to2}^2
=
\|L\|_{2\to\infty}^2
\leq
\Psi(T).
\]
Consequently, $L(Rx+z)$ with
$z\sim\mcal{N}(0,\sigma^2I_T)$ is non-adaptively
$\rho_{\mathsf{FHU}}$-zCDP with
\[
\rho_{\mathsf{FHU}}
=
\frac{\Delta^2\|R\|_{1\to2}^2}{2\sigma^2}
\leq
\frac{\Delta^2\Psi(T)}{2\sigma^2},
\]
and, for every $\alpha>1$, is
$(\alpha,\varepsilon_{\alpha,\mathsf{FHU}})$-RDP with
\[
\varepsilon_{\alpha,\mathsf{FHU}}
=
\frac{\alpha\Delta^2\|R\|_{1\to2}^2}{2\sigma^2}
\leq
\frac{\alpha\Delta^2\Psi(T)}{2\sigma^2}.
\]
\end{proposition}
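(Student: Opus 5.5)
The plan is to compute $s(R)^2=\|R\|_{1\to2}^2=\max_i\|Re_i\|_2^2$ directly from the Toeplitz structure, and then obtain the privacy claims by invoking Lemma~\ref{lem:factor_privacy_constants}. Column $i$ of $R$ has entries $f(0),f(1),\dots,f(T-i)$ in rows $i,\dots,T$. Since every $f(k)^2>0$, the column norm squared $\sum_{k=0}^{T-i}f(k)^2$ is increasing in the column length, so the maximum is attained at $i=1$ and equals $\sum_{k=0}^{T-1}f(k)^2$. For the same Toeplitz reason, row $T$ of $L=R$ has entries $f(T-1),\dots,f(0)$. Hence $\|L\|_{2\to\infty}^2$, the maximal squared row norm, is the same sum, attained at the last row. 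This gives the first two equalities.

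Next I identify $f(k)=\binom{2k}{k}4^{-k}=\prod_{j=1}^k\frac{2j-1}{2j}$, which matches the recursion. I then bound $\sum_{k=0}^{T-1}f(k)^2\le\Psi(T)$. For the individual terms I will use the standard two-sided Wallis-type estimate
\[
\frac{1}{\pi(k+1/2)}\le f(k)^2\le\frac{1}{\pi(k+1/4)}\qquad(k\ge1),
\]
or a sharper version $f(k)^2\le \frac{1}{\pi k}\bigl(1-\frac{1}{4k}+O(k^{-2})\bigr)$. Summing, I keep the $k=0$ term $1$ separately, so that $\sum_{k=1}^{T-1}\frac{1}{\pi k}=\frac{1}{\pi}H_{T-1}$. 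Then I apply $H_{n}\le\ln n+\gamma+\frac{1}{2n}$. The target has the shape $1+\frac{\ln T+\gamma-1}{\pi}+\frac{2}{T}$, so the negative correction $-\frac{1}{\pi}$ must come from summing the second-order savings $\frac{1}{\pi k}-f(k)^2\approx\frac{1}{4\pi k^2}$. I also need to track the term $\ln(T-1)\le\ln T$ and let the slack $2/T$ absorb the boundary errors.

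Finally, with $s(R)^2\le\Psi(T)$ in hand, Lemma~\ref{lem:factor_privacy_constants} yields $\rho_{\mathsf{FHU}}=\Delta^2\|R\|_{1\to2}^2/(2\sigma^2)$ and $\varepsilon_\alpha=\alpha\rho_{\mathsf{FHU}}$ in the fixed-stream (non-adaptive) experiment. Monotonicity in $s(R)^2$ then gives the $\Psi(T)$ upper bounds. The main obstacle is making the constant $-(1-\gamma)/\pi$ rigorous. My first attempt will be a telescoping comparison: I show $f(k)^2\le\frac{1}{\pi}\bigl(\ln(k+\tfrac12)-\ln(k-\tfrac12)\bigr)-c_k$ with an explicit summable $c_k$, checking the first few $k$ numerically and the tail via the Wallis inequality. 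If that fails, I will check small $T$ by direct computation and use the asymptotic expansion for large $T$, since the $2/T$ slack is generous at small $T$.
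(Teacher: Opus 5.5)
Your structural steps match what the paper relies on. The paper gives no derivation of its own here: it attributes the factorization bound to Fichtenberger--Henzinger--Upadhyay and otherwise just invokes Lemma~\ref{lem:factor_privacy_constants}. Your identification is correct: the largest column norm is at column~$1$, the largest row norm is at row~$T$, and both equal $\sum_{k=0}^{T-1}f(k)^2$. The zCDP/RDP formulas in terms of $\|R\|_{1\to2}^2$ then follow from that lemma.

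The gap is the inequality $\sum_{k=0}^{T-1}f(k)^2\le\Psi(T)$, which you leave as a plan. It cannot be proved because it is false for every $T\ge 11$. Your own heuristic shows why. Relative to $1+\frac1\pi H_{T-1}$, you need a saving of about $\frac1\pi\approx0.318$ (minus the $2/T$ slack). But the total saving is only
\[
\sum_{k\ge1}\Bigl(\frac{1}{\pi k}-f(k)^2\Bigr)=1-\frac{4\ln 2}{\pi}\approx0.117,
\]
and to first order only $\frac{\zeta(2)}{4\pi}=\frac{\pi}{24}\approx0.131$. Equivalently, $\sum_{k=0}^{T-1}f(k)^2$ is the Landau constant $G_{T-1}=\frac{\ln T}{\pi}+\frac{\gamma+4\ln2}{\pi}+o(1)$, with $\frac{\gamma+4\ln2}{\pi}\approx1.066$. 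By contrast, $\Psi(T)-\frac{\ln T}{\pi}\to1-\frac{1-\gamma}{\pi}\approx0.865$, and the $2/T$ slack only covers $T\le10$.

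Concretely, $\sum_{k=0}^{10}f(k)^2\approx1.8224$, whereas $\Psi(11)\approx1.8105$. Even your lower Wallis estimate $f(k)^2\ge\frac{1}{\pi(k+1/2)}$, summed with the trapezoid rule for the convex map $x\mapsto\frac{1}{x+1/2}$, gives $s(R)^2\ge1+\frac1\pi\bigl(\ln\frac{2T-1}{3}+\frac13\bigr)$, which exceeds $\Psi(T)$ for all $T\ge20$. So neither your telescoping attempt nor your small-$T$/asymptotic fallback can succeed. The defect lies in the printed bound itself, not in your technique.

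Everything stated in terms of $\|R\|_{1\to2}^2$ survives: $\rho_{\mathsf{FHU}}=\frac{\Delta^2\|R\|_{1\to2}^2}{2\sigma^2}$ and $\varepsilon_{\alpha,\mathsf{FHU}}=\alpha\rho_{\mathsf{FHU}}$. Only the $\Psi(T)$ comparisons must be replaced by a correct bound. For instance, your upper Wallis estimate $f(k)^2\le\frac{1}{\pi(k+1/4)}$, combined with midpoint convexity $\frac{1}{k+1/4}\le\int_{k-1/2}^{k+1/2}\frac{dx}{x+1/4}$, gives $s(R)^2\le1+\frac1\pi\ln\frac{4T-1}{3}$. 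This holds for all $T\ge1$, has the correct leading term $\frac{\ln T}{\pi}$, and has constant $1+\frac1\pi\ln\frac43\approx1.092$.
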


\paragraph{Adaptive-Safety Audit.}
For each row $r_j^\top$ of $R$, let $\tau(j)$ be its public release time,
so that
\[
\supp(r_j)\subseteq[1..\tau(j)].
\]
Let
\[
J_t:=\{j:\tau(j)=t\},
\]
and let $A_t$ be the matrix whose rows are
$\{r_j^\top:j\in J_t\}$. Package all rows released at the same original
input time into the single vector-valued core release
\[
Z_t=A_tx+\eta_t,
\qquad
\eta_t\sim\mcal{N}(0,\sigma^2I_{|J_t|}).
\]
If $J_t=\emptyset$, take $Z_t$ to be a fixed data-independent dummy
release. The noise vectors $\eta_t$ are fresh and independent across
original input times. Thus there is exactly one core output per round, as
required by the interactive model.

To certify quantitative decomposability, represent the scalar update at
coordinate~$i$ by the vector $x_ie_i\in\R^T$ and aggregate these vectors
using addition. For vector-valued streams, use the analogous block vector
$e_i\otimes x_i$. 

In Definition~\ref{defn:quant_decomp}, take $I_t=[1..t]$. Because every
row of $A_t$ is supported on $[1..t]$,
\[
A_t\left(\sum_{i\in I_t}x_ie_i\right)=A_tx.
\]
For vector-valued streams, the same identity holds with $A_t\otimes I$.

Define the invariant difference map
\[
D(u,u'):=u'-u.
\]
It satisfies
\[
D(a+u+b,\;a+u'+b)=u'-u=D(u,u').
\]
For a discrepancy vector~$v$, define
\[
g_t(v)
:=
\Pow\left(
\eta_t
\;\middle\|\;
\eta_t+A_tv
\right).
\]
Translation invariance gives the exact identity
\[
\Pow\left(
A_tx+\eta_t
\;\middle\|\;
A_tx'+\eta_t
\right)
=
g_t(x'-x).
\]
Thus the Gaussian core is quantitatively decomposable under the
vector-indexed extension in
Remark~\ref{rem:vector_quant_decomp}. Since the rows released by time~$t$
depend only on $x[1..t]$, the core is causal, and its noise vectors are
fresh and independent.

It remains to connect the lifted power-function guarantee to the stated
RDP and zCDP bounds. Define
\[
\mu:=\frac{\Delta s(R)}{\sigma},
\]
and let $f_\mu$ denote the power function of the one-dimensional Gaussian
shift experiment
\[
\mcal{N}(0,1)
\quad\text{versus}\quad
\mcal{N}(\mu,1).
\]
For every fixed pair of neighboring streams,
\[
\frac{\|R(x-x')\|_2}{\sigma}
\leq\mu.
\]
Consequently, the non-adaptive Gaussian-core experiment has power function
at most~$f_\mu$. Theorem~\ref{thm:noise-ind-DP} therefore gives the same
$f_\mu$ upper bound for the transcript generated against every adaptive
adversary.

If $V_0,V_1$ denote the resulting adaptive transcript laws, Gaussian
experiment domination and data processing for R\'enyi divergence give,
for every $\alpha>1$,
\[
D_\alpha(V_0\|V_1)
\leq
\frac{\alpha\mu^2}{2}
=
\frac{\alpha\Delta^2s(R)^2}{2\sigma^2}.
\]
The same bound holds with the two branches interchanged. Hence the
adaptive core satisfies the same
\[
\rho=\frac{\Delta^2s(R)^2}{2\sigma^2}
\]
zCDP bound and the same
$(\alpha,\alpha\rho)$-RDP bounds as in the fixed-stream analysis. Since
$L$ is lower triangular, every published output is causal
post-processing of the core releases available at that time. The same
adaptive RDP/zCDP bounds, and their converted
$(\varepsilon,\delta)$-DP guarantees, therefore hold for the published
mechanism.

\paragraph{Takeaway.}
Both Gaussian mechanisms fit the template $L(Rx+z)$, but have different
column-sensitivity bounds: the binary tree has
$s(R)^2=\Theta(\log_2T)$, while the smooth Toeplitz factorization satisfies
\[
s(R)^2
\leq
1-\frac{1-\gamma}{\pi}+\frac{\ln T}{\pi}+\frac{2}{T}.
\]
The adaptive-safety audit shows that these same privacy bounds hold under
feedback-adaptive input selection.

\section{Other Related Work}
\label{sec:other_related}

As the most relevant related works have already been covered in the introduction,
we will elaborate further on some related aspects.

\noindent \textbf{Choices of Privacy Measure.}
The central idea in differential privacy~\cite{DBLP:conf/icalp/Dwork06} is that
if $V_0$ and $V_1$ are two distributions of outputs produced by a mechanism
from two neighboring inputs, then those two distributions should be close.
The classical $(\varepsilon, \delta)$-DP notion uses two parameters to quantify closeness, where a smaller value in each parameter means that the two distributions are closer.  However, since distributions are inherently complex
objects, some information on the two distributions will be inevitably lost
when they are compared using just two parameters.  When the same Gaussian noise
is used to mask two different vectors, researchers
have discovered that the \renyi divergence~\cite{Miro17}
can capture the variance of the Gaussian distribution perfectly, and
hence, can quantify the closeness of two such distributions (with the same
variance but different means).

In general, any useful way to quantify privacy guarantees must satisfy
the property that if the output satisfies certain privacy requirement,
then any further processing of the output cannot
violate that specific requirement.  This can be formally formulated by
requiring that the divergence --  used for measuring how different two distributions
are -- must satisfy the \emph{data processing inequality};
it is worth noting that some common distance notion such as the $\ell_2$-norm
do not satisfy this property.

Instead of just using a few parameters to capture the closeness
of two distributions, \emph{tradeoff functions}~\cite{dong2022gaussian}
have been proposed to define differential privacy, because a tradeoff
function can capture all the essential information
about how two distributions differ in the sense that any divergence
satisfying the data processing inequality can be recovered from
the tradeoff function.  Indeed, tradeoff functions offer a powerful tool
to describe the composition of private mechanisms.

However, one notational inconvenience is that a larger tradeoff (measured by pointwise comparison) means that the two distributions are closer,
which has the opposite interpretation from other divergence parameters
such as $\varepsilon$ and $\delta$.  In fact,
an \emph{ad hoc} concept of \emph{generalized probability distance}
has been defined in~\cite{DBLP:conf/stoc/Vadhan023} to reverse
the direction of the inequalities such that it will be consistent with the notion of distance.
On the other hand, a simpler way to achieve this notation consistency
is to replace a tradeoff function with its complement that
is known as a \emph{power function}, which naturally preserves all the equivalent
mathematical properties. As we shall see
in Definition~\ref{defn:power}, a power function also has an intuitive
description using the fractional knapsack problem.

As illustrated in~\cite{DBLP:conf/stoc/Vadhan023,DBLP:conf/innovations/ZhouZCS24}, if one uses such a powerful tool to define any new notion of differential privacy, then a single composition theorem (such as our Theorem~\ref{thm:DP-comp}) will be sufficient to recover any composition result from the classical notion to the new notion of privacy.  Therefore,
it would not be necessary to reconstruct individual advanced composition
theorems~\cite{DBLP:conf/focs/DworkRV10,DBLP:conf/icml/KairouzOV15}.

\noindent \textbf{Hamming vs Edit Neighboring Notions.}
In~\cite{DBLP:conf/nips/BirrellEBP24},
two neighboring notions are considered for \textbf{static}
databases:
\begin{compactitem}
\item \emph{Hamming-style.} Two databases have the same number of elements,
and they differ in at most one element.

\item \emph{Edit-style.}  One element from one database is deleted
to form the other database.
\end{compactitem}

They considered fixed-size mini-batches, which may be sampled in two ways: with or without replacement.  Since they considered sampling from static databases,
the difference between Hamming- vs edit-style neighboring static databases
would not have such a stark contrast as streams,
as  deleting the first element of a stream can cause it to change in every position.

\noindent \textbf{Other Recent Works on Concurrent Composition.}
A more general notion of concurrent composition is considered in~\cite{DBLP:conf/ccs/HaneySTVVX023}, where the privacy parameters of mechanisms can be adaptively chosen.
However, as in~\cite{DBLP:conf/stoc/Vadhan023}, each interactive mechanism is associated with a single static database on which its neighboring relation is defined; the adversary may interact adaptively with the mechanism, but neighboring inputs differ only in this underlying database, not in a dynamic stream of updates.

Concurrent composition for mechanisms with adaptively chosen privacy
parameters is also considered in~\cite{Henzinger2026Continual}, but for neighboring dynamic databases.
They also give a formulation based on adaptive DP~\cite{DBLP:conf/nips/DenisovMRST22}, expressed via a left-or-right style distinguishing game with a verification function, which is essentially the same as our \emph{paired simulation} in Definition~\ref{def:paired_sim}.

Note that both works consider composition where each mechanism has a single neighboring relation on its dataset (static or dynamic). In order to capture modular composition in which neighboring notions are defined separately on both the input and the output of a mechanism---and to reason about neighbor-preserving transformations that change the neighboring structure---we need a more refined notion of \emph{neighbor-preserving} paired simulation, given in Definition~\ref{def:npp}.

\end{document}